\documentclass[
notitlepage
,floatfix,
aps,
pra,
reprint,
onecolumn,
superscriptaddress,
,10pt
]{revtex4-2}
\usepackage[utf8]{inputenc}

\usepackage{amssymb,amsthm}

\usepackage[caption=false]{subfig}
\usepackage{graphicx}
\usepackage{here}
\usepackage{epstopdf}
\usepackage{array}
\usepackage{physics}
\usepackage{verbatim}
\usepackage{amsmath,amsfonts,amscd,mathtools}

\usepackage{tabularx}
\usepackage{stmaryrd}
\usepackage{enumerate}
\usepackage{wasysym}
\usepackage{braket}
\usepackage{mathrsfs}
\usepackage{microtype}
\usepackage{hyperref}
\usepackage[dvipsnames]{xcolor}
\hypersetup{
    bookmarksnumbered=true, 
    unicode=false, 
    pdfstartview={FitH}, 
    pdftitle={}, 
    pdfauthor={}, 
    pdfsubject={}, 
    pdfcreator={}, 
    pdfproducer={}, 
    pdfkeywords={}, 
    pdfnewwindow=true, 
    colorlinks=true, 
    linkcolor=NavyBlue, 
    citecolor=NavyBlue, 
    filecolor=NavyBlue, 
    urlcolor=NavyBlue 
}
\usepackage{tikz}
\usepackage[lmargin=.7in,rmargin=.7in,tmargin=.7in,bmargin=1in]{geometry}
\usepackage{newtxtext,newtxmath}

\usepackage{cleveref}

\usepackage{youngtab}

\usepackage{algorithm}
\usepackage{algorithmicx}
\usepackage{algpseudocode}



\usepackage{booktabs}

\theoremstyle{plain}
\newtheorem{thm}{Theorem}

\newtheorem{lem}[thm]{Lemma}
\newtheorem{pro}[thm]{Proposition}

\theoremstyle{definition}
\newtheorem{defn}[thm]{Definition}

\usepackage[most,breakable]{tcolorbox}
\tcbset{ sharp corners}
  {\expandafter\ifstrequal\expandafter{#1}{filled}{\begin{tcolorbox}[colback=MidnightBlue!70!black!70!TealBlue!2!white,colframe=MidnightBlue!70!black!70!TealBlue!30!white,breakable,enhanced,left=5.75pt,right=5.75pt,grow sidewards by=10pt]}{\begin{tcolorbox}[colback=white,colframe=gray!0,breakable,enhanced,left=5.75pt,right=5.75pt,grow sidewards by=10pt]}}%
  {\end{tcolorbox}}

\newenvironment{defnboxed}[1][white]
  {\expandafter\ifstrequal\expandafter{#1}{filled}{\begin{tcolorbox}[colback=MidnightBlue!70!black!70!TealBlue!2!white,colframe=MidnightBlue!70!black!70!TealBlue!30!white,breakable,enhanced,left=5.75pt,right=5.75pt,grow sidewards by=10pt]}{\begin{tcolorbox}[colback=MidnightBlue!10,colframe=MidnightBlue!0,breakable,enhanced,left=5.75pt,right=5.75pt,grow sidewards by=10pt]}}%
  {\end{tcolorbox}}

\newenvironment{proboxed}[1][white]
  {\expandafter\ifstrequal\expandafter{#1}{filled}{\begin{tcolorbox}[colback=MidnightBlue!70!black!70!TealBlue!2!white,colframe=MidnightBlue!70!black!70!TealBlue!30!white,breakable,enhanced,left=5.75pt,right=5.75pt,grow sidewards by=10pt]}{\begin{tcolorbox}[colback=OliveGreen!10,colframe=OliveGreen!0,breakable,enhanced,left=5.75pt,right=5.75pt,grow sidewards by=10pt]}}%
  {\end{tcolorbox}}

  \newenvironment{thmboxed}[1][white]
  {\expandafter\ifstrequal\expandafter{#1}{filled}{\begin{tcolorbox}[colback=MidnightBlue!70!black!70!TealBlue!2!white,colframe=MidnightBlue!70!black!70!TealBlue!30!white,breakable,enhanced,left=5.75pt,right=5.75pt,grow sidewards by=10pt]}{\begin{tcolorbox}[colback=Maroon!10,colframe=Maroon!0,breakable,enhanced,left=5.75pt,right=5.75pt,grow sidewards by=10pt]}}%
  {\end{tcolorbox}}
  
   \newenvironment{lemboxed}[1][white]
  {\expandafter\ifstrequal\expandafter{#1}{filled}{\begin{tcolorbox}[colback=MidnightBlue!70!black!70!TealBlue!2!white,colframe=MidnightBlue!70!black!70!TealBlue!30!white,breakable,enhanced,left=5.75pt,right=5.75pt,grow sidewards by=10pt]}{\begin{tcolorbox}[colback=Gray!20,colframe=Maroon!0,breakable,enhanced,left=5.75pt,right=5.75pt,grow sidewards by=10pt]}}%
  {\end{tcolorbox}}

  {\expandafter\ifstrequal\expandafter{#1}{filled}{\begin{tcolorbox}[colback=MidnightBlue!70!black!70!TealBlue!2!white,colframe=MidnightBlue!70!black!70!TealBlue!30!white,breakable,enhanced,left=5.75pt,right=5.75pt,grow sidewards by=10pt]}{\begin{tcolorbox}[colback=Yellow!20,colframe=Maroon!0,breakable,enhanced,left=5.75pt,right=5.75pt,grow sidewards by=10pt]}}%
  {\end{tcolorbox}}

\newcommand{\eq}[1]{(\hyperref[eq:#1]{\ref*{eq:#1}})}

\renewcommand{\sec}[1]{\hyperref[sec:#1]{Section~\ref*{sec:#1}}}
\newcommand{\thrm}[1]{\hyperref[thrm:#1]{Theorem~\ref*{thrm:#1}}}
\newcommand{\lemm}[1]{\hyperref[lemm:#1]{Lemma~\ref*{lemm:#1}}}
\newcommand{\prop}[1]{\hyperref[prop:#1]{Proposition~\ref*{prop:#1}}}
\newcommand{\corr}[1]{\hyperref[corr:#1]{Corollary~\ref*{corr:#1}}}
\newcommand{\fig}[1]{\hyperref[fig:#1]{~\ref*{fig:#1}}}
\newcommand{\deff}[1]{\hyperref[deff:#1]{~\ref*{deff:#1}}}

\newcommand{\mE}{\mathcal{E}}
\newcommand{\mN}{\mathcal{N}}
\newcommand{\mU}{\mathcal{U}}

\newcommand{\mD}{\mathcal{D}}
\newcommand{\mI}{\mathcal{I}}

\newcommand{\mH}{\mathcal{H}}
\newcommand{\mM}{\mathcal{M}}

\newcommand{\mB}{\mathcal{B}}
\newcommand{\mK}{\mathcal{K}}

\newcommand{\mR}{\mathcal{R}}
\newcommand{\mS}{\mathcal{S}}
\newcommand{\mX}{\mathcal{X}}

\newcommand{\mbE}{\mathbb{E}}

\newcommand{\mbN}{\mathbb{N}}

\newcommand{\mfS}{\mathfrak{S}}

\newcommand{\aX}{{\abs{\mX}}}

\newcommand{\sD}{\widetilde{D}}

\newcommand{\pD}{\overline{D}}

\newcommand{\tX}{\widetilde{X}}
\newcommand{\tA}{{\widetilde{A}}}
\newcommand{\tB}{{\widetilde{B}}}
\newcommand{\tC}{{\widetilde{C}}}

\newcommand{\tE}{{\widetilde{E}}}
\newcommand{\tF}{{\widetilde{F}}}
\newcommand{\tS}{{\widetilde{S}}}

\newcommand{\sI}{\widetilde{I}}
\newcommand{\pI}{\overline{I}}

\newcommand{\mW}{\mathcal{W}}

\newcommand{\ve}{\varepsilon}

\DeclareMathOperator{\supp}{supp}

\newcommand{\ba}{\begin{eqnarray}}
\newcommand{\ea}{\end{eqnarray}}
\newcommand{\bann}{\begin{eqnarray*}}
\newcommand{\eann}{\end{eqnarray*}}
\newcommand{\bal}{\begin{equation}\begin{aligned}}
\newcommand{\eal}{\end{aligned}\end{equation}}

\newcolumntype{L}[1]{>{\raggedright}p{#1}}
\newcolumntype{C}[1]{>{\centering}p{#1}}
\newcolumntype{R}[1]{>{\raggedleft}p{#1}}
\newcolumntype{D}{>{\centering\arraybackslash}X}

\begin{document}

\title {All you need is the universal correlation detector: \\
A unified approach to universalize communication protocols over quantum channels
}

\author{Kaito Watanabe}
\email{watanabe715@g.ecc.u-tokyo.ac.jp}
\affiliation{Department of Basic Science, The University of Tokyo, 3-8-1 Komaba, Meguro-ku, Tokyo 153-8902, Japan}
\affiliation{RIKEN Center for Quantum Computing (RQC), Hirosawa 2-1, Wako, Saitama 351-0198, Japan}

\author{Takaya Matsuura}
\email{takayamatsuura@gmail.com}
\affiliation{RIKEN Center for Quantum Computing (RQC), Hirosawa 2-1, Wako, Saitama 351-0198, Japan}

\author{Ryuji Takagi}
\email{ryujitakagi@g.ecc.u-tokyo.ac.jp}
\affiliation{Department of Basic Science, The University of Tokyo, 3-8-1 Komaba, Meguro-ku, Tokyo 153-8902, Japan}


\begin{abstract}
Constructing optimal quantum information protocols without complete knowledge of the underlying states or channels is a central challenge. We address this challenge by developing universal correlation detection as a common building block for quantum communication. Our detectors distinguish a bipartite state from the product of its marginals and attain the same first-order asymptotic performance as optimal tests constructed with complete state information. For general bipartite quantum states, knowledge of a single marginal suffices: a detector depending only on that marginal is first-order optimal for every compatible state. For classical–quantum states, the detector is fully universal and requires no prior state information. By combining these detectors with position-based decoding and convex splitting, we construct channel-independent coding schemes that achieve capacity for a range of communication tasks. These results turn universal correlation detection into a systematic tool for universal protocol design, providing a unified route to designing capacity-achieving protocols for many communication tasks over unknown quantum channels.
\end{abstract}
\maketitle
\tableofcontents

\section{Introduction}

Correlation is one of the most fundamental objects of study in both classical and quantum information theory. In particular, correlations between multiple quantum states, such as quantum entanglement~\cite{Horodecki_2009_quantum_entanglement} and quantum discord~\cite{L_Henderson_2001, Harold_quantum_discord}, are central to quantum mechanics and play an important role not only to understand the fundamental difference in the classical and quantum information theory, but also in applications in quantum Shannon theory~\cite{Wilde_2016_book, khatri_2024_book} and quantum resource theories~\cite{Chitamber_gour, Gour_2025_book}.

In this context, correlation detection, the task of testing whether a given bipartite state is correlated, is of fundamental importance.
In particular, it has been extensively studied within quantum hypothesis testing, the task of discriminating between two states~\cite{Hayashi_2016_correlation_detection, girardi2025umlaut_information, girardi2025quantum_umlaut, Dasgupta_2025_universal_tester}. In correlation detection, these states are the correlated bipartite quantum state $\rho_{AB}$ and the corresponding uncorrelated state $\rho_A\otimes \rho_B$.
With complete information about $\rho_{AB}$, one can use previous results on quantum hypothesis testing to characterize the optimal performance of correlation detection.

However, the assumption of complete information about the state whose correlations are to be detected is quite strong and limits applicability in practical settings, as the quantum state may not be prepared precisely as intended and might experience unknown noise. 
This motivates us to study state-agnostic correlation detection, where one is given an unknown bipartite quantum state and must test whether the state is correlated. At best, we can hope to achieve optimal performance without complete information, but it is unclear at first sight whether this is possible.

In this work, we show that there indeed exists
a correlation-detection test achieving the  optimal correlation detection in terms of the first order for any bipartite state $\rho_{AB}$ only with knowledge of the marginal state $\rho_B=\Tr_A\rho_{AB}$. This means that correlation detection can be performed optimally using only knowledge of the marginal state. Furthermore, we show that correlation detection for a classical-quantum state is universally achievable, meaning that first-order optimal correlation detection is achievable without any knowledge about the state. In the latter case, the test was proposed previously in Ref.~\cite{Dasgupta_2025_universal_tester} for a different purpose.

Furthermore, we apply the universal correlation detector more broadly to establish universal protocols for various information-theoretic tasks.
For instance, the goal of classical-quantum channel coding is to send messages reliably by applying preprocessing---called encoding---and the measurement on the output quantum system---called decoding---to a given noisy channel.
Our goal is to design the coding scheme so we can send messages as efficiently as possible. Designing the coding scheme to achieve the channel capacity---the maximum number of bits one can send through the noisy channel---is a central task in quantum communication theory~\cite{Wilde_2016_book, khatri_2021_second_order, Holevo_1996_capacity, SW_theorem, Bennett_2002_Entanglement_assisted, Holevo_2002_entanglement_assisted, Dupuis_2009_capacity_side_information, Devetak_2004_private_classical_capacity, Lloyd_1997_capacity}.

In the standard channel-coding setting, we often assume complete knowledge about the noisy channel. However, in practice, this knowledge may be unavailable due to unknown noise, which motivates us to study universal channel coding: a coding scheme that achieves the optimal communication rate without knowledge of the channel.

We show that the universal correlation detector provides a unified approach to constructing universal coding schemes for various communication tasks. In fact, the position-based decoding strategy, a known proof strategy for deriving capacity formulas for many communication tasks~\cite{Anshu_2019_building, Anshu_2019_on_the_near, Wilde_2017_position_based, khatri_2021_second_order}, combined with the universal correlation detector, yields a universal decoder that achieves optimal performance.
Specifically, this idea yields a simple construction of universal coding schemes for the following communication problems:
\begin{enumerate}
    \item Classical-quantum channel coding assisted by shared randomness with an unknown distribution
    \item Classical-quantum wiretap channel coding assisted by shared randomness with an unknown distribution
    \item One-way secret key distillation from classical-quantum-quantum states
    \item Entanglement-assisted classical communication
    \item Entanglement-assisted Gel'fand-Pinsker channel coding 
    \item Entanglement-assisted Marton's bound for a quantum broadcast channel with $L\geq 2$ receivers.
\end{enumerate}

To the best of our knowledge, universal coding schemes for the latter two settings had not been established previously, but arise naturally within our framework.
Our findings reveal that universal correlation detection offers a simple and unified way to construct universal channel coding schemes for various kinds of communication tasks, indicating the possibility of extending the technique not only to quantum communication tasks but also to a broad range of quantum information-theoretic tasks in quantum learning theory and quantum resource theories.

\section{Preliminaries}
Throughout this paper, we consider finite-dimensional Hilbert spaces.
We denote the set of density matrices on a Hilbert space $\mH$ by $\mD(\mH)$. We sometimes abbreviate Hilbert-space notation such as $\mH_A,\mH_B$ as $A, B$.
Furthermore, we write $[d]\coloneqq\qty{1,\ldots, d}$ for any $d\in\mbN$. $T(\rho,\sigma)\coloneq\frac{1}{2}\|\rho-\sigma\|_1$ is the trace distance, and $P(\rho,\sigma)\coloneqq\sqrt{1-F(\rho,\sigma)}=\sqrt{1-\|\sqrt{\rho}\sqrt{\sigma}\|^2_1}$ is the purified distance.

\subsection{Information-theoretic quantities}
We first define several relevant information-theoretic quantities.
For a density matrix $\rho\in\mD(\mH)$, the von Neumann entropy $S(\rho)$ is defined as
\bal
S(\rho)\coloneqq-\Tr[\rho\log \rho].
\eal
The Umegaki relative entropy of $\rho\in\mD(\mH)$ with respect to $\sigma\in\mD(\mH)$ is defined as~\cite{Umegaki_relative}
\begin{equation}
    D(\rho\|\sigma)\coloneqq
    \left\{\,
\begin{aligned}
    &\Tr[\rho\log \rho-\rho\log \sigma]~~~(\supp\rho\subset\supp\sigma)\\
    &+\infty~~~(\mbox{otherwise})
\end{aligned}
\right.
\end{equation}
The Petz Rényi relative entropy of $\rho\in\mD(\mH)$ with respect to $\sigma\in\mD(\mH)$ of order $\alpha\in(0,1)\cup(1,\infty]$ is defined as~\cite{Petz_1986_quasi_entropy}
\begin{equation}
    \pD_\alpha(\rho\|\sigma)\coloneqq\left\{\,
    \begin{aligned}
        &\frac{1}{\alpha-1}\log\Tr\qty[\rho^\alpha\sigma^{1-\alpha}]~~~(\supp\rho\subset\supp\sigma\mbox{ or }\supp\rho\not\perp\supp\sigma, ~0< \alpha<1)\\
        &+\infty~~~(\mbox{otherwise})
    \end{aligned}
    \right.
\end{equation}
The Petz Rényi relative entropy coincides with the Umegaki relative entropy in the limit $\alpha\to 1$.

Another quantum extension of Rényi relative entropy is the sandwiched Rényi relative entropy of $\rho\in\mD(\mH)$ with respect to $\sigma\in\mD(\mH)$ of order $\alpha\in(0,1)\cup(1,\infty]$, defined as~\cite{Muller_Lennert_2013_on_quantum_renyi, Wilde_2014_strong_converse}
\begin{equation}
    \sD_\alpha(\rho\|\sigma)\coloneqq\left\{\,
    \begin{aligned}
        &\frac{1}{\alpha-1}\log\Tr\qty[\qty(\sigma^{\frac{1-\alpha}{2\alpha}}\rho\sigma^{\frac{1-\alpha}{2\alpha}})^\alpha]~~~(\supp\rho\subset\supp\sigma\mbox{ or }\supp\rho\not\perp\supp\sigma, ~0<\alpha<1)\\
        &+\infty~~~(\mbox{otherwise})
    \end{aligned}
    \right.
\end{equation}
The sandwiched Rényi relative entropy also coincides with the Umegaki relative entropy in the limit $\alpha\to 1$.

Now, we define the mutual information for a bipartite state.
For a bipartite state $\rho_{AB}\in\mD(\mH_A\otimes \mH_B)$, the mutual information $I(A:B)_\rho$ of $\rho_{AB}$ is defined as 
\bal
I(A:B)_\rho\coloneqq D(\rho_{AB}\|\rho_A\otimes \rho_B)=S(A)_\rho+S(B)_\rho-S(AB)_\rho.
\eal
Mutual information can also be generalized to a quantity called total correlation, defined as follows.
For a given $L$-partite state $\rho_{A_1\cdots A_L}$, the total correlation $I[S]$ of a subset $S\subset [L]$ of the parties is defined as~\cite{Modi_2010_unified, Avis_2008_distributed_compression}
    \bal\label{eq: def of total correlation}
    I[S]_\rho\coloneqq D\qty(\Tr_{\backslash A_S}\rho_{A_1\cdots A_L}\middle\| \bigotimes_{l\in S}\Tr_{\backslash A_l}\rho_{A_1\cdots A_L})=\sum_{l\in S}S(A_l)_\rho-S(A_S)_\rho.
    \eal
    Here, we use $A_S$ to denote the composite system of $A_l$ with $l\in S$.

The Petz Rényi mutual information of order $\alpha\in(0,1)\cup(1,\infty)$ is defined as
\bal
\pI^\downarrow_\alpha(A:B)_\rho\coloneqq\inf_{\sigma_B\in\mD(\mH_B)}\pD_\alpha(\rho_{AB}\|\rho_A\otimes\sigma_B)
\eal
The Petz Rényi mutual information admits a concise closed form given by the quantum Sibson identity~\cite{Sharma_2013_fundamental_bound, Hayashi_2016_correlation_detection}
\bal\label{eq: fully quanutm sibsion}
\pI_\alpha^{\downarrow}(A:B)_\rho=\frac{\alpha}{\alpha-1}\log \Tr[\Tr_A[\rho_{AB}^\alpha\rho_A^{1-\alpha}]^{\frac{1}{\alpha}}]
\eal
In particular, for a classical-quantum state $\rho_{XB}$, the quantum Sibson identity reduces to
\bal\label{eq: cq sibson}
\pI_\alpha^\downarrow(X:B)_\rho=\frac{\alpha}{\alpha-1}\log\Tr[\qty(\sum_{x\in\mX}p_X(x)\rho_x^\alpha)^{1/\alpha}],
\eal
The sandwiched Rényi mutual information of order $\alpha\in(0,1)\cup(1,\infty)$ is defined as 
\bal
\sI^\downarrow_\alpha(A:B)_\rho\coloneqq\inf_{\sigma_B\in\mD(\mH_B)}\sD_\alpha(\rho_{AB}\|\rho_A\otimes\sigma_B).
\eal
The sandwiched Rényi total correlation of order $\alpha\in(0,1)\cup(1,\infty)$ for the subsystems labeled by $S\subset[L]$ is defined as~\cite{Cheng_2023_quantum_broadcast}
\bal
\sI^\downarrow_\alpha[S]_\rho\coloneqq\sD_\alpha\qty(\Tr_{\backslash A_S}\rho_{A_1\cdots A_L}\middle\| \bigotimes_{l\in S}\Tr_{\backslash A_l}\rho_{A_1\cdots A_L})
\eal
Note that all the Rényi information quantities defined above reduce to the standard mutual information or total correlation in the limit $\alpha\to 1$:
\bal
\lim_{\alpha\to 1}\pI^\downarrow_\alpha(A:B)_\rho=\lim_{\alpha\to 1}\sI^\downarrow_\alpha(A:B)_\rho=I(A:B)_\rho, ~~\lim_{\alpha\to 1}\sI^\downarrow_\alpha[S]_\rho=I[S]_\rho.
\eal

\subsection{Universal symmetric states}
In the following, we review the notion of a universal symmetric state, a key ingredient in constructing a measurement for universal correlation detection. The universal state was originally introduced in Ref.~\cite{Hayashi_2009_universal_coding} to construct a universal decoder for classical-quantum channel coding and has since been widely used in quantum information theory~\cite{Hayashi_2016_correlation_detection, Matsuura_2025_universal_resolvability, Fang_2026_error_exponent}.

Consider a Hilbert space $\mH^{\otimes m}$.
Due to Schur-Weyl duality, we can decompose this as 
\bal
\mH^{\otimes m}=\bigoplus_{\lambda\in Y^m_d}\mW_\lambda\otimes \mU_\lambda,
\eal
where $\mW_\lambda$ is an irreducible representation (irrep.)\ of the general linear group, and $\mU_\lambda$ is an irrep.\ of the symmetric group. $Y^m_d$ denotes the set of all Young diagrams with $m$ boxes and depth at most $d$. 
Let $\Pi_\lambda$ denote the projector onto the subspace $\mW_\lambda\otimes \mU_\lambda$. 
We define two quantum states $\sigma^\lambda,\sigma^{U,m}$ as 
\bal
\sigma^\lambda&\coloneqq\frac{\Pi_\lambda}{\dim(\mW_\lambda\otimes \mU_\lambda)}\\
\sigma^{U,m}&\coloneqq\frac{1}{\abs{Y^m_d}}\sum_{\lambda\in Y^m_d}\sigma^\lambda.
\eal
The second state is called a \emph{universal symmetric state}.

Since any permutation-invariant state $\rho_m$ on $\mH^{\otimes m}$ can be decomposed as 
\bal
\rho_m=\bigoplus_{\lambda\in Y^m_d}\rho_\lambda\otimes \frac{I_{\mU_\lambda}}{\dim \mU_\lambda},
\eal
it follows that 
\bal
\Pi_\lambda\rho_m\Pi_\lambda=\rho_\lambda\otimes \frac{I_{\mU_\lambda}}{\dim \mU_\lambda}\leq \frac{\Pi_\lambda}{\dim \mU_\lambda}\leq (\dim\mW_\lambda) \sigma^\lambda.
\eal
It follows that, for any permutation-invariant state $\rho_m$,
\bal
\rho_m&\leq \sum_{\lambda\in Y^m_d}(\dim\mW_\lambda) \sigma^\lambda\\
&\leq \max_{\lambda\in Y^m_d}(\dim\mW_\lambda) \abs{Y^m_d}\sigma^{U,m}
\eal
Due to Weyl's character formula, we have
\bal\label{eq: universal for single alphabet}
\dim \mW_\lambda\leq (m+1)^{d(d-1)/2}
\eal
for any $\lambda\in Y^m_d$. Moreover, due to the type-counting argument~\cite{cover_1999_elements}, we have $\abs{Y^m_d}\leq (m+1)^{d-1}$.
From this, we have $\rho_m\leq (m+1)^{(d+2)(d-1)/2}\sigma^{U,m}$.

For a classical-quantum state $\rho_{XB}$, the bound takes the following form.
For a string $x^n\in \mX^n$ of length $n$, the type $P$ of $x^n$ is defined as the distribution
\bal
P(a|x^n)=\frac{1}{n}\#\qty{i\in[n]~|~x^n(i)=a}.
\eal
The set $T^n_P\subset \mX^n$ is defined as the set of strings with type $P$.

Now, suppose that we have a quantum state $\rho_{x^n}\coloneqq\rho_{x^n(1)}\otimes \cdots\otimes \rho_{x^n(n)}$ corresponding to the sequence $x^n\in T^n_P$ with type $P$.
We first consider a string
\bal
x^n_P\coloneqq1^{m_1}2^{m_2}\cdots \abs{\mX}^{m_\abs{\mX}},
\eal
where $m_x\coloneqq n P(x)$ for any $x\in\mX$. Let us see how $\rho_{x^n_P}$ can be bounded from above.
Due to Eq.~\eqref{eq: universal for single alphabet}, we have
\bal
\rho_{B^n}^{x^n_P}=\rho_1^{\otimes m_1}\otimes \cdots\rho_{\aX}^{\otimes m_{\aX}}&\leq \bigotimes_{x=1}^{\aX}(m_x+1)^{(d+2)(d-1)/2}\sigma_{U, m_1}\otimes \cdots\otimes \sigma_{U, m_\aX}\\
&\leq (n+1)^{\aX(d+2)(d-1)/2}\sigma_{U, m_1}\otimes \cdots\otimes \sigma_{U, m_\aX}.
\eal
Using this idea, we can obtain a bound on $\rho_{B^n}^{x^n}$ with $x^n\in T^n_P$.
Take a permutation $\pi\in\mfS_n$ such that $\pi(x^n_P)=x^n$. Denoting by $V_\pi$ the unitary representing the permutation $\pi\in\mfS_n$, we have
\bal
\rho_{B^n}^{x^n}=V_{\pi}\rho_{B^n}^{x^n_P}V^\dagger_\pi\leq (n+1)^{\aX(d+2)(d-1)/2}V_\pi \qty(\sigma^{U, m_1}\otimes \cdots\otimes \sigma^{U, m_\aX})V_\pi^\dagger.
\eal
Defining $\sigma_{B^n}^{x^n}$ as $\sigma_{B^n}^{x^n}=V_\pi \qty(\sigma_{B^{m_1}}^{U, m_1}\otimes \cdots\otimes \sigma_{B^{m_\aX}}^{U, m_\aX})V_\pi^\dagger$, we have
\bal
\rho_{B^n}^{x^n}\leq {\rm poly }(n)\sigma_{B^n}^{x^n}.
\label{eq:universal state bound}
\eal

Furthermore, we define another quantum state
\bal
\sigma_{X^nB^n}^{U,p}=\sum_{x^n}p^n(x^n)\ketbra{x^n}{x^n}_{X^n}\otimes \sigma_{B^n}^{x^n}
\eal
Since $\sigma_{B^n}^{U,p}$ is permutation symmetric and commutes with $U^{\otimes n}$ for any $U\in\mU(\mH)$, we can decompose $\sigma_{B^n}^{U,p}$ as 
\bal
\sigma_{B^n}^{U,P}=\bigoplus_{\lambda\in Y^n_d}c_\lambda \Pi_{\lambda}.
\eal
In particular, note that $[\sigma_{B^n}^{U,p},\sigma_{B^n}^{x^n}]=0$ holds for any $x_n\in \mX^n$.

\subsection{Convex splitting}
In addition to universal correlation detection, discussed below, we use convex splitting~\cite{Anshu2017quantum,Anshu_2019_building, Anshu_2019_on_the_near, Cheng_2023_quantum_broadcast, Cheng_2023_tight_convex_splitting}, a powerful technique for decoupling multipartite states by mixing quantum states.
We first review the simplest case, unipartite convex splitting, and then review the multipartite case.

\begin{lemboxed}
    \begin{lem}[Unipartite convex splitting~{\cite{Cheng_2023_tight_convex_splitting}}]\label{lem: unipartite convex splitting}
        Let $\rho_{AB}$ be a quantum state, and let $\tau_{A_1\ldots A_MB}$ be
        \bal
        \tau_{A_1\ldots A_MB}\coloneqq\frac{1}{M}\sum_{m=1}^M\rho_{A_1}\otimes \cdots\otimes \rho_{A_{m-1}}\otimes \rho_{A_mB}\otimes \rho_{A_{m+1}}\otimes \rho_{A_M}=\frac{1}{M}\sum_{m=1}^M \rho_{A_mB}\otimes \bigotimes_{m'\in[M]~|~m'\neq m}\rho_{A_{m'}}.
        \eal
        Then, it holds that 
        \bal
        \frac{1}{2}\left\|\tau_{A_1\ldots A_MB}-\rho_A^{\otimes M}\otimes \rho_B \right\|_1\leq 2^{-\sup_{1\leq \alpha \leq 2}\frac{\alpha-1}{\alpha}(\log M-\sI^\downarrow_\alpha(A:B)_\rho)}.
        \eal
    \end{lem}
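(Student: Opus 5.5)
Since $\sI^\downarrow_\alpha(A:B)_\rho$ is an infimum over $\sigma_B$, the plan is to prove a stronger statement with $\rho_B$ replaced by an arbitrary $\sigma_B$, and bound the trace distance through a sandwiched Rényi divergence of order $\alpha\in(1,2]$. Concretely, I would first use the triangle inequality and Pinsker/Fuchs--van de Graaf type bounds: $\frac12\|\tau-\rho_A^{\otimes M}\otimes\rho_B\|_1\le \sqrt{1-F}$, with the aim of controlling $\sD_\alpha(\tau_{A_1\ldots A_MB}\|\rho_A^{\otimes M}\otimes\sigma_B)$. The bound should be stated for general $\sigma_B$, and the infimum over $\sigma_B$ taken at the end. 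However, a naive square-root loss would give exponent $\frac{\alpha-1}{2\alpha}$ rather than $\frac{\alpha-1}{\alpha}$. So instead of Pinsker I would aim for the tighter route of Cheng et al.: write $\frac12\|\tau-\omega\|_1=\Tr(\tau-\omega)_+$ with $\omega=\rho_A^{\otimes M}\otimes\sigma_B$. Then I would use the standard inequality $\Tr(X)_+\le \Tr[\omega^{-s/2}\ldots]$-style bounds, i.e.\ $\Tr(\tau-\omega)_+\le \Tr[\tau^{\alpha}\omega^{1-\alpha}]$-type estimates with $s=\alpha-1$. These convert the trace-distance problem into estimating a sandwiched quasi-entropy $Q_\alpha(\tau\|\omega)=\Tr[(\omega^{-\frac{\alpha-1}{2\alpha}}\tau\,\omega^{-\frac{\alpha-1}{2\alpha}})^\alpha]$ centered at $\omega$.

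The core step is an expansion of $\tau-\omega$ as an average of $M$ terms $X_m:=(\rho_{A_mB}-\rho_{A_m}\otimes\sigma_B)\otimes\bigotimes_{m'\ne m}\rho_{A_{m'}}$ plus the correction $\rho_A^{\otimes M}\otimes(\rho_B-\sigma_B)$. Choosing $\sigma_B=\rho_B$ first kills the correction, and the general case follows via the known variational/data-processing argument. After conjugating by $\omega^{-\frac{\alpha-1}{2\alpha}}$, the terms $Y_m=\omega^{-\frac{\alpha-1}{2\alpha}}X_m\omega^{-\frac{\alpha-1}{2\alpha}}$ are ``centered'' and pairwise orthogonal in the sense that partial traces over $A_m$ annihilate them. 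This makes them behave like independent mean-zero summands. I would then apply a noncommutative $L_\alpha$ martingale (uniform smoothness) inequality for $1<\alpha\le 2$: $\|\sum_m Y_m\|_\alpha^\alpha\lesssim\sum_m\|Y_m\|_\alpha^\alpha$ in the weighted $L_\alpha(\omega)$ norm. Each $\|Y_m\|_\alpha^\alpha$ is, up to constants, $2^{(\alpha-1)\sD_\alpha(\rho_{AB}\|\rho_A\otimes\sigma_B)}$ by tensorization of the weighted norm. With the prefactor $M^{-\alpha}$ and $M$ summands, this gives $M^{1-\alpha}2^{(\alpha-1)\sD_\alpha}$. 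Taking the $1/\alpha$ power yields $2^{-\frac{\alpha-1}{\alpha}(\log M-\sD_\alpha)}$. Optimizing over $\sigma_B$ and $\alpha\in[1,2]$ then gives the claim.

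The main obstacle is the uniform smoothness / martingale inequality for Kosaki--Araki--Masuda weighted $L_\alpha$ norms with the \emph{sharp constant 1}, which is needed to get exactly the stated bound without a multiplicative constant. I would try first to establish it via Hanner/Ball--Carlen--Lieb 2-uniform smoothness, iterated along the filtration given by successively tracing out $A_1,\ldots,A_M$. The conditional expectations there are trace-preserving and $\omega$-compatible, since $\omega$ is a product state. If the sharp constant fails, I would fall back on the earlier-stated result as a black box, namely the tight convex-splitting lemma of Ref.~\cite{Cheng_2023_tight_convex_splitting}, of which this statement is a restatement.
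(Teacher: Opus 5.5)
The paper gives no proof of this lemma; it is imported from Cheng--Gao~\cite{Cheng_2023_tight_convex_splitting}. Your fallback of invoking that result as a black box is therefore exactly the paper's treatment. As a self-contained argument, however, your sketch has two genuine gaps.

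First, $\sigma_B$ sits in the wrong slot. The left-hand side is the distance to $\rho_A^{\otimes M}\otimes\rho_B$, not to $\omega=\rho_A^{\otimes M}\otimes\sigma_B$. Tracing out $A_1\cdots A_M$ gives $\|\tau-\omega\|_1\ge\|\rho_B-\sigma_B\|_1$, which does not decay with $M$. Likewise, your $X_m=(\rho_{A_mB}-\rho_{A_m}\otimes\sigma_B)\otimes\cdots$ are not annihilated by $\Tr_{A_m}$ unless $\sigma_B=\rho_B$.

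Retreating to $\sigma_B=\rho_B$ and then appealing to a ``variational/data-processing argument'' cannot work. Since $\sI^\downarrow_\alpha(A:B)_\rho\le\sD_\alpha(\rho_{AB}\|\rho_A\otimes\rho_B)$, typically strictly, the statement with the infimum is \emph{stronger} than the $\sigma_B=\rho_B$ case and does not follow from it.

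What works is to keep the centering at $\rho_B$: write $\tau-\rho_A^{\otimes M}\otimes\rho_B=\frac1M\sum_mX_m$ with $X_m=(\rho_{A_mB}-\rho_{A_m}\otimes\rho_B)\otimes\rho_A^{\otimes(M-1)}$. Then let $\sigma_B$ enter only through H\"older's inequality $\|X\|_1\le\|\omega^{-\frac{\alpha-1}{2\alpha}}X\omega^{-\frac{\alpha-1}{2\alpha}}\|_\alpha$, which holds for any state $\omega$ with $\supp X\subseteq\supp\omega$. The maps $Z\mapsto\rho_{A_m}\otimes\Tr_{A_m}Z$ annihilate these $X_m$ and fix $\omega$ for every $\sigma_B$, so the martingale structure survives. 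Separately, your preliminary idea of bounding $\Tr(\tau-\omega)_+$ by $\Tr[\tau^\alpha\omega^{1-\alpha}]$ is vacuous for $\alpha>1$, since that quantity equals $2^{(\alpha-1)\pD_\alpha(\tau\|\omega)}\ge1$.

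Second, the step you flag as the main obstacle fails in the form you need. Iterating along the filtration with constant $1$ requires $\|x+y\|_\alpha^\alpha\le\|x\|_\alpha^\alpha+\|y\|_\alpha^\alpha$ whenever $y$ is centered under a conditional expectation that fixes $x$. This is false for $1<\alpha<2$ already for commuting variables. Take $x=1$ and let $y$ equal $-2$ with probability $\frac{b}{2+b}$ and $b$ with probability $\frac{2}{2+b}$. Then $\mathbb{E}|x+y|^\alpha-|x|^\alpha-\mathbb{E}|y|^\alpha=\frac{2(1+b)^\alpha-2-2^\alpha b-2b^\alpha}{2+b}=\frac{(2\alpha-2^\alpha)b+o(b)}{2+b}$, which is positive for small $b$ because $2^\alpha<2\alpha$ on $(1,2)$. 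Setting $d_1=\pm1$ symmetric and $d_2=\pm y$ accordingly gives the same failure for a two-step martingale.

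Consequently, Ball--Carlen--Lieb-type smoothness only gives $\|\sum_mX_m\|^\alpha\le c_\alpha\sum_m\|X_m\|^\alpha$ with some $c_\alpha>1$. Replacing $\|\rho_{AB}-\rho_A\otimes\rho_B\|$ by $\|\rho_{AB}\|$ in the weighted norm costs a further factor for $\alpha<2$, since the Pythagorean identity is available only at $\alpha=2$. Your route therefore produces $C_\alpha\,2^{-\frac{\alpha-1}{\alpha}(\log M-\sI^\downarrow_\alpha(A:B)_\rho)}$ with a prefactor you have not shown to be at most $1$, even after spending the factor $\frac12$. Such a prefactor would be harmless for every asymptotic use of the lemma in this paper, but it is not the inequality as stated. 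The constant-$1$ bound is precisely the nontrivial content the paper takes from Cheng--Gao.
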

\end{lemboxed}

The following is a special case of the multipartite convex splitting lemma established in Ref.~\cite{Cheng_2023_quantum_broadcast}, which is relevant to universal entanglement-assisted classical-quantum broadcast channel coding in Section~\ref{sec: universal marton}.
\begin{lemboxed}
    \begin{lem}[Multipartite convex splitting~{\cite{Cheng_2023_quantum_broadcast}}]\label{lem: multipartite convex splitting}
        Let $\rho_{A_1\cdots A_L}$ be an $L$-partite state, and let $\tau_{A_1^{M_1}\cdots A_L^{M_L}}$ be a quantum state defined as
        \bal
        \tau_{A_1^{M_1}\cdots A_L^{M_L}}\coloneqq\frac{1}{\prod_{l=1}^LM_l}\sum_{(m_1,\ldots, m_L)\in[M_1]\times \cdots\times[M_L]}\rho_{A_{1, m_1}\cdots A_{L, m_L}}\otimes \qty(\bigotimes_{l=1}^L\bigotimes_{m_l'\in[M_l]~|~m_l'\neq m_l}\rho_{A_{m_l'}}).
        \eal
        Then, it holds that 
        \bal
        \left\|\tau_{A_1^{M_1}\cdots A_L^{M_L}}-\rho_{A_1}^{\otimes M_1}\otimes \cdots\otimes\rho_{A_L}^{\otimes M_L} \right\|_1\leq \sum_{\emptyset\neq S\subset[L]}2^{\abs{S}}2^{-\sup_{1\leq \alpha\leq 2}\frac{\alpha-1}{\alpha}(\sum_{l\in S}\log M_l-\sI_\alpha^\downarrow[S])}
        \eal
    \end{lem}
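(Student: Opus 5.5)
We prove the multipartite bound by reducing it, via a telescoping (hybrid) argument over the parties, to unipartite-type convex-splitting estimates for each nonempty subset $S\subset[L]$. The $2^{\abs{S}}$ prefactor suggests inclusion–exclusion. Write $\tau$ as the average over index tuples $\vec m=(m_1,\ldots,m_L)$ of states $\tau^{\vec m}$. In $\tau^{\vec m}$ the systems $A_{1,m_1},\ldots,A_{L,m_L}$ carry the joint state $\rho_{A_1\cdots A_L}$, and every other copy carries its marginal. For each $S\subset[L]$, let $\omega^{S}$ be the analogous mixture built from the state $\rho_{A_S}\otimes\bigotimes_{l\notin S}\rho_{A_l}$, i.e.\ only the parties in $S$ are jointly correlated. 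Then $\omega^{[L]}=\tau$, and $\omega^{S}$ equals the fully product state $\rho_{A_1}^{\otimes M_1}\otimes\cdots\otimes\rho_{A_L}^{\otimes M_L}$ whenever $\abs{S}\le 1$.

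The plan is to expand $\tau-\rho_{A_1}^{\otimes M_1}\otimes\cdots\otimes\rho_{A_L}^{\otimes M_L}$ via Möbius inversion into signed combinations of $\omega^{T}$ over $T\subset S$. Concretely, define the "connected" operators $\Delta_S=\sum_{T\subset S}(-1)^{\abs{S}-\abs{T}}\omega^{T}$, so that the difference equals $\sum_{\emptyset\ne S}\Delta_S$. Each $\Delta_S$ is a combination of $2^{\abs{S}}$ terms, which is the source of the prefactor. We then bound $\|\Delta_S\|_1$ by rewriting it as the difference between a mixture correlated on $S$ and its decoupled version.

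Following Cheng et al., the key step is to treat the group $S$ as a single "super-party". The index tuple $(m_l)_{l\in S}$ ranges over $\prod_{l\in S}M_l$ values, and the reference state is $\bigotimes_{l\in S}\rho_{A_l}$. The sandwiched Rényi total correlation $\sI^\downarrow_\alpha[S]=\sD_\alpha(\rho_{A_S}\|\bigotimes_{l\in S}\rho_{A_l})$ then plays the role of the mutual information. We need a multipartite version of \lemm{ unipartite convex splitting}: the mixture over $\vec m_S$ is close to the product, with error at most $2^{-\frac{\alpha-1}{\alpha}(\sum_{l\in S}\log M_l-\sI^\downarrow_\alpha[S])}$ for every $\alpha\in(1,2]$. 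We would prove this by repeating the tight convex-splitting argument: bound the trace distance via the collision-type (sandwiched $\alpha$-Rényi) divergence $\sD_\alpha(\tau\|\text{product})$, and expand the mixture. Cross terms with disjoint index tuples factorize exactly. Terms sharing some but not all indices are controlled by the lower-order total correlations, and those are absorbed by the other subsets.

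We expect the main obstacle to be the partially overlapping index tuples. Two tuples $\vec m,\vec m'$ agreeing on a proper subset $T\subsetneq S$ produce terms governed by $\sI_\alpha[T]$ rather than $\sI_\alpha[S]$. The Möbius construction of $\Delta_S$ is designed to cancel exactly these partial overlaps, and checking this cancellation carefully is the crux. A secondary difficulty is passing from $\alpha=2$ (collision) to general $\alpha\in(1,2]$. For this we would use the operator-concavity and Hölder/Araki–Lieb–Thirring interpolation step from the tight unipartite proof, and then take the supremum over $\alpha$. Finally, summing the per-$S$ bounds with the triangle inequality gives the stated sum with the $2^{\abs{S}}$ factors. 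Since we use $\|\cdot\|_1$ rather than $\frac12\|\cdot\|_1$, any factor of $2$ is absorbed.
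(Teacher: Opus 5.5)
The paper gives no proof of this lemma; it imports it from Berta--Cheng--Gao, whose argument rests on a multipartite mean-zero decomposition together with complex interpolation. Your M\"obius components $\Delta_S$ are exactly that decomposition, so the starting point is right. The gap is in how you bound $\|\Delta_S\|_1$.

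First, the ``super-party'' statement you propose to prove is false. Take $\rho_{A_1A_2A_3}=\Phi_{A_1A_2}\otimes\rho_{A_3}$ with $\Phi$ maximally classically correlated of dimension $d$, $M_1=M_2=1$, and $M_3\to\infty$. Then $\sI^\downarrow_\alpha[\{1,2,3\}]=\log d$, so your bound tends to $0$. Yet $\omega^{\{1,2,3\}}=\Phi\otimes\rho_{A_3}^{\otimes M_3}$ stays at trace distance $2(1-1/d)$ from the product. Lemma~\ref{lem: unipartite convex splitting} does not apply with $\prod_{l\in S}M_l$ slots, because different tuples share the systems $A_{l,m_l}$. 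Also, for $\abs{S}\ge 3$ the operator $\Delta_S$ is not a correlated mixture minus its decoupled version.

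Second, and more fundamentally, bounding a trace distance through $\sD_\alpha(\text{state}\,\|\,\text{product})$ is lossy for $\alpha<2$. Near the product, $2^{(\alpha-1)\sD_\alpha}-1$ is quadratic in the perturbation: a commuting $\pm\delta$ perturbation has trace distance $\delta$ but excess $\frac{\alpha(\alpha-1)}{2}\delta^2+O(\delta^4)$. Any such route therefore yields at best the exponent $\frac{\alpha-1}{2}(\cdot)$ rather than $\frac{\alpha-1}{\alpha}(\cdot)$. The cross-term expansion exists only at $\alpha=2$. So the passage to $\alpha\in(1,2)$ is the core of the proof, not a secondary step. Operator concavity or Araki--Lieb--Thirring applied to $\tau$ cannot repair this, because the loss is in the divergence-to-trace-distance step itself.

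What is missing is to bound the signed operator $\Delta_S$ directly in weighted noncommutative $L_\alpha$ norms, by interpolating a linear map. Put $\sigma=\bigotimes_{l}\rho_{A_l}^{\otimes M_l}$, $\sigma_S=\bigotimes_{l\in S}\rho_{A_l}$, $N_S=\prod_{l\in S}M_l$, $\|Y\|_{p,\sigma}=\|\sigma^{1/(2p)}Y\sigma^{1/(2p)}\|_p$, and $X=\sigma_S^{-1/2}\rho_{A_S}\sigma_S^{-1/2}$. Then $\|X\|_{\alpha,\sigma_S}=2^{\frac{\alpha-1}{\alpha}\sI^\downarrow_\alpha[S]}$, and $\|\Delta_S\|_1=\|\Phi_S(X)\|_{1,\sigma}$ for the linear map $\Phi_S\colon X\mapsto\sigma^{-1/2}\Delta_S\sigma^{-1/2}$.

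The $1\to1$ norm of $\Phi_S$ is at most $2^{\abs{S}}$, by the triangle inequality over the $2^{\abs{S}}$ terms and trace-norm contractivity of partial traces. This is where the prefactor genuinely comes from.

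The $2\to2$ norm is at most $N_S^{-1/2}$. Let $E_T$ be the $\sigma_S$-preserving conditional expectation onto $A_T$. Then $\sum_{T\subseteq S}(-1)^{\abs{S}-\abs{T}}E_T=\prod_{l\in S}(\mathrm{id}-E_{S\setminus\{l\}})$ is an orthogonal projection for $\langle Y,Z\rangle_{\sigma_S}=\Tr[\sigma_S^{1/2}Y^\dagger\sigma_S^{1/2}Z]$. Pieces attached to two tuples differing in coordinate $l$ are orthogonal, because the conditional expectation tracing out slot $(l,m_l)$ annihilates one piece and fixes the other. This is the precise form of your cancellation.

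Riesz--Thorin interpolation for these weighted norms (as in Beigi's proof of data processing for $\sD_\alpha$), together with $\|\cdot\|_{1,\sigma}\le\|\cdot\|_{\alpha,\sigma}$, then gives for every $\alpha\in[1,2]$
\[
\|\Delta_S\|_1\le 2^{\abs{S}\frac{2-\alpha}{\alpha}}\,N_S^{-\frac{\alpha-1}{\alpha}}\,\|X\|_{\alpha,\sigma_S}\le 2^{\abs{S}}\,2^{-\frac{\alpha-1}{\alpha}\qty(\sum_{l\in S}\log M_l-\sI^\downarrow_\alpha[S])}.
\]
Optimizing $\alpha$ separately for each $S$ and summing over $S\neq\emptyset$ yields the lemma.
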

\end{lemboxed}

\subsection{Other technical lemmas}
In this section, we present technical lemmas that will be used frequently in the subsequent discussion.

\begin{lemboxed}
    \begin{lem}[{Lemma 1 in Ref.~\cite{Anshu_2019_building}}]\label{lem: continuity of the prbability}
        For any quantum states $\rho,\sigma\in\mD(\mH)$ and any operator $0\leq \Lambda\leq I$, it holds that 
        \bal
        \abs{\sqrt{\Tr\Lambda\rho}-\sqrt{\Tr\Lambda\sigma}}\leq P(\rho,\sigma).
        \eal
    \end{lem}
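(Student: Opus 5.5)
The plan is to reduce the claim to a statement about fidelity and then use Uhlmann's theorem together with the triangle inequality for Bures-type distances. Write $p=\Tr\Lambda\rho$ and $q=\Tr\Lambda\sigma$, and apply the two-outcome measurement $\{\Lambda, I-\Lambda\}$, which is a quantum channel. Monotonicity of the fidelity under channels gives
\begin{equation}
F(\rho,\sigma)\leq \qty(\sqrt{pq}+\sqrt{(1-p)(1-q)})^2 .
\end{equation}
Hence $P(\rho,\sigma)\geq \sqrt{1-(\sqrt{pq}+\sqrt{(1-p)(1-q)})^2}$, and the quantum statement reduces to a purely classical inequality about two binary distributions $(p,1-p)$ and $(q,1-q)$.

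It therefore suffices to show
\begin{equation}
\abs{\sqrt p-\sqrt q}^2\leq 1-\qty(\sqrt{pq}+\sqrt{(1-p)(1-q)})^2 .
\end{equation}
I would parametrize $\sqrt p=\cos a$, $\sqrt{1-p}=\sin a$, $\sqrt q=\cos b$ and $\sqrt{1-q}=\sin b$ with $a,b\in[0,\pi/2]$. The right-hand side then equals $1-\cos^2(a-b)=\sin^2(a-b)$, and the left-hand side equals $(\cos a-\cos b)^2$.

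The inequality $\abs{\cos a-\cos b}\leq\abs{\sin(a-b)}$ follows from the identities $\cos a-\cos b=-2\sin\frac{a+b}{2}\sin\frac{a-b}{2}$ and $\sin(a-b)=2\sin\frac{a-b}{2}\cos\frac{a-b}{2}$. These reduce the claim to $\sin\frac{a+b}{2}\leq\cos\frac{a-b}{2}=\sin\qty(\frac{\pi}{2}-\frac{\abs{a-b}}{2})$. This holds because both arguments lie in $[0,\pi/2]$, where sine is increasing, and because $\frac{a+b}{2}+\frac{\abs{a-b}}{2}=\max(a,b)\leq\frac{\pi}{2}$. Taking square roots gives the lemma.

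The main obstacle is this elementary trigonometric step; everything else is monotonicity of fidelity under the measurement channel. If the parametrization became awkward, I would instead expand both sides algebraically. Setting $x=\sqrt{pq}$ and $y=\sqrt{(1-p)(1-q)}$, the inequality becomes $p+q-2x\leq 1-x^2-y^2-2xy$. Using $x^2+y^2=pq+(1-p)(1-q)=1-p-q+2pq$, this reduces to $x(1-x-y)\leq \ldots$, which can be checked by the same angle substitution.
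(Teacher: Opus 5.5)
Your proof is correct. Monotonicity of $F(\rho,\sigma)=\|\sqrt{\rho}\sqrt{\sigma}\|_1^2$ under the measurement channel defined by $\{\Lambda,I-\Lambda\}$ gives $P(\rho,\sigma)\geq\sin\lvert a-b\rvert$. Your half-angle identities then correctly give $\lvert\cos a-\cos b\rvert\leq\sin\lvert a-b\rvert$ for $a,b\in[0,\pi/2]$. When $a=b$ both sides vanish, so the implicit division by $\sin\frac{a-b}{2}$ is harmless. The paper gives no proof of its own; it imports the lemma from Anshu--Jain--Warsi.

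Your opening sentence announces a different proof from the one you actually give, namely the standard purification argument:
\begin{itemize}
\item By Uhlmann's theorem, choose purifications $|\psi\rangle,|\phi\rangle$ of $\rho,\sigma$ with $\lvert\langle\psi|\phi\rangle\rvert^2=F(\rho,\sigma)$.
\item Write $|\phi\rangle=\alpha|\psi\rangle+\beta|\psi^\perp\rangle$ with $\lvert\beta\rvert=\sqrt{1-F(\rho,\sigma)}=P(\rho,\sigma)$.
\item The triangle inequality gives $\sqrt{\Tr\Lambda\sigma}=\|(\sqrt{\Lambda}\otimes I)|\phi\rangle\|\leq\sqrt{\Tr\Lambda\rho}+P(\rho,\sigma)$; then exchange $\rho$ and $\sigma$.
\end{itemize}
The triangle inequality has to be applied to this orthogonal decomposition. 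Bounding by $\| |\psi\rangle-|\phi\rangle \|$, i.e.\ by the Bures distance $\sqrt{2(1-\sqrt{F})}$, gives something strictly larger than $P$ whenever $F<1$, so the phrase about Bures-type distances should be dropped.

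The purification route is two lines and needs no calculation. Yours isolates the only quantum input (data processing of the fidelity) and turns the rest into an exact two-point classical inequality. That inequality is easier than your trigonometry suggests. With $x=\sqrt{pq}$ and $y=\sqrt{(1-p)(1-q)}$, your fallback computation shows the target is equivalent to $x(1-x-y)\geq 0$, not ``$\leq\ldots$''. This holds because $x\geq 0$ and $x+y\leq 1$ by Cauchy--Schwarz.
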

\end{lemboxed}

\begin{lemboxed}
    \begin{lem}[Relation between the trace distance and the purified distance~{\cite{Fuchs_van_de_graaf}}]
        Let $\rho, \sigma\in\mD(\mH)$ be two quantum states. Then, it holds that
        \bal
        T(\rho,\sigma)\leq P(\rho,\sigma)\leq \sqrt{2T(\rho,\sigma)}.
        \eal
        Here, $T(\rho,\sigma)\coloneqq\frac{1}{2}\|\rho-\sigma\|_1$ is the trace distance, and $P(\rho,\sigma)\coloneqq\sqrt{1-F(\rho,\sigma)}=\sqrt{1-\|\sqrt{\rho}\sqrt{\sigma}\|^2_1}$ is the purified distance.
    \end{lem}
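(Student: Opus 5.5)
The two inequalities are proved separately, and both go through the fidelity $F(\rho,\sigma)=\|\sqrt{\rho}\sqrt{\sigma}\|_1^2$. Throughout, write $T=T(\rho,\sigma)$ and $P=P(\rho,\sigma)=\sqrt{1-F}$. Both directions come down to comparing $T$ with $\sqrt{F}$.

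For the lower bound $T\le P$, I will first reduce to pure states via Uhlmann's theorem. I pick purifications $\ket{\psi},\ket{\phi}$ on an enlarged space with $\abs{\braket{\psi|\phi}}^2=F(\rho,\sigma)$. For pure states the trace distance is computed exactly: the operator $\ketbra{\psi}{\psi}-\ketbra{\phi}{\phi}$ has rank at most two and trace zero. Its two nonzero eigenvalues are therefore $\pm\sqrt{1-\abs{\braket{\psi|\phi}}^2}$, which follows from the determinant computed in the span of the two vectors. This gives
\begin{equation}
T(\psi,\phi)=\sqrt{1-\abs{\braket{\psi|\phi}}^2}=\sqrt{1-F(\rho,\sigma)}.
\end{equation}
Partial trace is a CPTP map, so monotonicity of the trace distance gives $T(\rho,\sigma)\le T(\psi,\phi)=P(\rho,\sigma)$.

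For the upper bound $P\le\sqrt{2T}$, it is equivalent to show $1-F\le 2T$. I will use the known inequality $1-\sqrt{F}\le T$, and then observe that
\begin{equation}
1-F=(1-\sqrt{F})(1+\sqrt{F})\le 2(1-\sqrt{F})\le 2T.
\end{equation}

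To prove $1-\sqrt{F}\le T$, I plan to pass to classical distributions. Take the measurement that attains the fidelity (Fuchs--Caves): there is a POVM $\{E_k\}$ whose outcome distributions $p_k=\Tr E_k\rho$ and $q_k=\Tr E_k\sigma$ satisfy $\sqrt{F(\rho,\sigma)}=\sum_k\sqrt{p_kq_k}$. A concrete choice is the projective measurement in the eigenbasis of $\sigma^{-1/2}\abs{\sqrt{\sigma}\sqrt{\rho}}\sigma^{-1/2}$, with a perturbation argument if $\sigma$ is not full rank. Classically, $\sqrt{p_kq_k}\ge\min(p_k,q_k)$, so
\begin{equation}
\sum_k\sqrt{p_kq_k}\ge\sum_k\min(p_k,q_k)=1-\tfrac12\sum_k\abs{p_k-q_k}.
\end{equation}
The trace distance is nonincreasing under measurement, so $\tfrac12\sum_k\abs{p_k-q_k}\le T(\rho,\sigma)$. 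Combining these gives $\sqrt{F}\ge 1-T$.

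The main obstacle is establishing that the fidelity-attaining measurement exists, including the support issues when $\sigma$ is singular. If that becomes messy, I would instead argue $1-\sqrt{F}\le T$ through purifications and the pure-state computation already done. A cleaner route, though, is to bound $F$ by its classical counterpart directly: fidelity is monotone under measurement in the direction $F(\rho,\sigma)\le F(p,q)$ for every POVM, and for the upper bound we need the reverse inequality, which is exactly what the optimal measurement provides. So I will cite the Fuchs--Caves optimality as the key step.
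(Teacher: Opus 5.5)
Your proof is correct and is the standard Fuchs--van de Graaf argument; the paper gives no proof of its own and simply cites that reference. You get $T\le P$ from Uhlmann purifications and monotonicity of the trace distance under partial trace, and $1-\sqrt{F}\le T$ from the Fuchs--Caves fidelity-attaining measurement with $\sqrt{p_kq_k}\ge\min(p_k,q_k)$, hence $1-F\le 2(1-\sqrt{F})\le 2T$.

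Two asides, neither load-bearing, should be fixed. First, the fidelity-attaining measurement is the eigenbasis of $\sigma^{-1/2}(\sigma^{1/2}\rho\,\sigma^{1/2})^{1/2}\sigma^{-1/2}$, the operator $M$ with $M\sigma M=\rho$. With the usual convention $\abs{A}=\sqrt{A^\dagger A}$, your $\abs{\sqrt{\sigma}\sqrt{\rho}}$ equals $(\sqrt{\rho}\,\sigma\sqrt{\rho})^{1/2}$, which gives a non-optimal basis in general. Second, your purification fallback for $1-\sqrt{F}\le T$ would fail, because monotonicity under partial trace only \emph{upper}-bounds $T(\rho,\sigma)$. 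If you want to avoid Fuchs--Caves, use the Powers--St{\o}rmer inequality $\Tr[(\sqrt{\rho}-\sqrt{\sigma})^2]\le\|\rho-\sigma\|_1$ together with $\Tr[\sqrt{\rho}\sqrt{\sigma}]\le\|\sqrt{\rho}\sqrt{\sigma}\|_1=\sqrt{F}$.
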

\end{lemboxed}

\begin{lemboxed}
    \begin{lem}[Hayashi-Nagaoka inequality,~{Ref.~\cite[Lemma 2]{Hayashi_Nagaoka}}]\label{lem: Hayashi Nagaoka inequality}
        For any operators $0\leq S\leq I$, $T\geq 0$, and any positive number $c>0$,
\bal
I-(S+T)^{-\frac{1}{2}} S (S+T)^{-\frac{1}{2}}&\leq (1+c)(I-S)+(2+c+c^{-1})T\\
&=c_1(I-S)+c_2T,
\eal
holds. Here, we write $c_1\coloneqq1+c, ~c_2\coloneqq2+c+c^{-1}$.
    \end{lem}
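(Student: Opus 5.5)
The plan is to expand the left-hand side via the identity $S=(S+T)-T$, and then bound the resulting $T$-sandwich with an operator Cauchy–Schwarz (AM–GM) step and operator monotonicity of the square root.

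\textbf{Notation.} Put $Y\coloneqq(S+T)^{1/2}$ and let $P$ be the projector onto $\supp(S+T)$. Let $Q\coloneqq(S+T)^{-1/2}$ be the generalized inverse of $Y$ on that support, so $QY=YQ=P$. The operators $Y$, $Q$, $P$ all commute. Since $S,T\ge 0$, both $\supp S$ and $\supp T$ lie in $\supp(S+T)$.

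\textbf{Step 1: rewrite the left-hand side.} Substituting $S=Y^2-T$ gives $QSQ=P-QTQ$. Hence
\begin{equation}
I-QSQ=(I-P)+QTQ .
\end{equation}
So it suffices to bound $QTQ$ from above.

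\textbf{Step 2: split $QTQ$.} Write $\sqrt{T}\,Q=\sqrt{T}+\sqrt{T}(Q-I)$. Apply the operator inequality
\begin{equation}
(X+Z)^\dagger(X+Z)\le(1+c^{-1})X^\dagger X+(1+c)Z^\dagger Z ,
\end{equation}
which follows from $(\sqrt{c}\,Z-X/\sqrt{c})^\dagger(\sqrt{c}\,Z-X/\sqrt{c})\ge 0$. Taking $X=\sqrt{T}$ and $Z=\sqrt{T}(Q-I)$ yields
\begin{equation}
QTQ\le(1+c^{-1})T+(1+c)(Q-I)T(Q-I).
\end{equation}

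\textbf{Step 3: bound the second term.} Using $T\le S+T=Y^2$ and $QY=P$,
\begin{equation}
(Q-I)T(Q-I)\le(Q-I)Y^2(Q-I)=(P-Y)^2=P-2Y+Y^2=P-2Y+S+T .
\end{equation}
This is the step I expect to be the main obstacle: we need $Y\ge S$. The square root is operator monotone, so $Y=(S+T)^{1/2}\ge S^{1/2}$. Since $0\le S\le I$, we also have $S^{1/2}\ge S$. Therefore
\begin{equation}
(Q-I)T(Q-I)\le P-S+T .
\end{equation}

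\textbf{Step 4: combine.} Collecting Steps 1–3,
\begin{equation}
I-QSQ\le(I-P)+(1+c)(P-S)+(2+c+c^{-1})T .
\end{equation}
Finally, $I-P\le(1+c)(I-P)$, and $(I-P)+(P-S)=I-S$. This gives
\begin{equation}
I-QSQ\le(1+c)(I-S)+(2+c+c^{-1})T=c_1(I-S)+c_2T,
\end{equation}
which is the claimed inequality.
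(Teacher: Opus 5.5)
Your proof is correct and is essentially the standard argument behind the cited Hayashi--Nagaoka lemma, which the paper imports without proof. That argument splits $\sqrt{T}(S+T)^{-1/2}=\sqrt{T}+\sqrt{T}\bigl((S+T)^{-1/2}-I\bigr)$, applies $(X+Z)^\dagger(X+Z)\le(1+c^{-1})X^\dagger X+(1+c)Z^\dagger Z$, and finishes with $T\le S+T$ and $S\le\sqrt{S}\le\sqrt{S+T}$; your tracking of the support projector $P$, with $I-P\le(1+c)(I-P)$, correctly handles the generalized inverse when $S+T$ is singular.
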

\end{lemboxed}

\begin{lemboxed}
    \begin{lem}[{Ref.~\cite[Lemma 2]{Hayashi_2009_universal_coding}}]\label{lem: variational form}
        For any operator $X\geq 0$ and any number $0\leq t<1$, we have
        \bal
        \max_{\sigma\in\mD(\mH)}\Tr[X\sigma^t]=\qty(\Tr[X^{\frac{1}{1-t}}])^{1-t}.
        \eal
        For $t=1$, we have
        \bal
        \max_{\sigma\in\mD(\mH)}\Tr[X\sigma]=\|X\|_\infty.
        \eal
    \end{lem}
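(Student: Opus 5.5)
The plan is to prove the upper bound with the matrix Hölder inequality and then show it is attained by an explicit optimizer built from $X$ itself.

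\textbf{Case $0<t<1$, upper bound.} Set the conjugate exponents $p\coloneqq\frac{1}{1-t}$ and $q\coloneqq\frac{1}{t}$, so that $\frac1p+\frac1q=1$. For any $\sigma\in\mD(\mH)$, the matrix Hölder inequality for Schatten norms gives
\begin{equation*}
\Tr[X\sigma^t]\leq \|X\|_p\,\|\sigma^t\|_q=\qty(\Tr[X^{\frac{1}{1-t}}])^{1-t}\qty(\Tr[\sigma])^{t}=\qty(\Tr[X^{\frac{1}{1-t}}])^{1-t}.
\end{equation*}
The first equality holds because $X\geq0$ and $\sigma^t\geq 0$, so each Schatten norm is just the trace of the corresponding power. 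The last step uses $\Tr\sigma=1$. Since $X$ and $\sigma^t$ are both positive semidefinite, $\Tr[X\sigma^t]$ is real and nonnegative, so no absolute values are needed.

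\textbf{Case $0<t<1$, attainment.} If $X=0$ both sides vanish and there is nothing to prove. Otherwise take
\begin{equation*}
\sigma^\star\coloneqq\frac{X^{p}}{\Tr[X^{p}]}.
\end{equation*}
Because $\sigma^\star$ commutes with $X$, the computation is purely spectral. Using $1+\frac{t}{1-t}=p$,
\begin{equation*}
\Tr[X(\sigma^\star)^t]=\frac{\Tr[X^{1+pt}]}{(\Tr X^p)^t}=\frac{\Tr[X^p]}{(\Tr X^p)^t}=(\Tr X^p)^{1-t}.
\end{equation*}
This matches the upper bound, so the maximum exists and equals $(\Tr[X^{\frac{1}{1-t}}])^{1-t}$.

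\textbf{Case $t=0$.} Here $\sigma^0$ is read as the projector onto $\supp\sigma$, which is the standard convention. Then $\Tr[X\sigma^0]\leq\Tr X$, since a projector is bounded above by $I$. Any full-rank $\sigma$ attains equality, matching $(\Tr X^{1})^{1}$.

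\textbf{Case $t=1$.} For any state $\sigma$ we have $\Tr[X\sigma]\leq\|X\|_\infty\Tr\sigma=\|X\|_\infty$, because $X\leq\|X\|_\infty I$. Equality is attained by $\sigma=\ketbra{v}{v}$, where $v$ is a unit eigenvector of $X$ for its largest eigenvalue.

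The only genuinely nontrivial ingredient is the matrix Hölder inequality $\abs{\Tr[AB]}\leq\|A\|_p\|B\|_q$. I would cite it as standard. Alternatively, since both operators are positive, one can derive it from von Neumann's trace inequality $\Tr[AB]\leq\sum_i a_i^\downarrow b_i^\downarrow$ followed by the scalar Hölder inequality. The rest is routine spectral calculus, and the edge cases $X=0$ and $t=0$ only need the support convention noted above.
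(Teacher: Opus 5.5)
Your proof is correct. Hölder's inequality with conjugate exponents $\frac{1}{1-t}$ and $\frac{1}{t}$ gives the upper bound, $\sigma^\star\propto X^{\frac{1}{1-t}}$ attains it, and your handling of $X=0$, $t=0$ (with the support-projector convention) and $t=1$ is sound. The paper gives no proof of its own and simply imports the statement from Ref.~\cite[Lemma 2]{Hayashi_2009_universal_coding}; your Hölder-plus-explicit-optimizer argument is the standard proof of this variational formula.
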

\end{lemboxed}

\section{Universal correlation detection}\label{sec:correlation detection}

\subsection{Quantum hypothesis testing and correlation detection}
Here, we review the setting of quantum hypothesis testing and the task of correlation detection as a special case.
Hypothesis testing aims to determine whether the null hypothesis or the alternative hypothesis is correct. It is known to be connected to various aspects of the performance of quantum information-processing tasks, such as channel coding~\cite{Anshu_2019_building,Anshu_2019_on_the_near,Anshu_2019_hypothesis_testing, Polyanskiy_2010_channel_coding, Cheng_2023_simple_and_tighter} and resource distillation in quantum resource theories~\cite{Liu_one_shot, Takagi_One_shot, Regula_Takagi_2021, Regula_benchmarking, hayashi_generalized_2025, Lami_2025_gqsl,lami_2024_asymptotic_quantification}.
In the standard scenario of quantum hypothesis testing, the hypotheses are represented by quantum states: the null hypothesis corresponds to a quantum state $\rho$, and the alternative hypothesis corresponds to another quantum state $\sigma$. We infer which state is actually given by performing a binary POVM $\qty{M, I-M}$.
Here, the outcome corresponding to the POVM element $M$ indicates a guess that the given state is $\rho$, whereas the outcome corresponding to $I-M$ indicates a guess that the given state is $\sigma$.

There are two possible errors in this hypothesis testing setting: 
The type I error means that the given state is $\rho$ but one infers that $\sigma$ is given.
The type II error means that the given state is $\sigma$, but one infers that $\rho$ is given.
A standard figure of merit representing how well one can discriminate between $\rho$ and $\sigma$ is the hypothesis testing divergence, defined as
\bal
D^\ve_H(\rho\|\sigma)\coloneqq-\log \min_{\substack{0\leq M\leq I \\ \Tr[\rho (I-M)]\leq \ve}}\Tr[\sigma M].
\eal
This means that one minimizes the type II error probability while the type I error probability is kept no larger than a constant $\ve$. 
In particular, the direct part of quantum hypothesis testing implies that, for any $0<r<R$, there exists a sequence of POVMs $\qty{M_n, I-M_n}$ such that both the type I error and the type II error vanish in the asymptotic limit.

One can also consider the multicopy setting, where one is given either $\rho^{\otimes n}$ or $\sigma^{\otimes n}$.
In fact, quantum Stein's lemma~\cite{hiai_1991_proper, Ogawa_2000_strong} states that the hypothesis testing divergence satisfies
\bal
\lim_{n\to\infty}\frac{1}{n}D^\ve_H(\rho^{\otimes n}\|\sigma^{\otimes n})=D(\rho\|\sigma),~~~\forall\ve\in(0,1).
\eal

Correlation detection can be formulated in the language of hypothesis testing. For a bipartite quantum state $\rho_{AB}$, suppose that we are given either the correlated state $\rho_{AB}$ or the uncorrelated state $\rho_A\otimes \rho_B$, the tensor product of the marginals, and we perform a measurement to discriminate between these two.
In the multicopy setting where we are given $n$ copies of $\rho_{AB}$ or $n$ copies of $\rho_A\otimes \rho_B$, the hypothesis testing divergence in this setting converges to the mutual information as 
\bal
\lim_{n\to\infty}\frac{1}{n}D^\ve_H(\rho_{AB}^{\otimes n}\|\rho_A^{\otimes n}\otimes \rho_B^{\otimes n})=I(A:B)_\rho.
\eal

\subsection{Universal correlation detection for classical-quantum states}

Consider an unknown cq channel $W:\mX\to\mD(\mH), ~x\mapsto\rho_B^x$.
If we choose a probability distribution $p_X$ over $\mX$ and feed a sampled symbol to $W$, we get a cq state $\rho_{XB}=\sum_{x\in\mX}p_X(x)\ketbra{x}{x}_X\otimes \rho_B^x$.
Correlation detection is the task of discriminating between the correlated cq state and the uncorrelated product cq state. That is, we would like to discriminate between 
\bal
H_0: \mbox{Given state is }\rho^{\otimes n}_{XB},~~H_1: \mbox{Given state is }\rho^{\otimes n}_{X}\otimes \rho^{\otimes n}_{B}.
\eal
This problem has been studied in Refs.~\cite{Hayashi_2016_correlation_detection, Dasgupta_2025_universal_tester, girardi2025umlaut_information, girardi2025quantum_umlaut} using the techniques of composite hypothesis testing. Here, we consider universal detection, in which the test does not depend on the details of the cq channel $W$ but achieves the optimal discrimination rate.

\begin{proboxed}
    \begin{pro}\label{pro: universal correlation detection cq}
        For any $a>0$, there exists a test $P^n_{X^nB^n}(a)$ depending on $a$ such that, for any cq state $\rho_{XB}$, the inequalities
        \bal
        \Tr[\rho_{XB}^{\otimes n}(I-P^n_{X^nB^n}(a))]&\leq \frac{1}{{\rm poly}(n)}2^{nt(a-\pI^\downarrow_{1-t}(X:B)_\rho)}\\
        \Tr[\rho_X^{\otimes n}\otimes\rho_B^{\otimes n} P^n_{X^nB^n}(a)]&\leq {\rm poly}(n)\cdot 2^{-na}.
        \eal
        hold for any $t\in(0,1)$.
    \end{pro}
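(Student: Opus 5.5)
I will use the universal states $\sigma^{x^n}_{B^n}$ and $\sigma^{U,m}_{B^n}$ from the preliminaries in a Hayashi-type universal decoder, and take as test the block-diagonal projector
\begin{equation}
P^n_{X^nB^n}(a)\coloneqq\sum_{x^n\in\mX^n}\ketbra{x^n}{x^n}_{X^n}\otimes\qty{\sigma^{x^n}_{B^n}\geq 2^{na}\sigma^{U,n}_{B^n}},
\end{equation}
where $\sigma^{U,n}_{B^n}$ is the universal symmetric state on $B^n$ and $\{A\geq B\}$ is the projector onto the nonnegative eigenspace of $A-B$. Only the type of $x^n$ and the dimension $d$ enter, so the test does not depend on $W$ or $p_X$. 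The state $\sigma^{x^n}_{B^n}=V_\pi(\bigotimes_x\sigma^{U,m_x})V_\pi^\dagger$ commutes with $\sigma^{U,n}_{B^n}$. Indeed, $\sigma^{U,n}$ is a function of the Schur--Weyl projectors and is invariant under permutations and $U^{\otimes n}$, while $\bigotimes_x\sigma^{U,m_x}$ lies in the algebra generated by $U^{\otimes n}$ restricted to the subgroup of $\mathfrak S_n$ permuting within blocks. I will verify this commutation directly; it is the analogue of the relation $[\sigma^{U,p},\sigma^{x^n}]=0$ noted earlier. With it, the projector above is a spectral projector in a common eigenbasis.

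\emph{Type II error.} Fix $x^n$ and write $Q_{x^n}\coloneqq\{\sigma^{x^n}\geq 2^{na}\sigma^{U,n}\}$. By commutativity, $Q_{x^n}\leq 2^{-na}\sigma^{x^n}(\sigma^{U,n})^{-1}$ on the relevant support, which gives $\Tr[\sigma^{U,n}Q_{x^n}]\leq 2^{-na}\Tr[\sigma^{x^n}Q_{x^n}]\leq 2^{-na}$. Since $\rho_B^{\otimes n}$ is permutation invariant, the preliminaries give $\rho_B^{\otimes n}\leq (n+1)^{(d+2)(d-1)/2}\sigma^{U,n}$. Hence $\Tr[\rho_B^{\otimes n}Q_{x^n}]\leq{\rm poly}(n)2^{-na}$ for every $x^n$. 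Averaging over $p_X^n$ gives the second inequality uniformly in the state.

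\emph{Type I error.} I use $I-Q_{x^n}=\{\sigma^{x^n}<2^{na}\sigma^{U,n}\}$ and the commuting Chernoff-type bound: for $t\in(0,1)$, $I-Q_{x^n}\leq (2^{na}\sigma^{U,n})^{t}(\sigma^{x^n})^{-t}$ on the support of $\sigma^{x^n}$. Then
\begin{equation}
\Tr[\rho^{x^n}_{B^n}(I-Q_{x^n})]\leq 2^{nat}\Tr[\rho^{x^n}(\sigma^{x^n})^{-t}(\sigma^{U,n})^{t}].
\end{equation}
Next I remove $\sigma^{x^n}$. Since $\rho^{x^n}\leq{\rm poly}(n)\sigma^{x^n}$, operator monotonicity of $s\mapsto s^{t}$ for $t\in(0,1)$ gives $(\sigma^{x^n})^{-t}\leq{\rm poly}(n)^t(\rho^{x^n})^{-t}$ on the support, modulo the usual support care. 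Applying this inside the trace requires care because $\rho^{x^n}$ does not commute with $\sigma^{x^n}$. I will handle it by pinching $\rho^{x^n}$ with respect to the eigenspaces of $\sigma^{x^n}$ and $\sigma^{U,n}$: both are sums of projectors on polynomially many Schur--Weyl blocks, so the pinching inequality costs only ${\rm poly}(n)$, and the pinched state still satisfies the operator bound. The result should be
\begin{equation}
\Tr[\rho^{x^n}(I-Q_{x^n})]\lesssim{\rm poly}(n)\,2^{nat}\Tr[(\rho^{x^n})^{1-t}(\sigma^{U,n})^{t}].
\end{equation}
Averaging over $x^n$ gives $\Tr[(\sum_{x^n}p^n(x^n)(\rho^{x^n})^{1-t})(\sigma^{U,n})^t]$. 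By \lemm{variational form} with $X=(\sum_xp_X(x)\rho_x^{1-t})^{\otimes n}$, this is at most $(\Tr[X^{1/(1-t)}])^{1-t}=(\Tr[(\sum_xp(x)\rho_x^{1-t})^{1/(1-t)}])^{n(1-t)}$. By the cq Sibson identity \eqref{eq: cq sibson} with $\alpha=1-t$, this equals $2^{-nt\pI^\downarrow_{1-t}(X:B)_\rho}$, which yields the first inequality up to polynomial factors.

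The main obstacle is the middle step: passing from $(\sigma^{x^n})^{-t}$ to $(\rho^{x^n})^{-t}$ in the noncommutative setting without losing more than a polynomial factor. I would first try the pinching argument above. If that fails, I would instead define the test with the pinched state $\mathcal{E}_{\sigma^{x^n}}(\rho^{x^n})$ replaced by its universal proxy and invoke the standard Hayashi bound $\Tr[\rho\{\rho<c\sigma\}]\leq c^t\Tr[\rho^{1-t}\sigma^t]$ for commuting pairs together with the pinching inequality $\rho\leq v\,\mathcal E(\rho)$. The statement's $1/{\rm poly}(n)$ prefactor suggests that the polynomial factors can be absorbed, for example by shifting $a$ by $O(\log n/n)$.
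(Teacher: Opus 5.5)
The test, the type II argument, the Chernoff-type bound $I-Q_{x^n}\le 2^{nat}(\sigma^{x^n}_{B^n})^{-t}(\sigma^{U,n}_{B^n})^{t}$, and the closing use of Lemma~\ref{lem: variational form} with the cq Sibson identity all match the paper. The gap is the middle step of the type I bound, which you leave open. It rests on a false premise: $\rho^{x^n}_{B^n}$ \emph{does} commute with $\sigma^{x^n}_{B^n}$.

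Each isotypic projector $\Pi_\lambda$ on $\mH^{\otimes m}$ is a linear combination of permutation operators. It is the image of the central idempotent $\frac{\dim\mU_\lambda}{m!}\sum_{\pi\in\mfS_m}\chi_\lambda(\pi)V_\pi$. So $\sigma^{U,m}$ commutes with every operator that commutes with all permutations, in particular with $\rho_x^{\otimes m}$. Hence $\bigotimes_x\rho_x^{\otimes m_x}$ commutes with $\bigotimes_x\sigma^{U,m_x}$, and conjugating both by the same $V_\pi$ gives $[\rho^{x^n},\sigma^{x^n}]=0$. The same fact is the correct reason for $[\sigma^{x^n},\sigma^{U,n}]=0$: $\bigotimes_x\sigma^{U,m_x}$ lies in the span of the permutations of the Young subgroup $\mfS_{m_1}\times\cdots\times\mfS_{m_{\aX}}$, not in an algebra generated by $U^{\otimes n}$.

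Given this commutation, joint diagonalization turns $\rho^{x^n}\le c\,\sigma^{x^n}$, with $c={\rm poly}(n)$, into $(\rho^{x^n})^t\le c^t(\sigma^{x^n})^t$. So $Z\coloneqq\rho^{x^n}(\sigma^{x^n})^{-t}$ is positive with $Z\le c^t(\rho^{x^n})^{1-t}$ (recall $\sigma^{x^n}$ is full rank). Because $(\sigma^{U,n})^t\ge0$,
\[
\Tr[\rho^{x^n}(\sigma^{x^n})^{-t}(\sigma^{U,n})^t]\le c^t\,\Tr[(\rho^{x^n})^{1-t}(\sigma^{U,n})^t].
\]
This is exactly the step the paper takes, though it does not spell out why it is legitimate. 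No pinching is needed.

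The pinching workaround you propose would not close the gap. Pinching with respect to the joint eigenspaces of $\sigma^{x^n}$ and $\sigma^{U,n}$ leads to $2^{nat}c^t\,\Tr[(\mE(\rho^{x^n}))^{1-t}(\sigma^{U,n})^t]$. Returning to $\Tr[(\rho^{x^n})^{1-t}(\sigma^{U,n})^t]$ then requires an \emph{upper} bound on $(\mE(\rho))^{1-t}$. The pinching inequality $\rho\le v\,\mE(\rho)$ and operator concavity, $\mE(\rho^{1-t})\le(\mE(\rho))^{1-t}$, only bound it from below.

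The loss can be exponential. Take $\rho_1=\ketbra{0}{0}$, $\rho_2=\ketbra{1}{1}$ and a balanced type. The symmetric-subspace block alone makes the pinched quantity at least of order $2^{-n(1-t)}$ up to polynomial factors, while the unpinched one is of order $2^{-nt}$; for $t>1/2$ these differ exponentially. Moreover, $\sum_{x^n}p^n(x^n)(\mE_{x^n}(\rho^{x^n}))^{1-t}$ is no longer a tensor power, so the variational lemma would not return the Sibson quantity. Your fallback of building the test from a pinched $\rho^{x^n}$ would make the test state dependent.

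Finally, do not try to obtain the $1/{\rm poly}(n)$ prefactor by shifting $a$. The argument yields ${\rm poly}(n)^t$, which is all the paper's own derivation supports and all that is used later. A $1/{\rm poly}(n)$ prefactor uniform in $t$ cannot hold. Take $a$ so large that the test is the zero projector, so the type I error equals $1$. Letting $t\to0$ would then push the claimed bound below $1$, a contradiction.
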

\end{proboxed}
Let us remark on an immediate corollary of Proposition~\ref{pro: universal correlation detection cq}.
In the state-aware correlation detection setting, it follows from quantum Stein's lemma that, for any $R<I(X:B)_\rho$, there exists a sequence of POVMs $\qty{M_n, I_n-M_n}$ depending on the description of the state $\rho_{XB}$ such that the type I error goes to $0$ as $n\to\infty$, while the type II error decays exponentially as $2^{-nR}$.
Proposition~\ref{pro: universal correlation detection cq} states that this can also be achieved with a fixed POVM designed independently of $\rho_{XB}$.

Note that the idea of constructing $P^n(a)$ was already discussed in Ref.~\cite[Theorem 1]{Dasgupta_2025_universal_tester}, but in a different setting. Specifically, the authors consider the arbitrarily varying scenario where the cq channel used to send $n$ messages may vary in every round, resulting in a doubly composite hypothesis testing problem where the null hypothesis is a tensor-product black box of cq states, while the alternative hypothesis consists of states with an arbitrary permutation-symmetric state on the register $B$.

\begin{proof}
    We choose the test $P^n_{X^nB^n}(a)$ as 
    \bal
    P^n_{X^nB^n}(a):&=\sum_{x^n\in\mX^n}\ketbra{x^n}{x^n}_{X^n}\otimes \qty{\sigma_{B^n}^{x^n}\geq 2^{na}\sigma_{B^n}^{U,n}}\\
    \eal
    Note that $P^n_{X^nB^n}(a)$ itself does not depend on the probability distribution.
    We also remark that this test was proposed in Ref.~\cite{Dasgupta_2025_universal_tester} to detect a different type of correlation.
    The type I error is bounded as follows:
    \bal
    \Tr[\qty(\rho_{XB})^{\otimes n}(I-P^n_{X^nB^n}(a))] &=\sum_{x^n\in\mX^n}p^n(x^n)\Tr[\rho^{x^n}_{B^n}\qty{\sigma^{x^n}_{B^n}<2^{na}\sigma^{U,n}_{B^n}}]\\
    &\leq \sum_{x^n\in\mX^n}p^n(x^n)\Tr[\rho^{x^n}_{B^n}2^{nta}(\sigma^{x^n}_{B^n})^{-t}(\sigma^{U,n}_{B^n})^t]
    \eal
    for any $t\in (0,1)$, which follows from $[\sigma^{x^n}_{B^n},\sigma^{U,n}_{B^n}]=0$. Since it holds that $\rho_{B^n}^{x^n}\leq {\rm poly}(n)\sigma^{x^n}_{B^n}$ (we write ${\rm poly}(n)=(n+1)^{\aX(d+2)(d-1)/2}$) as in \eqref{eq:universal state bound}, we have
    \bal
     \sum_{x^n\in\mX^n}p^n(x^n)\Tr[\rho^{x^n}_{B^n}2^{nta}(\sigma^{x^n}_{B^n})^{-t}(\sigma^{U,n}_{B^n})^t]&\leq 
     \frac{2^{nta}}{{\rm poly}(n)}\sum_{x^n\in\mX^n}p^n(x^n)\Tr[(\rho^{x^n}_{B^n})^{1-t}(\sigma^{U,n}_{B^n})^t]\\
     &=\frac{2^{nta}}{{\rm poly}(n)}\Tr[\qty(\sum_{x\in\mX}p(x)(\rho_B^x)^{1-t})^{\otimes n}\qty(\sigma^{U,n}_{B^n})^t].
    \eal
    Now, due to Lemma~\ref{lem: variational form}, we have
    \bal
    \frac{2^{nta}}{{\rm poly}(n)}\Tr[\qty(\sum_{x\in\mX}p(x)(\rho_B^x)^{1-t})^{\otimes n}\qty(\sigma^{U,n}_{B^n})^t]&\leq 
    \frac{2^{nta}}{{\rm poly}(n)}\Tr[\qty(\qty(\sum_{x\in\mX}p(x)(\rho_B^x)^{1-t})^{\otimes n})^{\frac{1}{1-t}}]^{1-t}\\
    &=\frac{1}{{\rm poly}(n)}2^{nt(a-\pI^\downarrow_{1-t}(X:B)_\rho)}.
    \eal

    Now, let us consider the type II error. 
    Noting that 
    \bal
    \Tr[\rho_X^{\otimes n}\otimes\rho_B^{\otimes n} P^n_{X^nB^n}(a)]=\Tr[\rho_X^{\otimes n}\otimes\rho_B^{\otimes n} \qty{\sigma^{U,p}_{X^nB^n}\geq 2^{na}\rho_X^{\otimes n}\otimes \sigma_{B^n}^{U,n}}]
    \eal
    and  $\rho^{\otimes n}_{B}\leq {\rm poly}(n)\sigma_{B^n}^{U,n}$, we have
    \bal
    \Tr[\rho_X^{\otimes n}\otimes\rho_B^{\otimes n} P^n_{X^nB^n}(a)]&\leq {\rm poly}(n)\Tr[\rho_X^{\otimes n}\otimes \sigma_{B^n}^{U,n}P^n_{X^nB^n}(a)]\\
    &\leq {\rm poly}(n)\cdot 2^{-na}\Tr[\sigma_{X^nB^n}^{U,p}P^n_{X^nB^n}(a)]\leq {\rm poly}(n)\cdot 2^{-na},
    \eal
    where the second inequality is due to the definition of $P^n_{X^nB^n}(a)$.
\end{proof}
In Ref.~\cite{Dasgupta_2025_universal_tester}, they remark that the test $P^n_{X^nB^n}(a)$ constructed in the proof satisfies
\bal
P^n_{X^nB^n}(a):&=\sum_{x^n\in\mX^n}\ketbra{x^n}{x^n}_{X^n}\otimes \qty{\sigma_{B^n}^{x^n}\geq 2^{na}\sigma_{B^n}^{U,n}}=\qty{\sigma^{U,p}_{X^nB^n}\geq 2^{na}\rho_X^{\otimes n}\otimes \sigma_{B^n}^{U,n}}.
\eal
However, this holds only in the case where the probability distribution $p(x)$ satisfies $p(x)>0$ for every $x\in \mX$.

\subsection{Semi-universal correlation detection for fully quantum states}
Fix a quantum state $\rho_{AB}$ on a composite system $AB$. We assume that we are not given the full description of the global state $\rho_{AB}$, but are given knowledge about the marginal $\rho_B$.
In this setting, we can construct a binary POVM achieving the optimal performance guaranteed by quantum Stein's lemma without the full description of $\rho_{AB}$.

\begin{proboxed}
    \begin{pro}\label{pro: semi universal detector}
        For any $a>0$ and any quantum state $\rho_B\in\mD(\mH_B)$, there exists a projector $P^n(a, \rho_B)$ such that, for any bipartite quantum state $\rho_{AB}$ satisfying $\Tr_A\rho_{AB}=\rho_B$ and for any $t\in(0,1)$, the inequalities
        \bal
        \Tr\qty[\rho_{AB}^{\otimes n}\qty(I-P^n(a,\rho_B))]&\leq \frac{1}{{\rm poly}(n)}2^{-nt(\pI^{\downarrow}_{1-t}(B:A)_\rho-a)},\\
        \Tr[\rho_A^{\otimes n}\otimes\rho_B^{\otimes n} P^n(a,\rho_B)]&\leq {\rm poly}(n)2^{-na}
        \eal
        hold.
    \end{pro}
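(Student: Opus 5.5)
The test will be defined using only universal symmetric states and the known marginal $\rho_B$. The null hypothesis $\rho_{AB}^{\otimes n}$ is replaced by the universal symmetric state on $(AB)^{\otimes n}$. The alternative is replaced by $\sigma^{U,n}_{A^n}\otimes\rho_B^{\otimes n}$, where the universal state stands in for the unknown marginal $\rho_A$. Write $\sigma_n\coloneqq\sigma^{U,n}_{(AB)^n}$ and $Y_n\coloneqq\sigma^{U,n}_{A^n}\otimes\rho_B^{\otimes n}$. Since $\sigma_n=\bigoplus_\lambda c_\lambda\Pi_\lambda$ is block-scalar, let $\mathcal{E}_n$ be the pinching onto the blocks $\Pi_\lambda$ of $(AB)^{\otimes n}$. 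Its number of blocks is $\abs{Y^n_{d_Ad_B}}\le{\rm poly}(n)$. I propose
\begin{equation}
P^n(a,\rho_B)\coloneqq\qty{\sigma_n\geq 2^{na}\,\mathcal{E}_n(Y_n)} .
\end{equation}
This is a projector depending only on $a$ and $\rho_B$. It commutes with every $\Pi_\lambda$, because $\mathcal{E}_n(Y_n)$ is block diagonal and $\sigma_n$ is scalar on each block.

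\emph{Type II.} Since $\rho_A^{\otimes n}$ is permutation invariant, the bound from the universal-state subsection gives $\rho_A^{\otimes n}\le{\rm poly}(n)\sigma^{U,n}_{A^n}$. Hence $\rho_A^{\otimes n}\otimes\rho_B^{\otimes n}\le{\rm poly}(n)\,Y_n$. Because $P^n$ commutes with the pinching projectors, $\Tr[Y_nP^n]=\Tr[\mathcal{E}_n(Y_n)P^n]$. This is at most $2^{-na}\Tr[\sigma_nP^n]\le 2^{-na}$, since the pinched $Y_n$ and $\sigma_n$ commute and the projector is their positive-part projector. This step mirrors the proof of Proposition~\ref{pro: universal correlation detection cq}.

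\emph{Type I.} The state $\rho_{AB}^{\otimes n}$ commutes with every $\Pi_\lambda$, and $\rho_{AB}^{\otimes n}\le{\rm poly}(n)\sigma_n$. Everything in the type-I term is then block diagonal, so on each block I can use the commuting bound $\{x<y\}\le x^{-t}y^t$. Combined with $\rho\le{\rm poly}\,\sigma$ exactly as in the cq proof, this should give
\begin{equation}
\Tr[\rho_{AB}^{\otimes n}(I-P^n)]\lesssim {\rm poly}(n)\,2^{nta}\Tr\qty[(\rho_{AB}^{\otimes n})^{1-t}\,\mathcal{E}_n(Y_n)^t].
\end{equation}
The block-diagonal pieces of $\rho_{AB}^{\otimes n}$ and $\mathcal{E}_n(Y_n)$ need not commute within a block. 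There I would use Audenaert's inequality $\Tr[X\{X<Z\}]\le\Tr[X^{1-t}Z^t]$ in place of commutation.

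\emph{Removing the pinching and finishing.} I would remove the pinching with the pinching inequality $Y_n\le{\rm poly}(n)\,\mathcal{E}_n(Y_n)$ together with operator monotonicity of $x^t$. This is the delicate direction, and I discuss it below. Once $\Tr[(\rho_{AB}^{\otimes n})^{1-t}Y_n^t]$ is reached, the computation factorizes, since $(\sigma^{U,n}_{A^n}\otimes\rho_B^{\otimes n})^t=(\sigma^{U,n}_{A^n})^t\otimes(\rho_B^t)^{\otimes n}$. Then
\begin{equation}
\Tr\qty[(\rho_{AB}^{\otimes n})^{1-t}Y_n^t]=\Tr\qty[\qty(\Tr_B[\rho_{AB}^{1-t}\rho_B^t])^{\otimes n}(\sigma^{U,n}_{A^n})^t].
\end{equation}
Lemma~\ref{lem: variational form} bounds the right-hand side by $\big(\Tr[(\Tr_B[\rho_{AB}^{1-t}\rho_B^t])^{1/(1-t)}]\big)^{n(1-t)}$. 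By the quantum Sibson identity with the roles of $A$ and $B$ exchanged, this equals $2^{-nt\pI^\downarrow_{1-t}(B:A)_\rho}$. So the type-I error is at most ${\rm poly}(n)\,2^{-nt(\pI^\downarrow_{1-t}(B:A)_\rho-a)}$.

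\emph{Main obstacle.} The hard step is the noncommutativity in the type-I bound. Concavity of $x^t$ gives $\mathcal{E}_n(Y_n)^t\ge\mathcal{E}_n(Y_n^t)$, which is the wrong direction for removing the pinching. I would first try the pinching inequality $Y_n\le{\rm poly}(n)\mathcal{E}_n(Y_n)$ inside the Audenaert-type bound, trading it for a polynomial factor, which is harmless at first order. If that fails, the fallback is to pinch instead with respect to $\sigma_n$ together with the spectral decomposition of $\rho_B^{\otimes n}$ (polynomially many eigenvalues). The resulting family of block projections then commutes with $Y_n$ up to a ${\rm poly}(n)$ loss. The exponent is unaffected either way; only the polynomial prefactors change.
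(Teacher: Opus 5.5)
Your type-I bound is incomplete at exactly the step you flag, and your primary plan there fails. The pinching inequality $Y_n\le{\rm poly}(n)\,\mathcal{E}_n(Y_n)$ combined with operator monotonicity of $x^t$ gives $Y_n^t\le{\rm poly}(n)\,\mathcal{E}_n(Y_n)^t$. That bounds the factorizable quantity $\Tr[(\rho_{AB}^{\otimes n})^{1-t}Y_n^t]$ by the pinched one. You need the reverse, because your type-I estimate produces the pinched quantity. The fallback is too vague to check.

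The missing observation is that the pinching is the identity on $Y_n$. The operator $Y_n=\sigma^{U,n}_{A^n}\otimes\rho_B^{\otimes n}$ commutes with every $V^{A}_\pi\otimes V^{B}_\pi$, since $\sigma^{U,n}_{A^n}$ and $\rho_B^{\otimes n}$ are each permutation invariant on their own systems. Each $\Pi_\lambda$ is the isotypic projector of the $\mfS_n$-representation on $(\mH_A\otimes\mH_B)^{\otimes n}$, hence a linear combination of these permutation unitaries. Therefore $[\Pi_\lambda,Y_n]=0$, so $\mathcal{E}_n(Y_n)=Y_n$ and $[\sigma_n,Y_n]=0$. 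Your test is then exactly the paper's $\{\sigma^{U,n}_{A^nB^n}>2^{na}\sigma^{U,n}_{A^n}\otimes\rho_B^{\otimes n}\}$, up to $\geq$ versus $>$. This commutation is the fact the paper's proof uses.

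With that, the type-I bound needs neither pinching nor Audenaert's inequality. The non-commutativity of $\rho_{AB}^{\otimes n}$ and $Y_n$ never has to be resolved.
\begin{itemize}
\item Since $\sigma_n$ is full rank and commutes with $Y_n$, you get $I-P^n\le 2^{tna}\sigma_n^{-t}Y_n^t$.
\item Since $\rho_{AB}^{\otimes n}$ commutes with $\sigma_n$ and $\rho_{AB}^{\otimes n}\le{\rm poly}(n)\sigma_n$, the positive operator $\rho_{AB}^{\otimes n}\sigma_n^{-t}$ satisfies $\rho_{AB}^{\otimes n}\sigma_n^{-t}\le{\rm poly}(n)^t(\rho_{AB}^{\otimes n})^{1-t}$.
\item Pairing this against $Y_n^t\ge0$ gives $\Tr[\rho_{AB}^{\otimes n}(I-P^n)]\le{\rm poly}(n)\,2^{tna}\Tr[(\rho_{AB}^{\otimes n})^{1-t}Y_n^t]$.
\end{itemize}
Your factorization, Lemma~\ref{lem: variational form}, and the Sibson identity then finish as in the paper. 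Your type-II argument is already correct.

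The ${\rm poly}(n)$ prefactor you obtain on the type-I error is the right one. The $1/{\rm poly}(n)$ in the statement, and in the paper's proof, comes from applying $\rho_{AB}^{\otimes n}\le{\rm poly}(n)\sigma_n$ in the wrong direction. It cannot hold for all $t\in(0,1)$ when $a$ is large: letting $t\to0$ would force the type-I error below $1/{\rm poly}(n)$, whereas for large $a$ the test rejects everything in the support. This is irrelevant at first order.
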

\end{proboxed}
Proposition~\ref{pro: semi universal detector} implies that, for any $a<I(A:B)_\rho$, the test $P^n(a,\rho_B)$ can achieve exponential decay $\sim 2^{-na}$ of the type II error probability, while the type I error also decays exponentially to zero, even though the test does not depend on the full description of $\rho_{AB}$.

\begin{proof}
    We define a test $P^n(a,\rho_B)$ as 
    \bal
    P^n(a,\rho_B)\coloneqq\qty{\sigma_{A^nB^n}^{U,n}> 2^{na}\sigma_{A^n}^{U,n} \otimes \rho_{B}^{\otimes n}}.
    \eal
    First, we bound the type I error.
    \bal
    \Tr\qty[\rho_{AB}^{\otimes n}\qty(I-P^n(a,\rho_B))]&=\Tr\qty[\rho_{AB}^{\otimes n}\qty{\sigma_{A^nB^n}^{U,n}\leq 2^{na}\sigma_{A^n}^{U,n}\otimes \rho_{B}^{\otimes n}}]\\
    &\leq 2^{tna}\Tr\qty[\rho_{AB}^{\otimes n}\qty(\sigma_{A^nB^n}^{U,n})^{-t}\qty(\sigma_{A^n}^{U,n}\otimes \rho_{B}^{\otimes n})^t],\\
    \eal
    for any $t\in (0,1)$. Here, we used the fact that $[\sigma_{A^nB^n}^{U,n},\sigma_{A^n}^{U,n}\otimes \rho_{B}^{\otimes n}]=0$, which implies $\qty{\sigma_{A^nB^n}^{U,n}\leq 2^{na}\sigma_{A^n}^{U,n}\otimes \rho_{B}^{\otimes n}}\leq 2^{tna}\qty(\sigma_{A^nB^n}^{U,n})^{-t}\qty(\sigma_{A^n}^{U,n}\otimes \rho_{B}^{\otimes n})^t$. From $\rho_{AB}^{\otimes n}\leq {\rm poly}(n)\sigma_{A^nB^n}^{U,n}$, we further have 
    \bal
    2^{tna}\Tr\qty[\rho_{AB}^{\otimes n}\qty(\sigma_{A^nB^n}^{U,n})^{-t}\qty(\sigma_{A^n}^{U,n}\otimes \rho_{B}^{\otimes n})^t]&\leq
    \frac{2^{tna}}{{\rm poly}(n)}\Tr\qty[\qty(\rho_{AB}^{\otimes n})^{1-t}\qty(\sigma_{A^n}^{U,n}\otimes \rho_{B}^{\otimes n})^t]\\
    &=\frac{2^{tna}}{{\rm poly}(n)}\Tr\qty[\qty(\rho_{AB}^{\otimes n})^{1-t}\qty(I_{A^n}\otimes \rho_{B}^{\otimes n})^t\qty(\sigma_{A^n}^{U,n}\otimes I_{B^n})^t]\\
    &=\frac{2^{tna}}{{\rm poly}(n)}\Tr[\Tr_{B^n}\qty[\qty(\rho_{AB}^{\otimes n})^{1-t}(\rho_{B}^{\otimes n})^t]\qty(\sigma_{A^n}^{U,n})^t].
    \eal
    Due to Lemma~\ref{lem: variational form}, we have
    \bal
    \frac{2^{tna}}{{\rm poly}(n)}\Tr[\Tr_{B^n}\qty[\qty(\rho_{AB}^{\otimes n})^{1-t}(\rho_{B}^{\otimes n})^t]\qty(\sigma_{A^n}^{U,n})^t]
    &\leq \frac{2^{tna}}{{\rm poly}(n)}\Tr[\Tr_{B^n}\qty[\qty(\rho_{AB}^{\otimes n})^{1-t}(\rho_{B}^{\otimes n})^t]^{\frac{1}{1-t}}]^{1-t}.
    \eal
    Now, by the quantum Sibson identity,
    \bal
    \pI_\alpha^{\downarrow}(B:A)_\rho=\inf_{\sigma_A\in\mD(\mH_A)}\pD_\alpha(\rho_{AB}\|\sigma_A\otimes \rho_B)=\frac{\alpha}{\alpha-1}\log \Tr[\Tr_B[\rho_{AB}^\alpha\rho_B^{1-\alpha}]^{\frac{1}{\alpha}}]
    \eal
    it holds that
    \bal
     \frac{2^{tna}}{{\rm poly}(n)}\Tr[\Tr_{B^n}\qty[\qty(\rho_{AB}^{\otimes n})^{1-t}(\rho_{B}^{\otimes n})^t]^{\frac{1}{1-t}}]^{1-t}=\frac{1}{{\rm poly}(n)}2^{-nt(\pI^{\downarrow}_{1-t}(B:A)_\rho-a)}
    \eal
    We bound the type II error.
    \bal
    \Tr[\rho_A^{\otimes n}\otimes\rho_B^{\otimes n} P^n(a,\rho_B)]
    &\leq{\rm poly}(n)\Tr[\sigma_{A^n}^{U,n}\otimes \rho_B^{\otimes n} P^n(a,\rho_B)]\\
    &\leq {\rm poly}(n)2^{-na}\Tr[\sigma_{A^nB^n}^{U,n}P^n(a,\rho_B)]\\
    &\leq {\rm poly}(n)2^{-na}.
    \eal
\end{proof}

Here, we remark on the achievable error exponent of the correlation detection in the state-aware/state-agnsotic scenarios.
If one is informed of the complete information about the given state $\rho_{AB}$, then the optimal exponent of the type I decay where the exponent of the type II error is kept no smaller than $r>0$ is characterized as~\cite{nagaoka_2006_converse_theorem, Hayashi_2007_error_exponent, audenaert_2008}
\bal
\sup_{0<\alpha<1}\frac{\alpha-1}{\alpha}\qty(r-\pD_\alpha(\rho_{AB}\|\rho_A\otimes \rho_B)).
\eal
On the other hand, in either the fully quantum case or the classical-quantum case in Proposition\ref{pro: universal correlation detection cq} and Proposition~\ref{pro: semi universal detector}, the error exponent of the type I error is not optimal. At this point, it is not clear whether the universal correlation detectors above also achieve the optimal error exponent.

\section{Warm-up 1: Universal cq channel coding with unknown shared-randomness assistance}\label{sec: cq channel coding}

We now apply the universal correlation detection technique established in Sec.~\ref{sec:correlation detection} to characterize the performance of communication protocols.
We begin our discussion with classical-quantum channel coding.

In channel-aware classical-quantum coding, the channel capacity $C_{\rm cq}(W)$ of the classical-quantum channel $W$ is characterized as~\cite{Holevo_1996_capacity, SW_theorem}
\bal
C_{\rm cq}(W)=\sup_{p}I(X:B)_{\rho_{XB}},~~~\rho_{XB}=\sum_{x\in \mX}p(x)\ketbra{x}{x}\otimes \rho_B^x.
\eal
In the channel-agnostic scenario, it is known that, for any fixed input distribution $p$, one can construct a universal coding scheme achieving the communication rate $I(X:B)_{\rho_{XB}}$~\cite{Hayashi_2009_universal_coding, Matsuura_2025_universal_resolvability}.

Here, we approach the universal classical-quantum channel coding problem based on correlation detection, providing an alternative proof that the mutual information with a fixed input distribution is achievable. 
We remark that our construction requires shared (but unknown) randomness as an additional resource, and thus is slightly weaker than the results in Refs.~\cite{Hayashi_2009_universal_coding, Matsuura_2025_universal_resolvability}.
Nevertheless, our construction---combining correlation detection and position-based decoding~\cite{Anshu_2019_building}---serves as a starting point for the further extension of this idea to various communication settings as we see later.

We first formulate shared-randomness-assisted classical-quantum channel coding.
\begin{defnboxed}
\begin{defn}[Shared-randomness-assisted classical-quantum channel coding]
         Let $W:\mX\to \mD(\mH), x\mapsto \rho^x$ be a classical-quantum channel, and let $\mM=\qty{1,\ldots, M}$ be a message set containing $M\coloneqq\abs{\mM}$ elements. Suppose that the sender (Alice) and the receiver (Bob) share a random variable $S\sim p_X$. We also consider a map $f:\mM\times \mS\to\mX^n$ and a family of POVMs $\qty{\qty{\Lambda_m^{s}}_{m}}_s$.
         The triplet $\qty(p_X, f,\qty{\qty{\Lambda_m^{s}}_{m}}_s)$ is called a shared-randomness-assisted $(n, M, \ve)$-code if
         \bal
         1-\mbE_S\Tr[\Lambda^s_m\rho^{f(m, s)}]\leq \ve, ~~\forall m\in \mM.
         \eal
         Here, $\mbE_S$ denotes the average with respect to $S\sim p_X$.
         $R>0$ is called an achievable rate for a classical-quantum channel $W$ if, for any $\ve>0$ and sufficiently large $n$, there exists a shared-randomness-assisted $(n, 2^{nR},\ve)$-code. The classical-quantum channel capacity $C_{\rm cq}(W)$ is defined as the supremum of the achievable rates for $W$.
\end{defn}
\end{defnboxed}

We then show the following result by employing universal correlation detection as a decoder for position-based decoding.
\begin{thmboxed}
    \begin{thm}[Universal cq channel coding with unknown shared-randomness assistance]\label{thm: universal cq channel coding}
        There exists a sequence of shared-randomness-assisted channel coding schemes achieving the communication rate $R$ for any classical-quantum channel and any probability distribution $p$ on $\mX$ from which one samples the shared randomness, satisfying $R<I(X:B)_\rho$, with $\rho_{XB}\coloneqq\sum_{x\in\mX}p(x)\ketbra{x}{x}_X\otimes \rho^x_B$.

        \begin{itemize}
    \item {\bf The sender and the receiver need to know}:
    \begin{enumerate}
        \item that the target communication rate $R$ satisfies $R<I(X:B)_\rho$.
    \end{enumerate}
    \item {\bf The sender and the receiver do not need to know}:
    \begin{enumerate}
        \item the full description of the classical-quantum channel $W$;
        \item the probability distribution $p$ on $\mX$;
        \item the exact values of mutual information $I(X:B)_\rho$, and the channel capacity $C_{\rm cq}(W)$.
    \end{enumerate}
\end{itemize}
    \end{thm}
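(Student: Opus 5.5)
We will combine position-based decoding with the universal cq correlation detector of Proposition~\ref{pro: universal correlation detection cq}. The shared randomness consists of $M$ independent codewords $x^n_1,\ldots,x^n_M$, each drawn i.i.d.\ from $p^n$. Neither party needs to know $p$; they simply hold the samples. To send $m\in\mM$, Alice inputs $x^n_m$ into $W^{\otimes n}$. Bob's state, jointly with the shared register, is then $\rho^{(m)}=\rho_{X_mB}^{\otimes n}\otimes\bigotimes_{m'\neq m}\rho_{X_{m'}}^{\otimes n}$.

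Fix $a$ with $R<a<I(X:B)_\rho$. Bob uses the test $T_{m'}\coloneqq P^n_{X^n_{m'}B^n}(a)$ on the pair $(X_{m'},B)$. This test is channel- and distribution-independent, since it is built only from the universal states $\sigma^{x^n}_{B^n}$ and $\sigma^{U,n}_{B^n}$. He applies the pretty-good-measurement decoder
\begin{equation}
\Lambda_m\coloneqq\Big(\sum_{m'}T_{m'}\Big)^{-1/2}T_m\Big(\sum_{m'}T_{m'}\Big)^{-1/2}.
\end{equation}
Conditioned on the shared string, this is a family of POVMs $\{\Lambda^s_m\}_m$. It depends only on $a$, and hence only on knowing that $R<I(X:B)_\rho$.

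For the error analysis, apply Lemma~\ref{lem: Hayashi Nagaoka inequality} with $S=T_m$ and $T=\sum_{m'\neq m}T_{m'}$. This gives
\begin{equation}
1-\Tr[\Lambda_m\rho^{(m)}]\le c_1\Tr[(I-T_m)\rho^{\otimes n}_{X_mB}]+c_2\sum_{m'\neq m}\Tr[T_{m'}\,\rho_{X_{m'}}^{\otimes n}\otimes\rho_B^{\otimes n}],
\end{equation}
where we use that the reduced state on $X_{m'}B$ for $m'\neq m$ is the product $\rho_X^{\otimes n}\otimes\rho_B^{\otimes n}$. The first term is the type I error, bounded by Proposition~\ref{pro: universal correlation detection cq} by $\frac{1}{{\rm poly}(n)}2^{-nt(\pI^\downarrow_{1-t}(X:B)_\rho-a)}$. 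The second is at most $M\cdot{\rm poly}(n)2^{-na}={\rm poly}(n)2^{-n(a-R)}$. The bound holds for every $m$, so the worst-case error condition in the definition is met. The averaging over $S\sim p_X$ is exactly the expectation producing these i.i.d.\ reduced states.

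The main step needing care is making the first term vanish. Since $\lim_{\alpha\to1}\pI^\downarrow_\alpha(X:B)_\rho=I(X:B)_\rho>a$, there is $t\in(0,1)$ small enough that $\pI^\downarrow_{1-t}(X:B)_\rho>a$, so that term decays exponentially. This choice of $t$ enters only the analysis, not the construction, because the proposition holds for all $t$ simultaneously. Both terms therefore vanish as $n\to\infty$, and for large $n$ we obtain an $(n,2^{nR},\ve)$-code for every $\ve>0$. One should also check that the ${\rm poly}(n)$ factors are uniform in $p$ and $W$: they depend only on $\aX$ and $d$, so the scheme is universal.
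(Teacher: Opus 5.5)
Your construction matches the paper's: $M$ i.i.d.\ shared blocks, the detector $P^n(a)$ applied to each block together with $B^n$, a pretty-good measurement, and Hayashi--Nagaoka. The universality claim, however, fails at the choice of threshold.

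You fix a constant $a$ with $R<a<I(X:B)_\rho$ and assert that this needs only the knowledge that $R<I(X:B)_\rho$. It does not. Choosing such an $a$ requires knowing a point strictly between $R$ and the unknown mutual information, and no single $a>R$ works for every $(W,p)$ with $I(X:B)_\rho>R$.

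Take any pair with $R<I(X:B)_\rho\le a$. Your bound on the first term is vacuous there, because $\pI^\downarrow_{1-t}(X:B)_\rho\le I(X:B)_\rho\le a$ for every $t\in(0,1)$. So your decoding-error bound does not go to zero. If moreover $I(X:B)_\rho<a$, the detector's own type-I error tends to one by the strong converse to Stein's lemma, since its type-II error decays at rate $a>I(X:B)_\rho$. Your argument therefore gives, for each $a>R$, a code that is universal only over the smaller class $\{(W,p):I(X:B)_\rho>a\}$. The theorem instead asserts a single $R$-dependent sequence that works for all $(W,p)$ with $R<I(X:B)_\rho$, and it explicitly excludes knowledge of $I(X:B)_\rho$.

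The repair, which is exactly what the paper does, is to let the threshold depend on the blocklength, e.g.\ $a_n=R+n^{-1/2}$. Any $a_n\downarrow R$ with $n(a_n-R)/\log n\to\infty$ also works.
\begin{itemize}
\item The second term becomes $c_2\,{\rm poly}(n)\,2^{-\sqrt{n}}\to 0$, independently of the channel and of $p$.
\item For a given $(W,p)$ with $I(X:B)_\rho>R$, pick $t^*\in(0,1)$ with $\pI^\downarrow_{1-t^*}(X:B)_\rho>R$. Then $a_n<\pI^\downarrow_{1-t^*}(X:B)_\rho$ for all large $n$, so the first term decays as well.
\end{itemize}
As you correctly note, $t^*$ enters only the analysis, so this yields one code sequence that depends on $R$ alone.

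A minor point: your $\Lambda_m$ sum to the support projection of $\sum_{m'}T_{m'}$ rather than to $I$. Add $\frac{1}{M}\bigl(I-(\sum_{m'}T_{m'})^0\bigr)$ to each $\Lambda_m$, as the paper does. This only increases $\Lambda_m$, so your error bound is unaffected.
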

\end{thmboxed}

\begin{proof}

Let $\mM=\qty{1,\ldots, M}$ be the message set with $M=2^{nR}$, and let $W:\mX\to \mD(\mH),~x\mapsto\rho^x$ be an unknown cq channel. 
We consider the situation where Alice (sender) and Bob (receiver) share $nM$ copies of the classical state
\bal
\rho_{X'X}=\sum_{x\in\mX}p_X(x)\ketbra{x}{x}_{X'}\otimes \ketbra{x}{x}_{X},
\eal
where Alice has the system $X'$ and Bob has the system $X$. For convenience, we write $\widetilde{X}_m\coloneqq X_{n(m-1)+1}\cdots X_{nm}$ for any $m=1, \ldots M$.
If Alice would like to communicate $m\in\mM$, Alice sends $\tX_m$ via $W^{\otimes n}$. Bob's resulting reduced state is 
\bal
\rho_{\tX^{M}B^n}^m=\rho_{\tX_1}\otimes\cdots \otimes \rho_{\tX_{m-1}} \otimes \rho_{\tX_{m+1}}\otimes \cdots\otimes \rho_{\tX_M}\otimes \rho_{\tX_mB^n}.
\eal

Now, let us construct the decoder.
We first define an operator $\Gamma_{\tX^{M}B^n}^m$ for any $m\in\mM$ as 
\bal
\Gamma_{\tX^{M}B^n}^m\coloneqq I_{\tX_1}\otimes \cdots \otimes I_{\tX_{m-1}}\otimes I_{\tX_{m+1}}\otimes \cdots\otimes I_{\tX_M}\otimes P^n_{\tX_mB^n}(a).
\eal
Note that $\Gamma_{\tX^{M}B^n}^m$ satisfies
\bal
\Tr[\Gamma_{\tX^{M}B^n}^m \rho_{\tX^{M}B^n}^m]&=\Tr[\rho_{XB}^{\otimes n}P^n_{\tX_mB^n}(a)],\\
\Tr[\Gamma_{\tX^{M}B^n}^{m'} \rho_{\tX^{M}B^n}^m]&=\Tr[\rho_X^{\otimes n}\otimes \rho_B^{\otimes n}P^n_{\tX_mB^n}(a)], \qquad m\neq m'.
\eal
$\Gamma_{\tX^{M}B^n}^1$ and $\Gamma_{\tX^{M}B^n}^m$ are related through a permutation $\pi\in\mfS_M$ over the systems $\tX^M$ such that $\pi(1)=m$ as 
\bal
V^\pi_{\tX^M}\Gamma_{\tX^{M}B^n}^1 \qty( V^\pi_{\tX^M})^\dagger=\Gamma_{\tX^{M}B^n}^m.
\eal

From the family $\qty{\Gamma_{\tX^{M}B^n}^m}_{m\in\mM}$ of operators, we define operators $\qty{\widetilde{\Lambda}_{\tX^{M}B^n}^m}_{m\in\mM}$
\bal
\widetilde{\Lambda}_{\tX^{M}B^n}^m=\qty(\sum_{m\in\mM}\Gamma_{\tX^{M}B^n}^m)^{-\frac{1}{2}} \Gamma_{\tX^{M}B^n}^m \qty(\sum_{m\in\mM}\Gamma_{\tX^{M}B^n}^m)^{-\frac{1}{2}}.
\eal
Here, note that 
\bal
\sum_{m\in\mM}\widetilde{\Lambda}_{\tX^{M}B^n}^m=\qty(\sum_{m'\in\mM}\Gamma_{\tX^{M}B^n}^{m'})^0\leq I.
\eal
We define our decoder
$\qty{\Lambda_{\tX^{M}B^n}^m}_{m\in\mM}$ by lifting $\qty{\widetilde{\Lambda}_{\tX^{M}B^n}^m}_{m\in\mM}$ to a valid measurement by 
\bal\label{eq: pretty good cq}
\Lambda_{\tX^{M}B^n}^m=\frac{1}{2^{nR}}\qty(I-\qty(\sum_{m\in\mM}\Gamma_{\tX^{M}B^n}^m)^0)+\widetilde{\Lambda}_{\tX^{M}B^n}^m.
\eal
Note that $\Lambda_{\tX^{M}B^n}^m$ satisfies
\bal
V_\pi^{\tX^M}\Lambda_{\tX^{M}B^n}^1 \qty( V_\pi^{\tX^M})^\dagger=\Lambda_{\tX^{M}B^n}^m.
\eal

Let us analyze the error probability. 
The error probability for the message $m$ is written as
\bal
\Tr[\rho_{\tX^{M}B^n}^m \qty(I-\Lambda_{\tX^{M}B^n}^m)].
\eal
Noting that $\Lambda_{\tX^{M}B^n}^m\geq \widetilde{\Lambda}_{\tX^{M}B^n}^m$ and applying Lemma~\ref{lem: Hayashi Nagaoka inequality}, we have
\bal
\Tr[\rho_{\tX^{M}B^n}^m \qty(I-\Lambda_{\tX^{M}B^n}^m)]&\leq\Tr[\rho_{\tX^{M}B^n}^m \qty(I-\widetilde{\Lambda}_{\tX^{M}B^n}^m)]\\
&\leq c_1\Tr[\rho_{\tX^{M}B^n}^m(I-\Gamma_{\tX^{M}B^n}^m)]+c_2\Tr[\rho_{\tX^{M}B^n}^m\sum_{m'\neq m} \Gamma_{\tX^{M}B^n}^{m'}]\\
&= c_1\Tr[\rho_{XB}^{\otimes n}(I-P_n(a))]+c_2(M-1)\Tr[\rho_X^{\otimes n}\otimes \rho_B^{\otimes n}P_n(a)].
\eal
Due to Proposition~\ref{pro: universal correlation detection cq}, for any $m\in\mM$, we have
\bal\label{eq: cq channel error}
\Tr[\rho_{\tX^{M}B^n}^m \qty(I-\Lambda_{\tX^{M}B^n}^m)]&\leq c_1 \frac{1}{{\rm poly}(n)}2^{-n\sup_{0< t< 1}t(\pI^\downarrow_{1-t}(X:B)_\rho-a)}+c_2{\rm poly}(n)\cdot 2^{-na+\log M}.
\eal
Let us take $M=2^{nR}$.
We can show that any $R<I(X:B)_\rho$ is achievable from Eq.~\eqref{eq: cq channel error} as follows:
Take $a_n\coloneqq R+1/\sqrt{n}$.
Since $\lim_{t\to0}\pI^\downarrow_{1-t}(X:B)_\rho=I(X:B)_\rho$ holds, there exists $t^*\in(0,1)$ such that, for a sufficiently large $n$, $a_n<\pI^\downarrow_{1-t^*}(X:B)_\rho$ holds, implying that
\bal
\sup_{0< t< 1}t\qty(\pI^\downarrow_{1-t}(X:B)_\rho-a_n)\geq t^*\qty(\pI^\downarrow_{1-t^*}(X:B)_\rho-a_n)>0
\eal
for sufficiently large $n$.
Combining this with $a_n>R$, we have $\Tr[\rho_{\tX^{M}B^n}^m \qty(I-\Lambda_{\tX^{M}B^n}^m)]\to 0$ for any $m\in\mM$, which concludes the proof.
\end{proof}

\section{Warm-up 2: Universal wiretap channel coding with unknown shared-randomness assistance}\label{sec: private channel}
Following the same argument as in Ref.~\cite{Wilde_2017_position_based} and combining it with convex splitting, we obtain universal wiretap channel coding.
We consider a wiretap channel $W:\mX\to \mD(\mH_B\otimes \mH_E), ~x\mapsto \rho_{BE}^x$. Moreover, we assume that Alice, Bob, and Eve share some randomness whose distribution might be unknown to Alice and Bob.

The goal of wiretap channel coding is to send a message $m\in\mM$ to Bob while keeping it inaccessible to Eve.

\begin{defnboxed}
\begin{defn}[Shared-randomness-assisted classical-quantum wiretap channel coding]
         Let $W:\mX\to \mD(\mH), x\mapsto \rho_{BE}^x$ be a wiretap channel, and let $\mM=\qty{1,\ldots, M}$ be a message set containing $M\coloneqq\abs{\mM}$ elements, and $\mK=\qty{1,\ldots, K}$ be the set of the keys. Suppose that the sender (Alice), the receiver (Bob), and the eavesdropper (Eve) share a random variable $S\sim p_X$. We also consider a map $f:\mM\times \mK\times \mS\to\mX^n$ and a family of POVMs $\qty{\qty{\Lambda_m^{s}}_{m}}_s$ on Bob's system.
         The triplet $(p_X, f,\qty{\qty{\Lambda_m^{s}}_{m}}_s)$ is called a shared-randomness-assisted $(n, M, \ve_e,\ve_s)$-code if the inequalities 
         \bal
         1-\mbE_S\Tr[\Lambda^s_m\rho_{B^n}^{f(m, s)}]&\leq \ve_e,\\
         \mbE_S\|\rho_{E^n}^{f(m,k, s)}-\rho_{E}^{\otimes n}  \|_1&\leq \ve_s
         \eal
         hold for every $m\in\mM$.
         $R>0$ is called an achievable rate if, for any $\ve_e,\ve_s>0$ and sufficiently large $n$, there exists a shared-randomness-assisted $(n, 2^{nR},\ve_e,\ve_s)$-code. The wiretap channel capacity $C_{WT}(W)$ of a wiretap channel $W$ is defined as the supremum of the achievable rates.
\end{defn}
\end{defnboxed}
In the channel-aware scenario, the classical-quantum wiretap channel capacity was established in Refs.~\cite{Devetak_2004_private_classical_capacity, cai_quantum_2004, Renes_Noisy_channel, Wilde_2017_position_based, Radhakrishnan_2017_one_shot_private} as
\bal
C_{WT}(W)=\lim_{n\to\infty}\frac{1}{n}\max_{T- X^n-B^nE^n}\qty(I(T:B^n)-I(T:E^n)).
\eal
Here, the maximization is taken over Markov chains $T-X^n-B^nE^n$, where $X^n$ and $B^nE^n$ are related by the action of the wiretap channel.

In the channel-agnostic scenario, previous studies have investigated the capacity achievable by wiretap channel coding without the full description of the channel~\cite{Boche_2014_secrecy_capacity_compound, Datta_2010_universal_private, Matsuura_2025_universal_resolvability}. 
In particular, a universal wiretap channel coding protocol with shared randomness was presented in Ref.~\cite{Datta_2010_universal_private}, and a fully fixed protocol for universal wiretap channel coding was constructed in Ref.~\cite{Matsuura_2025_universal_resolvability}. Here, we show that wiretap channel coding is possible with assistance from shared randomness whose distribution is unknown.
The setting we consider here is positioned somewhere between Ref.~\cite{Datta_2010_universal_private} and Ref.~\cite{Matsuura_2025_universal_resolvability}, but we provide a novel approach of the channel-agnostic wiretap channel coding based the idea to incorpolate the universal correlation detector with the position-based decoding, together with the convex split technique to ensure security against Eve~\cite{Wilde_2017_position_based}.

\begin{thmboxed}
    \begin{thm}[Universal wiretap channel coding with unknown shared-randomness assistance]
        There exists a sequence of shared-randomness-assisted wiretap channel codes achieving the communication rate $R_1-R_2$ for any wiretap channel $W$ and any probability distribution $p$ on $\mX$ from which one samples the shared randomness,  satisfying $R_1<I(X:B)_\rho$ and $R_2>I(X:E)_\rho$, where $\rho_{XBE}\coloneqq \sum_xp(x)\rho^x_{BE}$ is the output of the wiretap channel $W$.

        \begin{itemize}
    \item {\bf The sender and the receiver need to know}:
    \begin{enumerate}
        \item that the target rates $R_1, R_2$ satisfy $R_1<I(X:B)_\rho$ and $R_2>I(X:E)_\rho$.
    \end{enumerate}
    \item {\bf The sender and the receiver do not need to know}:
    \begin{enumerate}
        \item the full description of the classical-quantum wiretap channel $W$;
        \item the probability distribution $p$ on $\mX$;
        \item the exact values of the mutual information $I(X:B)_\rho,I(X:E)_\rho$, and the wiretap channel capacity $C_{\rm WT}(W)$.
    \end{enumerate}
\end{itemize}
    \end{thm}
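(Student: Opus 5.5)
We follow the position-based wiretap construction of Ref.~\cite{Wilde_2017_position_based}, with the decoder replaced by the universal detector of Proposition~\ref{pro: universal correlation detection cq}, and combine it with the unipartite convex splitting of Lemma~\ref{lem: unipartite convex splitting} for secrecy. Set $M=2^{nR_1}$ positions and index them as pairs $(m,k)\in\mM\times\mK$, with $|\mM|=2^{n(R_1-R_2)}$ and $|\mK|=2^{nR_2}$. Alice, Bob and Eve share $nM$ copies of the classical correlated state $\sum_x p(x)\ketbra{x}{x}_{X'}\otimes\ketbra{x}{x}_X$. Alice holds $X'$; Bob holds a copy $X$; Eve is likewise given the shared randomness, so secrecy has to be proven even with Eve holding $\tX^M$. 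To send $m$, Alice picks $k\in\mK$ uniformly at random and transmits the block $\tX_{(m,k)}$ through $W^{\otimes n}$.

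\textbf{Reliability.} Bob decodes the full pair $(m,k)$ with the position-based decoder of Theorem~\ref{thm: universal cq channel coding}: the tests $\Gamma^{(m,k)}=I\otimes P^n_{\tX_{(m,k)}B^n}(a)$ are combined by the square-root construction in Eq.~\eqref{eq: pretty good cq}, and then the $k$ index is discarded. The same steps as before apply: the Hayashi--Nagaoka inequality (Lemma~\ref{lem: Hayashi Nagaoka inequality}) and Proposition~\ref{pro: universal correlation detection cq} with $a_n=R_1+1/\sqrt n$ give
\[
\ve_e\leq c_1\frac{1}{{\rm poly}(n)}2^{-n\sup_t t(\pI^\downarrow_{1-t}(X:B)_\rho-a_n)}+c_2{\rm poly}(n)2^{-na_n+nR_1}\to 0
\]
whenever $R_1<I(X:B)_\rho$. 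The decoder uses only $a_n$ and the channel-independent test, so it requires no knowledge of $W$ or $p$.

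\textbf{Secrecy.} For fixed $m$, average over the uniform key $k$. Eve's joint state on the $2^{nR_2}$ blocks $\{\tX_{(m,k)}\}_k$ and $E^n$ is then exactly the convex-split state $\tau$ of Lemma~\ref{lem: unipartite convex splitting}, with $A=\tX$, $B=E^n$, the underlying state $\rho_{XE}^{\otimes n}$, and $K=2^{nR_2}$ copies. The remaining blocks are product and independent of $m$. Lemma~\ref{lem: unipartite convex splitting} therefore bounds the distance between Eve's state and the message-independent product $\rho_X^{\otimes nM}\otimes\rho_E^{\otimes n}$ by
\[
2^{-\sup_{1\leq\alpha\leq2}\frac{\alpha-1}{\alpha}(nR_2-\sI^\downarrow_\alpha(X^n:E^n)_{\rho^{\otimes n}})}.
\]
We then use additivity of the sandwiched Rényi mutual information on tensor powers, $\sI^\downarrow_\alpha(X^n:E^n)=n\sI^\downarrow_\alpha(X:E)$, together with its continuity at $\alpha\to1$. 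Choosing $\alpha$ close to $1$ with $\sI^\downarrow_\alpha(X:E)<R_2$ makes the exponent linear and positive, so $\ve_s\to0$. Tracing out $\tX^M$ then yields the criterion in the definition. This step involves only the codebook structure and no decoder, so universality holds here as well.

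\textbf{Main obstacle.} The delicate step is matching the secrecy bound to the code definition, which requires security for each message $m$ rather than only on average, together with the role of the unknown $p$. Each $m$ uses a disjoint group of blocks that is isomorphic to every other group, so the bound holds uniformly in $m$ by symmetry. The condition $R_2>I(X:E)_\rho$ is exactly what is needed to pick $\alpha>1$ with $R_2>\sI^\downarrow_\alpha(X:E)_\rho$, using $\lim_{\alpha\to1}\sI^\downarrow_\alpha=I$. With $\ve_e,\ve_s\to0$, the rate $R_1-R_2$ is achievable.
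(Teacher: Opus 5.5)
Your proposal is correct and follows essentially the same route as the paper. It uses position-based decoding of the pair $(m,k)$ with the universal cq detector of Proposition~\ref{pro: universal correlation detection cq}, and unipartite convex splitting over the uniformly random key for secrecy; the only addition is that you state explicitly the (sub)additivity $\sI^\downarrow_\alpha(X^n:E^n)_{\rho^{\otimes n}}\le n\,\sI^\downarrow_\alpha(X:E)_\rho$, which the paper uses implicitly.

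One small correction concerns your last step. The criterion $\mbE_S\|\cdot\|_1$ is not obtained by tracing out the shared-randomness blocks $\tX^M$: that would only control the $S$-averaged Eve state, which equals $\rho_E^{\otimes n}$ trivially. Instead it follows directly from the bound you already have. For states classical on $\tX^M$, $\|\rho^m_{\tX^M E^n}-\rho_{\tX^M}\otimes\rho_E^{\otimes n}\|_1=\mbE_S\|\rho^{m,S}_{E^n}-\rho_E^{\otimes n}\|_1$, where $\rho^{m,S}_{E^n}$ is the key-averaged Eve state given $S$. This is exactly the identity the paper invokes.
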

\end{thmboxed}

\begin{proof}
Let the message set be $\mM\coloneqq\qty{1,\ldots, M}$ and the set of random keys be $\mK\coloneqq\qty{1, \ldots, K}$, with $M\coloneqq2^{n(R_1-R_2)}$ and $K\coloneqq2^{nR_2}$.
Let Alice, Bob, and Eve share $nMK$ copies of the classical state
\bal
\rho_{XX'X''}\coloneqq\sum_{x\in\mX}p(x)\ketbra{x}{x}_X\otimes \ketbra{x}{x}_{X'}\otimes \ketbra{x}{x}_{X''}.
\eal
Here, Bob holds the register $X$, Alice holds $X'$, and Eve holds $X''$.
As in the previous section, we write $X^n=\tX, X'^n=\tX', X''^n=\tX''$.
We label $nMK$ copies of the systems ${XX'X''}$ in lexicographic order, and write the state as 
\bal
\rho_{XX'X''}^{\otimes MK}=\rho_{\tX_{1,1}\tX_{1,1}'\tX_{1,1}''}\otimes \rho_{\tX_{1,2}\tX_{1,2}'\tX_{1,2}''} \otimes \cdots\otimes \rho_{\tX_{M,K}\tX_{M,K}'\tX_{M,K}''}.
\eal
 
Alice sends the message $m\in\mM$, together with the locally generated random key $k\in\mK$. To send the pair $(m,k)$, Alice chooses the classical slot $\tX_{m,k}$ and sends it through $n$ copies of the given wiretap channel $W^{\otimes n}$.

The reduced state $\rho_{\tX^{MK}\tX''^{MK}B^nE^n}$ on Bob's and Eve's systems is written as
\bal
\rho_{\tX^{MK}\tX''^{MK}B^nE^n}^{(m,k)}=\rho_{\tX_{1,1}\tX_{1,1}''}\otimes \cdots\otimes\rho_{\tX_{m,k-1}\tX_{m,k-1}''}\otimes \rho_{\tX_{m,k}\tX_{m,k}''B^n E^n}\otimes \rho_{\tX_{m,k+1}\tX_{m,k+1}''}\otimes \cdots\otimes \rho_{\tX_{M,K}\tX_{M,K}''}.
\eal
Bob's reduced state is
\bal
\rho_{\tX^{MK}B^n}^{(m,k)}=\rho_{\tX_{1,1}}\otimes \cdots\otimes\rho_{\tX_{m,k-1}}\otimes \rho_{\tX_{m,k}B^n }\otimes \rho_{\tX_{m,k+1}}\otimes \cdots\otimes \rho_{\tX_{M,K}}.
\eal
Following the same discussion as in Section~\ref{sec: cq channel coding}, Bob can decode $(m,k)$ using the measurement
\bal
\Lambda_{\tX^{MK}B^n}^{(m,k)}=\qty(\sum_{(m,k)\in\mM\times \mK}\Gamma_{\tX^{MK}B^n}^{(m,k)})^{-\frac{1}{2}} \Gamma_{\tX^{MK}B^n}^{(m,k)} \qty(\sum_{(m,k)\in\mM\times \mK}\Gamma_{\tX^{MK}B^n}^{(m,k)})^{-\frac{1}{2}},
\eal
where $\Gamma_{\tX^{MK}B^n}^{(m,k)}$ is an operator defined as
\bal
\Gamma_{\tX^{MK}B^n}^{(m,k)}\coloneqq I_{\tX_{1,1}}\otimes \cdots \otimes I_{\tX_{m, k-1}}\otimes I_{\tX_{m,k+1}}\otimes \cdots\otimes I_{\tX_{M,K}}\otimes P^n_{\tX_{m,k}B^n}(a).
\eal
Following the same argument as in Section~\ref{sec: cq channel coding}, Bob's decoding error probability is bounded as 
\bal
\Tr[\rho_{\tX^{MK}B^n}^{(m,k)} \qty(I-\Lambda_{\tX^{M}B^n}^{(m,k)})]&\leq c_1 \frac{1}{{\rm poly}(n)}2^{-n\sup_{0< t< 1}t(\pI^\downarrow_{1-t}(X:B)_\rho-a)}+c_2{\rm poly}(n)\cdot 2^{-na+\log MK}.
\eal

Now, let us discuss security against Eve.
When $(m,k)$ is fixed, Eve's reduced state is
\bal
\rho_{\tX^{MK}E^n}^{(m,k)}=\rho_{\tX_{1,1}}\otimes \cdots\otimes\rho_{\tX_{m,k-1}}\otimes \rho_{\tX_{m,k}E^n }\otimes \rho_{\tX_{m,k+1}}\otimes \cdots\otimes \rho_{\tX_{M,K}}.
\eal
Here, we relabel $X''$ as $X$.
However, since Alice chooses $k$ uniformly at random, we have
\bal
{}&\rho_{\tX^{MK}E^n}^m\coloneqq\frac{1}{K}\sum_{k=1}^K \rho_{\tX^{MK}E^n}^{(m,k)}\\
&=\rho_{\tX_{1,1}}\otimes \cdots\otimes\rho_{\tX_{m-1,K}}\otimes \qty[\frac{1}{K}\sum_{k=1}^K\rho_{\tX_{m,1}}\otimes \cdots\otimes\rho_{\tX_{m,k-1}}\otimes \rho_{\tX_{m,k}E^n}\otimes\rho_{\tX_{m,k+1}}\cdots\otimes\rho_{\tX_{m,K}}]\otimes \rho_{\tX_{m+1,1}}\otimes \cdots\otimes\rho_{\tX_{M,K}}.
\eal

Due to the convex splitting lemma in Lemma~\ref{lem: unipartite convex splitting}, we have
\bal
\left\|\frac{1}{K}\sum_{k=1}^K\rho_{\tX_{m,1}}\otimes \cdots\otimes\rho_{\tX_{m,k-1}}\otimes \rho_{\tX_{m,k}E^n}\otimes\rho_{\tX_{m,k+1}}\cdots\otimes\rho_{\tX_{m,K}}- \rho_{\tX^K}\otimes \rho^{\otimes n}_{E}\right\|_1\leq  2\cdot2^{-n\sup_{1\leq \alpha \leq 2}\frac{\alpha-1}{\alpha}(\log K-\sI^\downarrow_\alpha(X:E)_\rho)}.
\eal
Here, note that the random variable $S$ corresponds to the classical register $X$ in this case.
Since the trace distance is invariant under taking the tensor product with a fixed state, i.e., $\|\rho-\sigma\|_1=\|\rho\otimes\tau-\sigma\otimes \tau\|_1$,
 it holds that 
\bal
 \mbE_S\|\rho_{E^n}^{f(m,k, s)}-\rho_{E}^{\otimes n}  \|_1=\|\rho_{\tX^{MK}E^n}^m -\rho_{\tX^{MK}}\otimes \rho^{\otimes n}_{E}\|_1\leq 2\cdot 2^{-n\sup_{1\leq \alpha \leq 2}\frac{\alpha-1}{\alpha}(\log K-\sI^\downarrow_\alpha(X:E)_\rho)}.
\eal
Let us show that both error criteria can be satisfied. 
Note that we take $K=2^{nR_2}$ with $R_2>I(X:E)_\rho$. Since $\lim_{\alpha\to 1}\sI^\downarrow_\alpha(X:E)_\rho=I(X:E)_\rho$ holds and $\sI^\downarrow_\alpha(X:E)_\rho$ is monotonically increasing in $\alpha$, there exists an $\alpha^*$ satisfying $R_2-\sI^\downarrow_{\alpha^*}(X:E)_\rho>0$. From this, we can see that $\|\rho_{\tX^{MK}E^n}^m -\rho_{\tX^{MK}}\otimes \rho^{\otimes n}_{E}\|_1$ decays exponentially in the limit $n\to\infty$.
On the other hand, following the discussion of Theorem~\ref{thm: universal cq channel coding}, we can see that the decoding error for the pair $(m,k)$ of the message $m$ and the key $k$ decays exponentially as long as $\frac{1}{n}\log MK=R_1<I(X:B)_\rho$ holds, which concludes the proof.

\end{proof}

\section{Universal one-way secret key distillation from classical-quantum-quantum states}
Secret key distillation is the task of sharing a maximally correlated classical state between two parties, Alice and Bob, while keeping it uncorrelated with the eavesdropper's system~\cite{Horodecki_2005_secret_key, Devetak_2005_distillation, Christandl_2007_unifying_classical_quantum, khatri_2021_second_order}. Since the eavesdropper has no access to the state shared by Alice and Bob, this shared state can serve as a secret key between the two parties, which can be used in a one-time pad, an information-theoretically secure communication protocol.

Suppose that Alice, Bob, and the eavesdropper Eve have a tripartite classical-quantum-quantum state
\bal
\rho_{XBE}\coloneqq\sum_{x\in\mX}p(x)\ketbra{x}{x}_X\otimes \rho_{BE}^x,
\eal
where Alice holds the classical state. We assume that Alice knows the probability distribution $p(x)$, but Alice and Bob do not know anything else about the shared state $\rho$.
This assumption can be justified, for instance, in the following setting:
Suppose that Alice has access to a classical-quantum wiretap channel $W:\mX\to \mD(\mH_B\otimes \mH_E), ~x\mapsto \rho^{BE}_x$, and inputs symbols sampled from the probability distribution $p(x)$. Then, the resulting state is exactly $\rho_{XBE}$.

Specifically, we consider secret key distillation under local operations and one-way public communication, where Alice announces classical information to Bob and Eve.
The goal of secret key distillation is to convert the tripartite state $\rho_{XBE}$ to the state $\Phi^K_{K_AK_B}\otimes \rho_{EL}$, where $\Phi^K_{K_AK_B}\coloneqq\frac{1}{K}\sum_{i=1}^K\ketbra{i}{i}_{K_A}\otimes \ketbra{i}{i}_{K_B}$ is the maximally correlated classical state. The auxiliary system $L$ is the classical register storing the information announced by Alice, and $\rho_{EL}$ is the marginal of the final state. 
\begin{defnboxed}
    \begin{defn}[One-way secret key distillation]
        A one-way secret key distillation protocol consists of an encoder $\mE_{X\to K_AL}$ on Alice's side and a decoder $\mD_{LB\to K_B}$ on Bob's side. Given a quantum state $\rho_{XBE}$, the final state is $\rho_{K_AK_BEL}\coloneqq\mD\circ\mE(\rho_{XBE})$. Given a quantum state $\rho_{XBE}$, a secret key distillation protocol $(\mE,\mD)$ is called a $( K, \ve)$-protocol if the resulting state $\rho_{K_AK_BEL}$ satisfies
        \bal\label{eq:  def security key distillation}
        \left\|\rho_{K_AK_BEL}-\Phi^K_{K_AK_B}\otimes \rho_{EL}\right\|_1\leq \ve,
        \eal
        where $\Phi^K_{K_AK_B}\coloneqq\frac{1}{K}\sum_{i=1}^K\ketbra{i}{i}\otimes \ketbra{i}{i}$ is the maximally correlated state shared by Alice and Bob, and $\rho_{EL}\coloneqq\Tr_{K_AK_B}\rho_{K_AK_BEL}$ is the reduced state on Eve's systems.

        The one-way distillable key $K_D^\rightarrow(\rho_{XBE})$ of the tripartite state $\rho_{XBE}$ is defined as the supremum of the rates $R>0$ such that, for any $\ve>0, \delta>0$ and sufficiently large $n$, there exists a $(2^{n(R-\delta)},\ve)$ secret key distillation protocol for $\rho^{\otimes n}_{XBE}$.
    \end{defn}
\end{defnboxed}
We remark that the definition of secret key distillation given in Eq.~\eqref{eq:  def security key distillation} is known to satisfy composability, implying that one can use secret key distillation as a primitive within a larger cryptographic protocol~\cite{Renner_2004_universally_composable, Benor_2004_universal_composable, Portmann_2022_security}.

In Ref.~\cite{Devetak_2005_distillation}, the one-way distillable key from the classical-quantum-quantum state $\rho_{XBE}$ is characterized as 
\bal\label{eq: state aware key distillaiton}
K^\rightarrow_D(\rho)&=\lim_{n\to\infty}\frac{1}{n}K^{(1)}(\rho^{\otimes n}),\\
K^{(1)}(\rho)&\coloneqq\max_{T-U-X}\qty[I(U:B|T)_{\rho'}-I(U:E|T)_{\rho'}],
\eal
where $\rho'$ is the classical-quantum-quantum state obtained by applying classical processing to the classical register.

In the following theorem, we construct a one-way secret key distillation protocol from a classical-quantum-quantum state achieving the distillable key, which works only with prior knowledge about the classical part $p(x)$.
\begin{thmboxed}
    \begin{thm}[Universal one-way secret key distillation from the output of an unknown wiretap channel]\label{thm: universal secret key distillation}
        For any probability distribution $p$ on $X$, there exists a sequence of one-way secret key distillation protocols depending only on $p$ that achieve the key distillation rate $R_1-R_2$ for any tripartite state $\rho_{XBE}\coloneqq\sum_{x\in\mX}p(x)\otimes \rho_{BE}^x$ and any Markov chain $X\to U\to T$ satisfying $R_1<I(U:B|E')_{\rho'}$ and $R_2>I(U:E|E')_{\rho'}$. Here, $\rho'_{UBEB'E'}$ is the tripartite state defined as
        \bal
        \rho'_{UBEB'E'}:=\sum_{t,u,x}R(t|u)Q(u|x)p(x)\ketbra{u}{u}^U\otimes \rho^x_{BE}\otimes \ketbra{t}{t}^{B'}\otimes \ketbra{t}{t}^{E'},
        \eal
        where $Q(u|x), R(t|u)$ are conditional probability distribution, respectively.
        \begin{itemize}
    \item {\bf The sender and the receiver need to know}:
    \begin{enumerate}
        \item the target rates $R_1, R_2$ satisfy $R_1<I(U:B|E')_{\rho'}$ and $R_2>I(U:E|E')_{\rho'}$;
        \item The probability distribution $p$ on $\mX$.
    \end{enumerate}
    \item {\bf The sender and the receiver do not need to know}:
    \begin{enumerate}
        \item the full description of the classical-quantum-quantum state $\rho_{XBE}$;
        \item the exact values of the mutual information $I(U:B|E')_{\rho'}, I(U:E|E')_{\rho'}$, and the one-way distlllable key $K^\to_D(\rho_{XBE})$.
    \end{enumerate}
\end{itemize}
    \end{thm}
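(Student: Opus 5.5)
The plan is to reduce key distillation to the wiretap-coding construction of Section~\ref{sec: private channel}, with the shared randomness replaced by correlations that Alice manufactures locally from her classical register. Alice applies the Markov kernels locally to each copy of $X$. She samples $u_i\sim Q(\cdot|x_i)$ and then $t_i\sim R(\cdot|u_i)$, and she announces $t^n$ publicly. The announced string plays the role of $E'$ (Eve's copy) and $B'$ (Bob's copy). The state of $U^nB^nE^nB'^nE'^n$ is then $\rho'^{\otimes n}$. Conditioned on the public string $t^n$, Bob and Eve share a classical label, so the problem becomes one of conditional correlation detection, conditioned on $T$.

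The protocol is a one-way random-binning (position-based) scheme. I would not use preshared codebook slots. Alice instead uses the fact that she knows $p$, and hence the law of $(U,T)$, since $Q$ and $R$ are her own choice. This lets her use a fixed sequence of universal tests that depend only on these classical data. The construction runs as follows.
\begin{enumerate}
\item Alice hashes $u^n$ into two indices: a bin $\ell\in[2^{n(H(U|T)-R_1)}]$, which she announces together with $t^n$, and a key $k\in[2^{n(R_1-R_2)}]$. The set of $u^n$ sequences of a given conditional type is split by a random (ultimately derandomized) partition, in the spirit of the convex splitting of Lemma~\ref{lem: unipartite convex splitting}.
\item Bob must recover $u^n$ from $(\ell,t^n,B^n)$. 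Within each conditional type class given $t^n$, he applies the decoder of Theorem~\ref{thm: universal cq channel coding}. This decoder is the Hayashi--Nagaoka normalization of the cq correlation detector $P^n(a)$ from Proposition~\ref{pro: universal correlation detection cq}, built type-wise for the cq state $\rho'_{UBE'}$ conditioned on $E'=t$. Since $P^n$ is built from universal symmetric states indexed by the joint sequence $(u^n,t^n)$, it needs no channel knowledge.
\item The error analysis mirrors Section~\ref{sec: cq channel coding}. There are $2^{n(H(U|T)-H(U|T)+R_1)}$ competing candidates per bin, which is effectively $2^{n(H(U|T)-\text{bin rate})}=2^{nR_1}$. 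Summing over these confusable sequences gives a type II contribution of at most $\mathrm{poly}(n)2^{n(R_1-a)}$. The type I term is bounded by the conditional Sibson exponent. By taking $a_n\downarrow R_1$, the error vanishes whenever $R_1<I(U:B|E')$.
\end{enumerate}

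Security against Eve will come from convex splitting applied conditionally on $t^n$. Eve holds $\ell$, $t^n$ and $E^n$, while the key $k$ indexes $2^{n(R_1-R_2)}$ sub-bins. Averaged over the $2^{nR_2}$ elements remaining within each (bin, key) cell, Lemma~\ref{lem: unipartite convex splitting} gives a trace distance of order $2^{-n\frac{\alpha-1}{\alpha}(R_2-\sI^\downarrow_\alpha(U:E|E'))}$ to a product state. Using $R_2>I(U:E|E')$ and continuity in $\alpha$, this bound decays. Because the security criterion is the composable trace-distance criterion, I combine decoding correctness (via Lemma~\ref{lem: continuity of the prbability} and the Fuchs--van de Graaf inequality) with secrecy through the triangle inequality to get the overall $\ve$ bound. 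Finally, the random binning must be derandomized so that a single protocol depending only on $p$ works. Here Alice's local randomness suffices, because she can share the codebook description publicly without harming security.

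The main obstacle is making convex splitting apply when the ``shared randomness'' is not i.i.d.\ independent copies but is Alice's own correlated data $u^n$. Convex splitting needs the slots to be independent copies of $\rho_U$, whereas binning produces sequences drawn from a single realization. I would first try to simulate the position-based picture exactly. Alice generates $2^{nR_1}$ i.i.d.\ copies of $\rho_{U|T}$ as a codebook by public randomness, and she sends the index of the slot that matches her actual $u^n$ in type class. The discrepancy from the actual $u^n$ would be controlled by a likelihood-encoder / soft covering argument, which is itself a convex-splitting bound on the $U$--$X$ correlation with rate exceeding $I(U:X|T)$. If this step fails, I would fall back on privacy amplification by universal$_2$ hashing, with a leftover-hash bound expressed through $\sI_\alpha$.
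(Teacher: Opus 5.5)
There is a genuine gap in the secrecy step of your binning scheme; your preprocessing and the announcement of $t^n$ are fine. Lemma~\ref{lem: unipartite convex splitting} concerns $M$ \emph{independent} slots, each distributed as $\rho_U$, exactly one of which is correlated with Eve. In your scheme there is only one random sequence $u^n$. Given the cell $(\ell,k)$, Eve holds $\frac{1}{|C_{\ell,k}|}\sum_{u^n\in C_{\ell,k}}\rho^{u^n}_{E^n}$, an average over a subset of a conditional type class. This has to be compared with the type-class average, not with $\rho_E^{\otimes n}$. That is a constant-composition soft-covering statement, and the lemma does not supply it.

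You notice this, but your repair does not close it. A public i.i.d.\ codebook of $2^{nR_1}$ words, plus a slot that merely matches $u^n$ in type, gives a codeword that is not the sequence correlated with $B^nE^n$. Forcing the right correlation with a likelihood encoder needs codebook rate above $I(U:X|T)\ge I(U:B|T)>R_1$. At that rate Bob can no longer decode the slot without an extra announced bin layer (a Wyner--Ziv structure), which you have not analysed.

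The decoding step also needs an adjustment. Proposition~\ref{pro: universal correlation detection cq} bounds the acceptance probability of an \emph{independent} $p^n$-distributed competitor, whereas binning produces fixed in-bin competitors. You would need the per-sequence bound $\Tr[\rho_B^{\otimes n}\{\sigma^{u'^n}_{B^n}\ge 2^{na}\sigma^{U,n}_{B^n}\}]\le\mathrm{poly}(n)2^{-na}$, plus an announced type to control the number of competitors.

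The missing idea, which is the paper's proof, is to put the real data \emph{into} the codebook rather than approximate it by a codeword. Since Alice knows $p$ (and her own $Q,R$), she samples $|\mK||\mR|-1$ dummy blocks i.i.d.\ from the same law. She places her genuine block at a uniformly random position $(k,r)$ and broadcasts all blocks. The genuine block has the same marginal law as the dummies. So, conditioned on $(k,r)$, the broadcast blocks together with $B^nE^n$ form exactly the shared-randomness wiretap state of Section~\ref{sec: private channel}. Bob then finds $(k,r)$ by position-based decoding with $P^n(a)$ and Lemma~\ref{lem: Hayashi Nagaoka inequality}. Eve's view averaged over $r$ is nearly product by Lemma~\ref{lem: unipartite convex splitting}, and $k$ is the key. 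Your own observation that public randomness is free is what licenses this.

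No conditional detector is needed either. The paper reduces via Devetak--Winter to the rate $I(X:B)-I(X:E)$. Concretely, you absorb the announced $T$ into $BB'$ and $EE'$ and use $I(U:BB')-I(U:EE')=I(U:B|T)-I(U:E|T)$, shifting both rates by $I(U:T)$, which Alice can compute.

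Your final fallback would be a genuinely different valid route: two-universal hashing of $u^n$ to $(\ell,k)$ with total output length about $n(H(U|T)-R_2)<nH(U|ET)$, together with a type-restricted universal Slepian--Wolf decoder. However, it rests on the leftover hash lemma and the smooth min-entropy AEP. These lie outside this paper's toolkit, and your proposal only names them.
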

\end{thmboxed}
In Ref.~\cite{Boche_2016_secret_key}, the performance of one-way secret key distillation is studied in the compound setting. Although our construction does not directly allow us to maximize over the possible Markov chains because we do not know the input state, we present a much simpler proof of the existence of a universal one-way secret key distillation protocol. Moreover, this result strengthens the previous results in Ref.~\cite{khatri_2021_second_order} on compound secret key distillation with a fixed marginal.

\begin{proof}
    Following the discussion in the proof of Theorem 6 in Ref.~\cite{Devetak_2005_distillation}, it suffices to show that one can universally achieve the key distillation rate $I(X:B)_\rho-I(X:E)_\rho$ from the state $\rho_{XBE}$.
    The idea of the universal construction follows the strategy in Ref.~\cite{khatri_2021_second_order}. 
\begin{enumerate}
    \item Alice prepares $n$ copies of the state $\rho_{XBE}$ and duplicates the classical part, resulting in $\rho_{XX'X''BE}^{\otimes n}$. She labels her register $X^n$ by $(k,r)$, where $k\in\mK$ is the key which she would like to send to Bob, and $r\in\mR$ is an additional randomizer to guarantee security against Eve.
    \item Alice also prepares $\abs{\mK}\abs{\mR}-1$ blocks of a dummy classical state $\rho_{XX'X''}^{\otimes n}$ using her knowledge of the input distribution. Then, she sends all $\abs{\mK}\abs{\mR}$ blocks of $X'$ and $X''$ through the public channel in lexicographic order, while putting the registers $X'$ and $X''$ of the state $\rho_{XX'X''BE}^{\otimes n}$ in the $(k,r)$\,th block.
    \item Bob decodes the pair $(k,r)$ using the universal correlation detector, while Eve cannot due to convex splitting. Following the discussion in Section~\ref{sec: private channel}, we obtain the key distillation rate $I(X:B)_\rho-I(X:E)_\rho$.
\end{enumerate}
\end{proof}

As discussed above, Theorem~\ref{thm: universal secret key distillation} shows that the optimal one-way key distillation rate discussed in Ref.~\cite{Devetak_2005_distillation} is possible universally. 
Our statement does not maximize over all possible Markov chains for preprocessing and regularization. One can, of course, do so without any information about the given state. However, experimenters without state information do not know how to preprocess the classical part or how large to choose the blocklength $n$ to achieve a key distillation rate beyond the single-letter expression in Eq.~\eqref{eq: state aware key distillaiton}.

Several problems along this line remain open: starting from a fully quantum tripartite state, the optimal protocol for one-way secret key distillation is for Alice to perform a POVM and then apply the optimal protocol for the resulting classical-quantum-quantum post-measurement state. In the state-aware scenario, one can tailor a POVM to maximize the distillable key, while in the state-agnostic setting one cannot.
Moreover, designing a universal protocol for two-way secret key distillation remains open for a similar reason. The optimal performance of two-way secret key distillation is achieved by applying preprocessing using two-way local operations and public communication, followed by the one-way secret key distillation protocol for fully quantum states~\cite{Khatri_textbook}. However, in the state-agnostic scenario, we do not know which preprocessing to apply to optimize the distillable key.

\section{Universal entanglement-assisted classical communication}
Entanglement-assisted classical communication is the task of sending classical information through a given noisy quantum channel, while the sender and the receiver have access to an unlimited amount of preshared entanglement. In Refs.~\cite{Bennett_2002_Entanglement_assisted, Holevo_2002_entanglement_assisted}, it is shown that the capacity of entanglement-assisted classical communication is fully characterized by the mutual information $I(\mN)\coloneqq\sup_{\psi_{AC}}I(B:C)_{\mN(\psi)}$ of the quantum channel $\mN:\mD(\mH_A)\to\mD(\mH_B)$.

Entanglement-assisted classical communication in the channel-agnostic scenario has been studied in the compound setting~\cite{boche_entanglement-assisted_2017, Berta_2017_compound_entanglement_assisted}, where one is given a quantum channel from a fixed set and maximizes the worst-case communication rate without complete information about the given channel. It is shown in Ref.~\cite{boche_entanglement-assisted_2017, Berta_2017_compound_entanglement_assisted} that the worst-case communication rate for the set $\mI\coloneqq\qty{\mN_s}_{s\in S}$ with the index set $S$ is characterized as
\bal
C_{\rm EA}(\mI)\coloneqq\sup_{\rho_{AC}\in\mD(\mH_{AC})}\inf_{s\in S}I(B:C)_{\mN_s(\rho)}
\eal
Furthermore, in Ref.~\cite{Anshu_2019_hypothesis_testing}, the one-shot capacity of entanglement-assisted classical communication in the compound setting is investigated under the assumption that the set of possible channels is finite.

We show that universal entanglement-assisted classical communication is achievable using the semi-universal correlation detector in Proposition~\ref{pro: semi universal detector}. This recovers the result of Ref.~\cite{boche_entanglement-assisted_2017, Berta_2017_compound_entanglement_assisted} with a simpler proof.

\subsection{Fixed-length communication}
We first formalize entanglement-assisted classical communication.
\begin{defnboxed}
    \begin{defn}[Entanglement-assisted classical communication]
        Let $\mN:\mD(\mH_A)\to\mD(\mH_B)$ be a quantum channel. An $(M, \ve)$ entanglement-assisted classical communication protocol for a quantum channel $\mN$ is specified by a tuple $(\theta_{EC},\qty{\mE^m_{E\to A}}_m,\qty{\Lambda_{BC}^m}_m)$ satisfying 
        \bal
        1-\Tr\qty[\Lambda^{m}_{BC}\qty(\mN_{A\to B}\circ\mE^m_{E\to A}( \theta_{EC}))]\leq \ve,~~\forall m\in\mM=[M]
        \eal
        Here, $\theta_{EC}$ is the pre-shared entanglement between the sender and the receiver, $\mE^m$ is the encoder which encodes the message $m$, and the POVM $\qty{\Lambda_{BC}^m}_m$ is the decoder.
        We call $R>0$ an achievable rate for the channel $\mN$ if, for any $\ve>0$ and sufficiently large $n$, there exists a $(2^{nR},\ve) $ entanglement-assisted classical communication protocol for the quantum channel $\mN^{\otimes n}$.
        The entanglement-assisted classical capacity $C_{\rm EA}(\mN)$ of a quantum channel $\mN$ is defined as the supremum of the achievable rates.
    \end{defn}
\end{defnboxed}

As mentioned above, the entanglement-assisted classical capacity of $\mN$ is represented as
\bal
C_{\rm EA}(\mN)=I(\mN)\coloneqq\sup_{\psi_{AC}}I(B:C)_{\mN(\psi)}.
\eal
Here, we show that the channel mutual information with a fixed preshared entangled state is universally achievable, using the semi-universal correlation detector.

\begin{thmboxed}
    \begin{thm}[Universal entanglement-assisted classical communication]\label{thm: universal EACC fixed length}
        Let $\theta_{AC}$ be an arbitrary bipartite state. Then, there exists a fixed encoder-decoder pair for entanglement-assisted classical communication that achieves the communication rate $R$ for any quantum channel $\mN:\mD(\mH_A)\to\mD(\mH_{B})$ with $R<I(B:C)_{\mN(\theta)}$.

    \begin{itemize}
    \item {\bf The sender and the receiver need to know}:
    \begin{enumerate}
        \item The full-description of the shared state $\theta_{EC}$;
        \item the target rate $R$ satisfies $R<I(B:C)_{\mN(\theta)}$.
    \end{enumerate}
    \item {\bf The sender and the receiver do not need to know}:
    \begin{enumerate}
        \item the full description of the channel $\mN$;
        \item the exact values of the mutual information $I(B:C)_{\mN(\theta)}$, and the entanglement-assisted classical capacity $C_{\rm EA}(\mN)$.
    \end{enumerate}
\end{itemize}
    \end{thm}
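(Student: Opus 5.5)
We will combine position-based decoding with the semi-universal correlation detector of Proposition~\ref{pro: semi universal detector}. Our construction mirrors the proof of Theorem~\ref{thm: universal cq channel coding}, with the classical shared randomness replaced by the pre-shared state $\theta_{AC}$. Let $M=2^{nR}$. Alice and Bob share $M$ blocks $\theta_{A_1C_1}^{\otimes n}\otimes\cdots\otimes\theta_{A_MC_M}^{\otimes n}$, where Alice holds $\tA_m\coloneqq A_m^n$ and Bob holds $\tC_m\coloneqq C_m^n$. To send $m$, Alice inputs the block $\tA_m$ into $\mN^{\otimes n}$ and discards the other blocks. Bob's state is then
\[
\rho^m_{\tC^M B^n}=\rho_{B^nC_m^n}\otimes\bigotimes_{m'\neq m}\theta_{C}^{\otimes n},\qquad \rho_{BC}\coloneqq\mN_{A\to B}(\theta_{AC}).
\]
The key observation is that the marginal of $\rho_{BC}$ on $C$ equals $\theta_C$, which is known to both parties, and this holds for every channel $\mN$. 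Proposition~\ref{pro: semi universal detector} therefore applies with the unknown system playing the role of ``$A$'' (here $B$) and the known-marginal system playing the role of ``$B$'' (here $C$ with $\rho_C=\theta_C$). We take the test $P^n_{B^n\tC_m}(a,\theta_C)\coloneqq\{\sigma^{U,n}_{B^nC^n}>2^{na}\sigma^{U,n}_{B^n}\otimes\theta_C^{\otimes n}\}$. It depends only on $\theta_C$ and $a$, not on $\mN$.

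We then set $\Gamma^m\coloneqq P^n_{B^n\tC_m}(a,\theta_C)\otimes I_{\text{other }\tC}$ and build the square-root measurement $\widetilde\Lambda^m=(\sum_{m'}\Gamma^{m'})^{-1/2}\Gamma^m(\sum_{m'}\Gamma^{m'})^{-1/2}$, completed to a POVM $\Lambda^m$ exactly as in Eq.~\eqref{eq: pretty good cq}. When $m'\neq m$, the state on $B^n\tC_{m'}$ is $\rho_B^{\otimes n}\otimes\theta_C^{\otimes n}$, the product of the marginals. Lemma~\ref{lem: Hayashi Nagaoka inequality} therefore gives
\[
\Pr[\text{error}\mid m]\le c_1\Tr[\rho_{BC}^{\otimes n}(I-P^n)]+c_2(M-1)\Tr[\rho_B^{\otimes n}\otimes\theta_C^{\otimes n}P^n].
\]
By Proposition~\ref{pro: semi universal detector} this is at most $c_1\,{\rm poly}(n)^{-1}2^{-nt(\pI^\downarrow_{1-t}(C:B)_\rho-a)}+c_2\,{\rm poly}(n)2^{-n(a-R)}$ for every $t\in(0,1)$.

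To finish, we choose $a_n=R+1/\sqrt n$, so the second term vanishes. Since $\pI^\downarrow_{1-t}(C:B)_\rho\to I(B:C)_\rho$ as $t\to0$ and $R<I(B:C)_{\mN(\theta)}$, there is a $t^*$ with $\pI^\downarrow_{1-t^*}>a_n$ for large $n$, so the first term also decays. Encoder and decoder depend only on $\theta$ and $R$. The step needing most care is verifying that the hypotheses of Proposition~\ref{pro: semi universal detector} are met: the known marginal must be on the $C$ side, and the test must be independent of $\mN$. Both hold because the channel acts only on $A$ and so leaves $\theta_C$ unchanged. The remaining subtlety is matching the role-swapped Petz mutual information $\pI^\downarrow_{1-t}(C:B)$ (optimizing over $\sigma_B$ with $\theta_C$ fixed); its $t\to0$ limit is still $I(B:C)$.
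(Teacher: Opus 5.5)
Your proposal is correct and follows the paper's proof essentially step for step. The steps are: $M=2^{nR}$ blocks of $\theta^{\otimes n}$, encoding by choosing the $m$-th slot, the square-root decoder of Eq.~\eqref{eq: pretty good cq} built from $P^n(a,\theta_C)$, the Hayashi--Nagaoka bound combined with Proposition~\ref{pro: semi universal detector}, and the choice $a_n=R+1/\sqrt{n}$. You also check explicitly that $\Tr_B\mN(\theta_{AC})=\theta_C$ for every channel and that the roles of the systems are swapped to give $\pI^\downarrow_{1-t}(C:B)$; the paper uses both points without stating them.
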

\end{thmboxed}

\begin{proof}
Let us first describe how the universal protocol for entanglement-assisted classical communication works.
Let $\mM_n=\qty{1,\ldots,M_n}$ be the message set with $M_n=2^{nR}$.
Before the protocol, Alice and Bob share $nM_n$ copies of $\theta_{AC}$. As before, we write $\tA\coloneqq A^n,\tB\coloneqq B^n$ and so on.
Alice encodes her message $m\in\mM_n$ by choosing the $m$-th slot $\tA_m$, and she sends it through $\mN^{\otimes n}$.
The resulting state on Bob's side is 
\bal
\eta_{\tB\tC^{M_n}}\coloneqq\theta_{\tC_1}\otimes \cdots\otimes (\mN^{\otimes n}\otimes {\rm id}_{\tC})(\theta_{\tA\tC_m})\otimes \cdots\otimes \theta_{\tC_{M_n}}.
\eal
Now, Bob performs a measurement $\qty{\Lambda_{\tB\tC}^m}_{m=1}^{M_n}$, which is constructed just as in Eq.~\eqref{eq: pretty good cq}, by replacing $P^n(a)$ with $P^n(a,\theta_{C})$.
Following the same discussion as in the proof of Theorem~\ref{thm: universal cq channel coding}, we have
\bal
1-\Tr\qty[\Lambda^{m}_{\tB\tC}\qty(\mN_{A\to B}\circ\mE^m_{E\to A}( \theta_{EC}))]&\leq c_1 \frac{1}{{\rm poly}(n)}2^{-n\sup_{0< t< 1}t(\pI^\downarrow_{1-t}(C:B)_{\mN(\theta)}-a)}+c_2{\rm poly}(n)\cdot 2^{-na+\log M}.
\eal
Again, taking $a_n\coloneqq R+1/\sqrt{n}$ and following the same discussion as in the proof of Theorem~\ref{thm: universal cq channel coding}, we obtain the claim.
\end{proof}

Here, note that universal entanglement-assisted classical communication implies universal entanglement-assisted quantum communication when combined with teleportation. From this, $\frac{1}{2}I(B:C)_{\mN(\theta)}$ is a universally achievable entanglement-assisted quantum communication rate.

Let us remark on the connection to the so-called father protocol~\cite{Devetak_2004_family, Devetak_2006_triangle_of_duality, Abeyesinghe_2009}. The father protocol refers to entanglement-assisted quantum communication, which achieves the following resource inequality:
\bal
\braket{\mN}+\frac{1}{2}I(A:E)_\phi [qq]\geq \frac{1}{2}I(A:B)_\phi[q\to q],
\eal
which also implies the Lloyd-Shor-Devetak quantum capacity theorem~\cite{Lloyd_1997_capacity, Devetak_2004_private_classical_capacity}.
However, the construction for entanglement-assisted classical/quantum communication in Theorem~\ref{thm: universal EACC fixed length} has an entanglement cost much greater than $\frac{1}{2}I(A:E)_\phi$.

\subsection{Variable-length communication with classical feedback}
In the following, we show that the mutual information 
\bal
I(\mN)\coloneqq\sup_{\rho_{AC}\in\mD(\mH_{AC})}I(B:C)_{\mN(\rho)}
\eal
is a universally achievable communication rate for entanglement-assisted classical communication with backward communication, even without knowledge of the channel capacity $I(\mN)$.
To this end, we first review some known facts about learning quantum channels and the continuity bound for mutual information.
\begin{lemboxed}
    \begin{lem}[Sample complexity of process tomography, {Ref.~\cite{bravoprieto_2026_quantum_memory}}]
        Let $\mN$ be a quantum channel with input dimension $d_{\rm in}$ and output dimension $d_{\rm out}$. Fix $\ve\in(0,1]$ and $\eta\in(0,1)$. Then, there exists a non-adaptive incoherent protocol requiring
        \bal
        N=\left\lceil C\frac{(d_{\rm in}d_{\rm out})^3+(d_{\rm in}d_{\rm out})^2\log \frac{1}{\eta}}{\ve^2}\right\rceil
        \eal
        queries to the channel $\mN$. The protocol succeeds with probability at least $1-\eta$ and achieves $\|\mN-\hat{\mN}\|_\diamond\leq \ve$. Here, $C$ is a universal constant.
    \end{lem}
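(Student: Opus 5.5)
Since this lemma is imported from Ref.~\cite{bravoprieto_2026_quantum_memory}, the natural plan is to reduce process tomography to an estimation problem for the Choi operator $J_{\mN}\coloneqq(\mathrm{id}\otimes\mN)(\sum_{i,j}\ketbra{i}{j}\otimes\ketbra{i}{j})$ on a space of dimension $D\coloneqq d_{\rm in}d_{\rm out}$. We then build an unbiased estimator from non-adaptive incoherent queries and control its fluctuations with a matrix concentration inequality. \emph{Non-adaptive incoherent} means that each query prepares a fixed product input, applies $\mN$ once, and measures the output with a fixed POVM.

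\textbf{Step 1: single-query estimator.} In each query we draw an input pure state $\ket{\phi}$ from a fixed state $2$-design on $\mH_{\rm in}$, feed it into $\mN$, and measure the output with a fixed informationally complete POVM, say a $2$-design POVM $\{\frac{d_{\rm out}}{K}\ketbra{\psi_k}{\psi_k}\}_k$. The second-moment (twirling) identities of the $2$-designs let us invert the measure-and-prepare map explicitly. This yields, for each observed pair $(\phi,k)$, an operator
\[
\hat J=\bigl((d_{\rm in}+1)\overline{\ketbra{\phi}{\phi}}-I\bigr)\otimes\bigl((d_{\rm out}+1)\ketbra{\psi_k}{\psi_k}-I\bigr),
\]
up to normalisation, satisfying $\mbE[\hat J]=J_{\mN}$. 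Each summand then has operator norm $O(D)$ and second moment $\mbE[\hat J^2]\lesssim D\cdot I$. The required twirling computations are routine.

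\textbf{Step 2: concentration.} Average $N$ independent copies to get $\bar J$. The matrix Bernstein inequality then gives
\[
\|\bar J-J_{\mN}\|_\infty\lesssim\sqrt{D\log(D/\eta)/N}+D\log(D/\eta)/N
\]
with probability at least $1-\eta$. A finer, Frobenius-norm or $\epsilon$-net version over the relevant test operators should be used so that the $\log D$ becomes additive, as in $D^3+D^2\log(1/\eta)$. Afterwards we project $\bar J$ onto the convex set of Choi operators of CPTP maps by a semidefinite program; by the triangle inequality this costs at most a factor $2$, and the result is a valid channel $\hat{\mN}$.

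\textbf{Step 3: conversion to the diamond norm (the main obstacle).} The crude bound $\|\mN-\hat{\mN}\|_\diamond\le\|J_{\mN}-J_{\hat{\mN}}\|_1\le D\|J_{\mN}-J_{\hat{\mN}}\|_\infty$ would require $\|\cdot\|_\infty\le\ve/D$. Combined with Step 2, this gives $N\sim D^3/\ve^2$ only if the variance proxy is really $O(D)$ and not larger. This is the step I expect to be delicate: a naive trace-norm tomography of the normalised Choi state followed by the standard conversion loses an extra $d_{\rm in}^2$.

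To avoid that loss, I would use the characterisation
\[
\|\Phi\|_\diamond=\max_{\rho}\bigl\|(\sqrt{\rho}^{\,T}\otimes I)J_\Phi(\sqrt{\rho}^{\,T}\otimes I)\bigr\|_1 .
\]
I would then bound the deviation uniformly over $\rho$ by controlling $\bar J-J_{\mN}$ in the operator norm after conjugation by $\sqrt{\rho}$, using the explicit tensor-product structure of the Step 1 estimator. The remaining work is to check carefully the variance constants of $\hat J$ and to verify that this conjugated bound indeed delivers $N=\lceil C(D^3+D^2\log(1/\eta))/\ve^2\rceil$.
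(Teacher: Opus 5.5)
The paper gives no proof of this lemma; it is imported from the cited reference, so your argument has to stand on its own. Its architecture is sound. Your Step~1 claims are correct for the normalised Choi state $\rho_{\mN}=J_{\mN}/d_{\rm in}$: the estimator is unbiased, has norm $D$, and has variance proxy $O(D)$, with $D=d_{\rm in}d_{\rm out}$ as in your notation.

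Step~3 is not the obstacle you expect. For Hermitian $\Delta$ and any state $\rho$,
$\|(\sqrt{\rho}^{\,T}\otimes I)\Delta(\sqrt{\rho}^{\,T}\otimes I)\|_1\le\|\Delta\|_\infty\Tr(\rho\otimes I)=d_{\rm out}\|\Delta\|_\infty$.
Hence $\|\mN-\hat{\mN}\|_\diamond\le d_{\rm out}\|J_{\mN}-J_{\hat{\mN}}\|_\infty=D\|\rho_{\mN}-\rho_{\hat{\mN}}\|_\infty$. After an operator-norm projection onto CPTP Choi operators, it therefore suffices that $\|\bar\rho-\rho_{\mN}\|_\infty\le\ve/(2D)$. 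Your ``crude bound'' paragraph mixes normalisations: $\|\Phi\|_\diamond\le\|J_\Phi\|_1$ holds for the unnormalised Choi operator, whose estimator has variance proxy $O(d_{\rm in}^2D)$, not $O(D)$.

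The genuine gap is Step~2. Matrix Bernstein gives $\|\bar\rho-\rho_{\mN}\|_\infty\lesssim\sqrt{D\log(D/\eta)/N}$, so your route yields only $N=O(D^3\log(D/\eta)/\ve^2)$. That carries an extra $\log D$, and $D^3\log(1/\eta)$ in place of $D^2\log(1/\eta)$. The dimensional prefactor is intrinsic to matrix Bernstein.

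Neither remedy you mention is carried out, and neither works as you have set things up.
\begin{itemize}
\item The Frobenius route is worse: $\mbE\|\bar\rho-\rho_{\mN}\|_2^2\approx D^2/N$ forces $N\gtrsim D^4/\ve^2$.
\item The net route uses a $1/4$-net of the unit sphere of $\mathbb{C}^D$ (of size $e^{O(D)}$), scalar Bernstein, and a union bound. It produces the leading term $\sqrt{(D+\log(1/\eta))/N}$, and hence the stated $N$, only if $\mbE\langle v|\hat\rho|v\rangle^2$ is bounded by an absolute constant uniformly over all unit $v$, entangled ones included.
\end{itemize}

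That directional second moment depends on third moments of the output measurement ensemble, because the Born weight contributes one order. A 2-design does not control it. Take the identity channel with SIC inputs and the SIC-POVM. Then $\langle\Omega|\hat\rho|\Omega\rangle=(d^2+d-1)/d$ with probability $1/d$, so the variance in the maximally entangled direction is of order $d=\sqrt{D}$, and the union bound only gives $N\gtrsim D^{7/2}/\ve^2$.

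The missing ingredient is to measure in a Haar-random basis, or with any 3-design POVM; 2-design inputs still suffice. A symmetric-subspace computation then bounds $\mbE\langle v|\hat\rho|v\rangle^2$ by an absolute constant for every $v$ and every channel. The net argument combined with the diamond bound above then gives $N=O((D^3+D^2\log(1/\eta))/\ve^2)$.

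For the paper's single use of the lemma ($\ve_n=n^{-1/4}$, $\eta_n=1/n$), your weaker $O(D^3\sqrt{n}\log(Dn))$ would still be sublinear. It does not, however, prove the lemma as stated.
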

\end{lemboxed}

\begin{lemboxed}
    \begin{lem}[{Uniform continuity bound of the optimized mutual information, Ref.~\cite[Proposition 10]{Shirokov_2017}}]\label{lem: cont capacity}
        Let $\mM,\mN:\mD(\mH_A)\to\mD(\mH_B)$ be two quantum channels with $\|\mM-\mN\|_\diamond\leq \ve$. Then, the channel mutual information satisfies the uniform continuity bound 
        \bal
        \abs{I(\mM)-I(\mN)}\leq 2\ve\log d+g(\ve),
        \eal
        where $g(\ve)\coloneqq(1+\ve)h_2\qty(\frac{\ve}{1+\ve})$, and $d=\min\qty{d_A,d_B}$.
    \end{lem}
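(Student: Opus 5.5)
The plan is to reduce the statement to a continuity estimate for the mutual information of two output states that share the same marginal on the reference system. After that, I would rerun the Alicki--Fannes--Winter (AFW) mixing argument, using that shared marginal to get the factor $2\log d$ with $d=\min\qty{d_A,d_B}$.

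\textbf{Reduction.} By symmetry it suffices to show $I(\mM)\le I(\mN)+2\ve\log d+g(\ve)$. By purification and the data-processing inequality, the supremum defining $I(\cdot)$ may be restricted to pure states $\psi_{AC}$ with $C\cong A$, so $d_C\le d_A$. Fix such a $\psi$ and set $\rho_{BC}\coloneqq(\mM\otimes{\rm id})(\psi)$ and $\sigma_{BC}\coloneqq(\mN\otimes{\rm id})(\psi)$. By the definition of the diamond norm, $\frac12\|\rho-\sigma\|_1\le\ve$ (using the convention under which the lemma is stated). Moreover $\rho_C=\sigma_C=\psi_C$. It then suffices to prove
\bal
I(B:C)_\rho-I(B:C)_\sigma\le 2\ve\log d+g(\ve),
\eal
and take the supremum over $\psi$.

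\textbf{Step 1: the AFW decomposition.} Write $\rho-\sigma=\ve'(\Delta^+-\Delta^-)$ with $\ve'\le\ve$ and $\Delta^\pm$ states. Since $\rho_C=\sigma_C$, we have $\Delta^+_C=\Delta^-_C$. Define
\bal
\omega\coloneqq\frac{\rho+\ve'\Delta^-}{1+\ve'}=\frac{\sigma+\ve'\Delta^+}{1+\ve'}.
\eal
Both mixtures have the same $C$-marginal. This is the point of the argument: on states with a fixed $C$-marginal, $I(B:C)=S(C)-S(C|B)$, and $S(C)$ is constant.

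\textbf{Step 2: two-sided bounds on $I(B:C)_\omega$.} Mutual information is convex in the conditional (channel) part for a fixed $C$-marginal, equivalently $-S(C|B)$ is convex, so
\bal
(1+\ve')I(B:C)_\omega\le I(B:C)_\sigma+\ve' I(B:C)_{\Delta^+}.
\eal
Almost-concavity of conditional entropy, $S(C|B)_{\sum p_i\tau_i}\le\sum p_iS(C|B)_{\tau_i}+h_2(p)$, gives
\bal
(1+\ve')I(B:C)_\omega\ge I(B:C)_\rho+\ve' I(B:C)_{\Delta^-}-(1+\ve')h_2\qty(\tfrac{\ve'}{1+\ve'}).
\eal
Subtracting the two bounds yields
\bal
I(B:C)_\rho-I(B:C)_\sigma\le\ve'\qty[I(B:C)_{\Delta^+}-I(B:C)_{\Delta^-}]+g(\ve').
\eal

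\textbf{Step 3: bounding the bracket by $2\log d$.} This is the step I expect to need the most care. On one hand, $0\le I(B:C)\le 2\log d_B$ for any state, so the bracket is at most $2\log d_B$. On the other hand, $\Delta^\pm$ have the same $C$-marginal, so the bracket equals $S(C|B)_{\Delta^-}-S(C|B)_{\Delta^+}$. Since $S(C|B)\in[-\log d_C,\log d_C]$, this is at most $2\log d_C\le 2\log d_A$. Together the bracket is at most $2\log d$.

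\textbf{Conclusion.} Monotonicity of $\ve\log d+g(\ve)$ in $\ve$ lets us replace $\ve'$ by $\ve$, which gives the displayed bound. Exchanging the roles of $\mM$ and $\mN$ and taking suprema over $\psi$ finishes the proof. The main technical checks are the convexity and almost-concavity inequalities restricted to fixed $C$-marginal in Step 2, and the restriction $d_C\le d_A$ from the reduction.
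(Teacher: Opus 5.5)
Your overall route is the right one, and it is essentially the Alicki--Fannes--Winter-type argument behind Shirokov's Proposition~10, which the paper cites without proof. That route is: reduce to $\rho_{BC},\sigma_{BC}$ with $\rho_C=\sigma_C=\psi_C$, do one AFW pass, and bound the bracket by the range of $I(B:C)$ on $\Delta^\pm$. The reduction (including $d_C\le d_A$), Step~3 and the monotonicity step are fine.

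Step~2, however, rests on a false premise, and its first displayed inequality fails in general. From $\rho_C=\sigma_C$ you only get $\Delta^+_C=\Delta^-_C$, and this common marginal need not equal $\psi_C$. So in $\omega=\frac{1}{1+\ve'}\sigma+\frac{\ve'}{1+\ve'}\Delta^+$ the components have different $C$-marginals, and there is no ``fixed-marginal convexity'' of $I(B:C)$ to invoke. Concretely, take qubits, $\psi$ maximally entangled, $\mM$ completely dephasing and $\mN(X)=\Tr(X)\ketbra{1}{1}$. Then $\rho=\frac12(\ketbra{00}{00}+\ketbra{11}{11})$, $\sigma=\ketbra{1}{1}\otimes\frac{I}{2}$, $\ve'=\frac12$, $\Delta^+=\ketbra{00}{00}$ and $\Delta^-=\ketbra{10}{10}$, so $\Delta^\pm_C=\ketbra{0}{0}\neq I/2$. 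Here $\omega$ is uniform on $\{00,10,11\}$, with $I(B:C)_\omega=2h_2(1/3)-\log 3>0=I(B:C)_\sigma+\ve' I(B:C)_{\Delta^+}$.

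The fix is to run both inequalities on $S(C|B)$, which is concave and almost convex with no marginal condition. Writing $\omega$ in its two forms gives
\[
\tfrac{1}{1+\ve'}S(C|B)_\sigma+\tfrac{\ve'}{1+\ve'}S(C|B)_{\Delta^+}\le S(C|B)_\omega\le\tfrac{1}{1+\ve'}S(C|B)_\rho+\tfrac{\ve'}{1+\ve'}S(C|B)_{\Delta^-}+h_2\bigl(\tfrac{\ve'}{1+\ve'}\bigr),
\]
hence $S(C|B)_\sigma-S(C|B)_\rho\le\ve'[S(C|B)_{\Delta^-}-S(C|B)_{\Delta^+}]+g(\ve')$. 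Only now use $\rho_C=\sigma_C$ to identify the left side with $I(B:C)_\rho-I(B:C)_\sigma$. Then use $\Delta^+_C=\Delta^-_C$ to identify the bracket with $I(B:C)_{\Delta^+}-I(B:C)_{\Delta^-}$.

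Equivalently, your upper display omits the nonnegative term $S(C)_\omega-\frac{1}{1+\ve'}S(C)_\sigma-\frac{\ve'}{1+\ve'}S(C)_{\Delta^+}$. Your lower display is loose by $S(C)_\omega-\frac{1}{1+\ve'}S(C)_\rho-\frac{\ve'}{1+\ve'}S(C)_{\Delta^-}$, and these two terms are equal, so they cancel on subtraction. This cancellation, not a fixed marginal for $\omega$, is what gives $g(\ve)$ instead of the generic $2g(\ve)$. With this correction, your final Step~2 inequality holds and the proof goes through.
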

\end{lemboxed}

The goal is to construct a universal protocol achieving $I(\mN)-\delta$ for a fixed $\delta$, where $\delta>0$ is chosen in advance. We begin the protocol with Alice and Bob sharing infinitely many copies of the maximally entangled state. An outline of the protocol is as follows:
\begin{enumerate}
    \item Given $n$ copies of the channel $\mN$, Alice sacrifices $m_n=o(n)$ copies of $\mN$ for process tomography.
    \item Bob performs tomography on his side and calculates the estimated channel $\hat{\mN}$, $I(\hat{\mN})$, and a pure state $\hat{\psi}$ achieving the channel capacity.
    \item Bob sends the estimated information to Alice, which requires a constant amount of backward classical communication.
    \item Alice and Bob run the entanglement dilution protocol using backward communication and obtain $\hat{\psi}$.
    \item Alice and Bob run the fixed-length universal entanglement-assisted classical communication protocol for the worst-case guaranteed capacity $R'_n$.
\end{enumerate}
Now, we take a closer look at the protocol. We take $\eta_n=\frac{1}{n}$, $\ve_n=\frac{1}{n^{1/4}}$. Then, process tomography succeeds with probability $1-\frac{1}{n}$, and the estimated channel $\hat{\mN}$ satisfies 
\bal
\|\mN-\hat{\mN}\|_\diamond\leq \ve=n^{-1/4},
\eal
where $\|\cdot\|_{\diamond}$ is the diamond norm.
Here, note that the success probability of the estimation is taken into account in the error probability of the channel coding itself. Furthermore, the query complexity for the channel $\mN$ is at most sublinear. Therefore, it suffices that the success probability of the estimation converges to $1$.
From the estimated channel description, we can find a pure resource state $\hat{\psi}$ satisfying $I(\hat{\mN})>I(B:C)_{\hat{\mN}\otimes{\rm id}(\hat{\psi})}>I(\hat{\mN})-\ve_n$.
Employing Lemma~\ref{lem: cont capacity}, we have
\bal
I(\mN)-2\ve_n\log d_{\rm in}-g(\ve_n)-\ve_n\leq I(B:C)_{\hat{\mN}\otimes{\rm id}(\hat{\psi})}\leq I(\mN)-2\ve_n\log d_{\rm in}-g(\ve_n)
\eal

From this, we take $R'_n\coloneqq I(B:C)_{\hat{\mN}\otimes{\rm id}(\hat{\psi})}-3\ve_n\log d_{\rm in}-g(\ve_n)-\ve_n$, and run the fixed-length protocol. In the asymptotic limit, we achieve $R_n'\to I(\mN)$.

\section{Universal entanglement-assisted quantum Gel'fand-Pinsker channel coding}
We review quantum Gel'fand-Pinsker channel coding considered in Refs.~\cite{Dupuis_2009_capacity_side_information, dupuis_2010_phd_thesis,Anshu_2019_building, Anshu_2019_on_the_near}. In the classical case, the Gel'fand-Pinsker channel has several possible states, with the state drawn according to a probability distribution; the sender also has access to another random variable correlated with the state of the channel. In the quantum case, the probability distribution shared by the sender and the channel is replaced by the quantum state $\ket{\tau}_{SS'}$. (See also FIG.~\ref{fig: def of Gel'fand Pinsker}.)

\begin{defnboxed}
    \begin{defn}[Entanglement-assisted quantum Gel'fand-Pinsker channel coding]
        Consider a state $\ket{\tau}^{SS'}$ shared by Alice and the channel, with Alice holding the system $S'$, and a quantum channel $\mN:\mD(\mH_A\otimes \mH_S)\to\mD(\mH_B)$. A triple $\qty(\theta^{EB'}, \qty{\mE^m_{ES'\to A}}_m,\qty{\Lambda^m_{BC}}_m)$ is called an $(M,\ve) $-code for the entanglement-assisted Gel'fand-Pinsker channel $(\mN,\tau)$ if the following conditions hold: $\theta_{EC}$ is a bipartite quantum state shared by Alice and Bob, where the system $E$ is held by Alice and the system $C$ is held by Bob.
        $\mE^m:\mD(\mH_E\otimes \mH_{S'})\to\mD(\mH_A)$ is the encoder for messages $m\in\mM=[M]$ and $\qty{\Lambda^m_{BC}}_{m\in\mM}$ is the decoder whose output $\hat{M}$ satisfies
        \bal
        1-\Tr\qty[\Lambda^m_{BC}\mN\circ\mE^m(\theta_{EC}\otimes \tau_{SS'})]\leq \ve
        \eal
        for any $m\in\mM$.
        $R>0$ is called an achievable rate for $(\mN,\tau)$ if, for any $\ve>0$, there exists a $(2^{nR},\ve)$-code for the entanglement-assisted Gel'fand-Pinsker channel $(\mN^{\otimes n},\tau^{\otimes n})$ for sufficiently large $n$.
        The capacity $C_{\rm GP}(\mN,\tau)$ of the entanglement-assisted Gel'fand-Pinsker channel is defined as the supremum of the achievable rates for $(\mN,\tau)$.
    \end{defn}
\end{defnboxed}

In the quantum case, with the shared state denoted by $\tau_{SS'}$, the capacity of the Gel'fand-Pinsker channel $\mN_{AS\to B}$ is known to be characterized as~\cite{Dupuis_2009_capacity_side_information, dupuis_2010_phd_thesis}
\bal
C_{\rm GP}(\mN,\tau)=\sup_{\substack{\theta_{ASC}\\ \theta_S=\tau_S}}\qty(I(B:C)_{\mN(\theta)}-I(S:C)_\theta)
\eal

\begin{figure}
    \centering
    \includegraphics[width=0.6\linewidth]{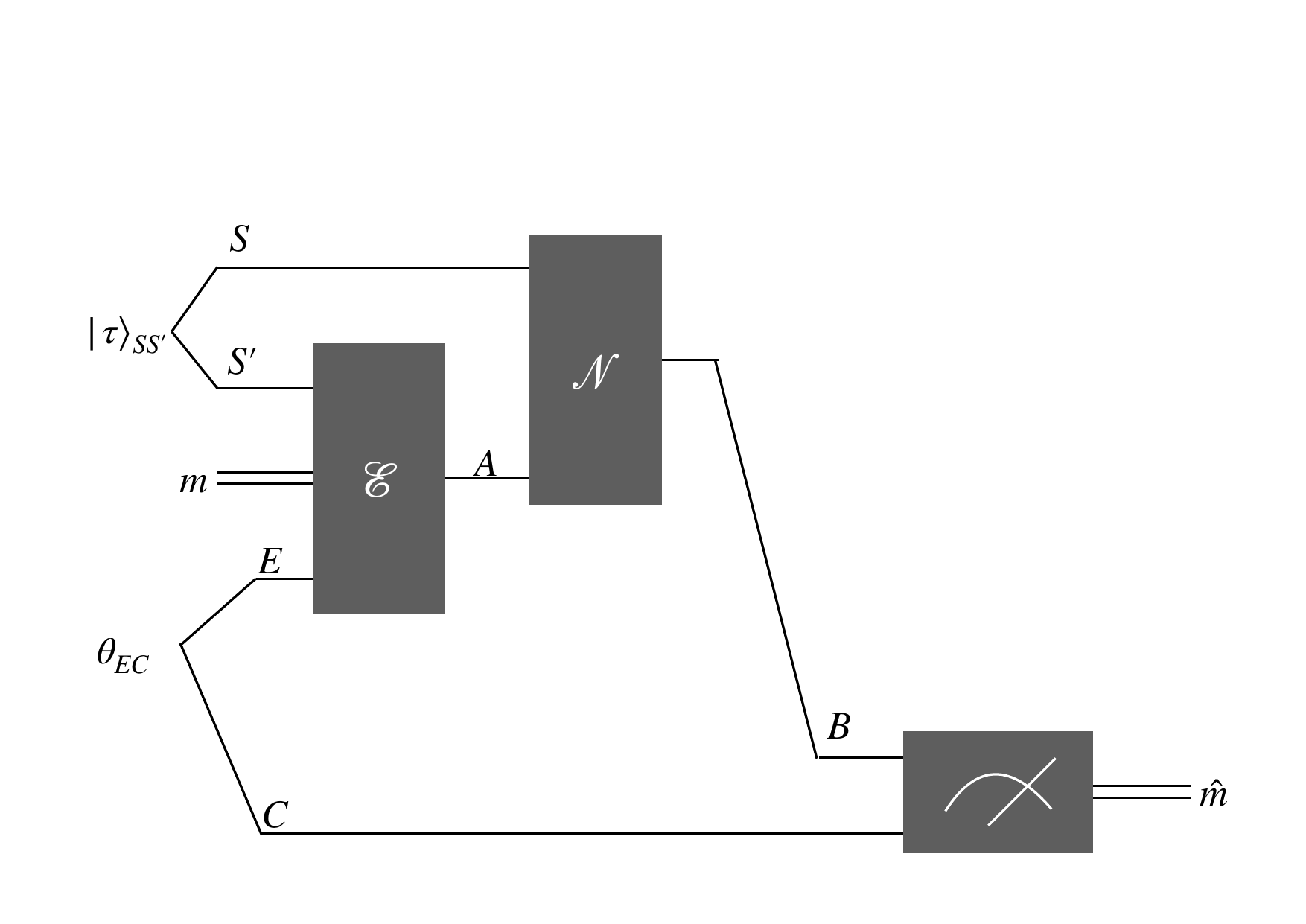}
    \caption{Schematic illustration of quantum Gel'fand-Pinsker channel coding. The state of the quantum channel $\mN$ varies depending on the register $S$, and the encoder is given access to another register $S'$, which is correlated with the channel state $S$.}
    \label{fig: def of Gel'fand Pinsker}
\end{figure}

Using semi-universal correlation detection for the classical-quantum channel and the convex splitting lemma, we obtain a universal coding scheme for the quantum Gel'fand-Pinsker channel.

\begin{thmboxed}
    \begin{thm}[Universal entanglement-assisted quantum Gel'fand-Pinsker channel coding]
        Fix an arbitrary state $\theta_{ASC}$ satisfying $\theta_S=\tau_S$. Then, there exists a sequence of codes for entanglement-assisted quantum Gel'fand-Pinsker channel coding that can depend on the description of $\theta_{ASC}$, achieving the rate $R_1-R_2$ in the asymptotic limit for any quantum channel $\mN$ satisfying $R_1<I(B:C)_{\mN^{AS\to B}\otimes {\rm id}^{C}(\theta_{ASC})}$ and $R_2>I(S:C)_\theta$.

        \begin{itemize}
    \item {\bf The sender and the receiver need to know}:
    \begin{enumerate}
        \item the quantum state $\ket{\tau}_{SS'}$ shared between the sender and the channel;
        \item the full-description of the shared state $\theta_{EC}$;
        \item the target rate $R_1, R_2$ satisfies $R_1<I(B:C)_{\mN^{AS\to B}\otimes {\rm id}^{C}(\theta_{ASC})}$ and $R_2>I(S:C)_\theta$.
    \end{enumerate}
    \item {\bf The sender and the receiver do not need to know}:
    \begin{enumerate}
        \item the full description of the channel $\mN$;
        \item the exact values of the mutual information $I(B:C)_{\mN^{AS\to B}\otimes {\rm id}^{C}(\theta_{ASC})}, I(S:C)_\theta$, and the capacity $C_{\rm GP}(\mN)$ of the entanglement-assisted Gel'fand-Pinsker channel coding.
    \end{enumerate}
\end{itemize}
    \end{thm}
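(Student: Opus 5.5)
We follow the position-based decoding plus convex-splitting strategy of Anshu--Jain--Warsi for the channel-aware Gel'fand--Pinsker problem, and make the decoder universal with the semi-universal correlation detector of Proposition~\ref{pro: semi universal detector}.

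\textbf{Setup.} Let $\theta_{ASC}$ be fixed with $\theta_S=\tau_S$. Take a purification $\theta_{ASCR}$. Alice and Bob pre-share $M K$ blocks of $n$ copies of $\theta_{A'C}$, where $A'$ is a purifying system of $\theta_C$ on Alice's side. Set $M=2^{n(R_1-R_2)}$ and $K=2^{nR_2}$, and label the blocks by pairs $(m,k)\in[M]\times[K]$. Bob holds $\tC_{m,k}$ and Alice holds $\tA'_{m,k}$.

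\textbf{Encoding (convex splitting).} Given message $m$, consider the $K$ registers $\tC_{m,1},\ldots,\tC_{m,K}$ together with the channel-state system $S^n$, whose marginal is $\tau_S^{\otimes n}$. By Lemma~\ref{lem: unipartite convex splitting}, applied to $\theta_{SC}^{\otimes n}$ with $R_2>I(S:C)_\theta$ and using the same choice of $\alpha^*$ as in Section~\ref{sec: private channel}, we get
\begin{equation}
\Bigl\|\frac{1}{K}\sum_{k}\theta_{S^n\tC_{m,k}}\otimes\bigotimes_{k'\neq k}\theta_{\tC_{m,k'}}-\tau_S^{\otimes n}\otimes\theta_C^{\otimes nK}\Bigr\|_1\leq 2\cdot 2^{-n\frac{\alpha^*-1}{\alpha^*}(R_2-\sI^\downarrow_{\alpha^*}(S:C)_\theta)}.
\end{equation}
Since $\tau^{\otimes n}_{SS'}\otimes\theta_{A'C}^{\otimes nK}$ purifies the right-hand side, Uhlmann's theorem gives an isometry $V^m$ on Alice's systems $S'^n\tA'_{m,1}\cdots\tA'_{m,K}$, acting locally. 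This isometry maps the actual state to within purified distance $\sqrt{2\cdot(\text{the bound above})}$ of a purification of the mixture. That purification has a classical register $k$, and conditioned on $k$ the joint state of $(A^nS^n\tC_{m,k})$ is $\theta_{ASC}^{\otimes n}$ while all other $\tC$'s are in product with it.

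The isometry depends only on $\tau$ and $\theta$, both of which are known, so it does not depend on $\mN$. Alice then feeds the resulting $A^n$ into $\mN^{\otimes n}$. By Lemma~\ref{lem: continuity of the prbability}, the success probability of any decoder on the actual state differs from that on the ideal state by at most the purified distance. That distance vanishes uniformly in $\mN$ because the channel acts after the approximation.

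\textbf{Decoding.} On the ideal state, Bob's systems are $B^n\tC_{1,1}\cdots\tC_{M,K}$. The slot $\tC_{m,k}$ is correlated with $B^n$ according to $(\mN\otimes{\rm id})(\theta_{ASC})^{\otimes n}$, and every other slot is in the product state $\theta_C^{\otimes n}$. Bob runs position-based decoding over all $MK$ slots, constructed as in Eq.~\eqref{eq: pretty good cq}, with the test $P^n(a,\theta_C)$ from Proposition~\ref{pro: semi universal detector}. This test needs only the known marginal $\theta_C$; the detector's marginal is on $C$, consistent with its use in Theorem~\ref{thm: universal EACC fixed length}.

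The Hayashi--Nagaoka inequality (Lemma~\ref{lem: Hayashi Nagaoka inequality}) bounds the error by
\begin{equation}
c_1\,{\rm poly}(n)^{-1}2^{-nt(\pI^\downarrow_{1-t}(C:B)-a)}+c_2\,{\rm poly}(n)\,2^{-na+nR_1}.
\end{equation}
Choosing $a_n=R_1+1/\sqrt n$ and using $R_1<I(B:C)_{\mN(\theta)}$, both terms vanish as in Theorem~\ref{thm: universal cq channel coding}. Bob outputs the $m$-component of the decoded pair, which gives rate $R_1-R_2$.

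\textbf{Main obstacle.} The delicate step is the convex-splitting encoding. The quantum side information $S$ is held by the channel, not by Alice, so the Uhlmann isometry must act only on $S'$ and Alice's entanglement shares. We must also check that the ideal conditional state, fed through $\mN^{\otimes n}$, gives exactly the $\tC_{m,k}$--$B^n$ correlation of $(\mN\otimes{\rm id})(\theta)$ with all other slots in product, which requires the extended state to include $A$. To handle this, we apply convex splitting to $\theta_{SC}$ but take Uhlmann's isometry relative to the full purification $\theta_{ASC R}$, with $AR$ placed on Alice's side, so that the ideal state has the required form. Everything else reuses earlier results directly.
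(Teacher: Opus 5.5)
Your proposal is correct and takes essentially the same route as the paper: for each message, convex splitting of $\theta_{SC}$ over $2^{nR_2}$ slots gives the encoder as an Uhlmann isometry acting only on $S'$ and Alice's entanglement shares, taken relative to a purification of $\theta_{ASC}$; Bob then runs position-based decoding over all $2^{nR_1}$ slots with the semi-universal detector $P^n(a_n,\theta_C)$, and the error is bounded via the Hayashi--Nagaoka inequality and Proposition~\ref{pro: semi universal detector}. Your $(m,k)$ labelling is just the paper's binning $\mB(m)$ in different notation, and there is one cosmetic slip: with $C$ as the copied system, the convex-splitting lemma as stated yields $\sI^\downarrow_\alpha(C:S)_\theta$ rather than $\sI^\downarrow_\alpha(S:C)_\theta$, which is immaterial because both tend to $I(S:C)_\theta$ as $\alpha\to 1$.
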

\end{thmboxed}
We remark that encoding and decoding do not depend on the channel $\mN$, but depend on the quantum state $\ket{\tau}_{SS'}$ representing the channel state information. A fully universal protocol independent of the shared state $\ket{\tau}$ remains an open problem.
\begin{proof}
The proof idea follows Ref.~\cite{Anshu_2019_building}.   

Choose $\ket{\theta}_{EC}$ so that $\ket{\theta}_{EC}$ is a purification of $\theta_{C}$, and let Alice and Bob share $n2^{nR_1}=nM_1$ copies of $\ket{\theta}_{EC}$. As before, group the systems into blocks of $n$ and write $\tE=E^n$ and $\tC=C^n$; the shared state is then written as 
\bal
\theta_{\tE^{M_1}\tC^{M_1}}\coloneqq\theta_{\tE_1\tC_1}\otimes \cdots\otimes \theta_{\tE_{M_1}\tC_{M_1}}.
\eal
We divide $nM_1$ copies into $2^{n(R_1-R_2)}$ blocks, each of which is composed of $n2^{nR_2}=nM_2$ copies.
For $j=1,\ldots,2^{n(R_1-R_2)}$, the $j$-th block has the registers $\tE_{s_j},\ldots, \tE_{e_j}, \tC_{s_j},\ldots, \tC_{e_j}$. Here, we write $s_j\coloneqq(j-1)M_2+1, ~e_j=jM_2$. Furthermore, define $\mB(j)$ by $\mB(j)=\qty{s_j,s_j+1,\ldots, e_j}$.
Let us consider the following state
\bal
\tau^j_{\tS\tC_{s_j}\cdots \tC_{e_j}}\coloneqq\frac{1}{M_2}\sum_{k\in\mB(j)}\theta_{\tS\tC_{k}}\otimes \theta_{\tC_{s_j}}\otimes \cdots\theta_{\tC_{k-1}}\otimes \theta_{\tC_{k+1}}\otimes \cdots\otimes \theta_{\tC_{e_j}}.
\eal
Now, due to the unipartite convex splitting in Lemma~\ref{lem: unipartite convex splitting}, we have
\bal
\left\| \tau^j_{\tS\tC_{s_j}\cdots \tC_{e_j}}- \theta_{\tS}\otimes \theta_{\tC_{s_j}}\otimes \cdots\otimes \theta_{\tC_{e_j}} \right\|_1\leq 2\cdot 2^{-n\sup_{1\leq \alpha \leq 2}\frac{\alpha-1}{\alpha}(R_2-\sI^\downarrow_\alpha(C:S)_\theta)},
\eal
which decays exponentially in $n$ because we assume $R_2>I(S:C)_\theta$.
 
Let us take a purification $\ket{\Psi}_{FASC}$ of $\theta_{ASC}$.
Then, 
\bal
\ket{\tau^j}_{K\tA \tF\tS\tE_{s_j}\ldots\tE_{e_j}\tC_{s_j}\ldots\tC_{e_j}}=\frac{1}{\sqrt{M_2}}\sum_{k\in\mB(j)}\ket{k}_K\ket{\Psi}_{\tF\tA\tS\tC_k}\otimes \ket{\theta}_{\tE_{s_j}\tC_{s_j}}\otimes \cdots\otimes \ket{\theta}_{\tE_{k-1}\tC_{k-1}}\otimes \ket{0}_{\tE_k}\otimes \ket{\theta}_{\tE_{k+1}\tC_{k+1}}\otimes \cdots\otimes \ket{\theta}_{\tE_{e_j}\tC_{e_j}}
\eal
is a purification of $\tau^j_{\tS\tC_{s_j}\cdots \tC_{s_j}}$.

Since the state $\tau_{\tS\tS'}\otimes \theta_{\tE_{s_j}\tC_{s_j}}\otimes \cdots\otimes \theta_{\tE_{e_j}\tC_{e_j}}$ is a purification of $\theta_S\otimes \theta_{\tC_{s_j}}\otimes \theta_{\tC_{e_j}}$, due to Uhlmann's theorem, there exists an isometry $U_j:\tS'\tE_{s_j}\ldots\tE_{e_j}\to K\tA\tF\tE_{s_j}\ldots\tE_{e_j}$ for any $j=1,\ldots, 2^{n(R_1-R_2)}$ such that 
\bal
P\qty(\ketbra{\tau^j}{\tau^j},U_j\tau_{\tS\tS'}\otimes \theta_{\tE_{s_j}\tC_{s_j}}\otimes \cdots\otimes \theta_{\tE_{e_j}\tC_{e_j}}U_j^\dagger)\leq \sqrt{2\cdot 2^{-n\sup_{1\leq \alpha \leq 2}\frac{\alpha-1}{\alpha}(R_2-\sI^\downarrow_\alpha(C:S)_\theta)},}
\eal
also vanishes asymptotically.

The isometry discussed above gives us the encoder as follows.
Let $\mM_n\coloneqq\qty{1,\ldots, M_n}$ be a message set with $M_n=2^{n(R_1-R_2)}$. For a message $m\in\mM_n$, the sender applies the isometry $U_m$, which is indeed possible because the isometry is independent of the channel $\mN$.
Then, Alice sends the register $\tA\tS$ through the channel $\mN^{\otimes n }_{AS\to B}$.
Due to the discussion above and the monotonicity of the purified distance under CPTP maps, the purified distance between the resulting state $\Theta'^m_{\tB\tC_{1}\cdots\tC_{M_1}}$ after the application of $\mN^{\otimes n}$ and the ideal state 
\bal
\Theta^m_{\tB\tC_{1}\cdots\tC_{M_1}}\coloneqq\frac{1}{M_2}\sum_{k\in\mB(m)}\mN^{\otimes n}_{\tA\tS\to \tB}(\theta_{\tA\tS\tC_k})\otimes \theta_{\tC_1}\otimes \cdots\otimes \theta_{\tC_{k-1}}\otimes \theta_{\tC_{k+1}}\otimes \cdots\otimes  \theta_{\tC_{M_1}}
\eal
vanishes in the asymptotic limit $n\to\infty$ as
\bal
P\qty(\Theta'^m_{\tB\tC_{1}\cdots\tC_{M_1}}, \Theta^m_{\tB\tC_{1}\cdots\tC_{M_1}})\leq \sqrt{2\cdot 2^{-n\sup_{1\leq \alpha \leq 2}\frac{\alpha-1}{\alpha}(R_2-\sI^\downarrow_\alpha(C:S)_\theta)}gv }.
\eal

Next, we construct the decoder.
Note that information about $\theta_{ASC}$ is available to Bob, which allows him to construct the semi-universal correlation detector $P^n(a,\theta_{C})$, where we take the sequence $\qty{a_n}_n$ as $a_n\coloneqq R_1+1/\sqrt{n}$.
For any $l=1,\ldots, M_1$, we define an operator $\Gamma^l_{\tB\tC_{1}\cdots\tC_{M_1}}$ as 
\bal
\Gamma^l_{\tB\tC_{1}\cdots\tC_{M_1}}\coloneqq P^n_{\tB\tC_l}(a,\theta_{C})\otimes I_{\tC_1}\otimes \cdots\otimes  I_{\tC_{l-1}}\otimes I_{\tC_{l+1}}\otimes \cdots\otimes I_{\tC_{M_1}}.
\eal
From this, we can construct the decoder $\qty{\Lambda^l_{\tB\tC_{1}\cdots\tC_{M_1}}}_l$ as 
\bal
\widetilde{\Lambda}^l_{\tB\tC_{1}\cdots\tC_{M_1}}&\coloneqq\qty(\sum_{l'=1}^{M_1}\Gamma^{l'}_{\tB\tC_{1}\cdots\tC_{M_1}})^{-\frac{1}{2}} \Gamma^l_{\tB\tC_{1}\cdots\tC_{M_1}}\qty(\sum_{l'=1}^{M_1}\Gamma^{l'}_{\tB\tC_{1}\cdots\tC_{M_1}})^{-\frac{1}{2}},
\\
\Lambda^l_{\tB\tC_{1}\cdots\tC_{M_1}}&\coloneqq\widetilde{\Lambda}^l_{\tB\tC_{1}\cdots\tC_{M_1}}+\frac{1}{M_1}\qty(I-\sum_{l'=1}^{M_1}\Gamma^{l'}_{\tB\tC_{1}\cdots\tC_{M_1}})^0.
\eal
Upon obtaining the measurement outcome $l$, Bob decodes the message $m\in\mM_n$ such that $l\in\mB(m)$.

Finally, let us show that the decoding error vanishes in the asymptotic limit.
First, note that due to Lemma~\ref{lem: continuity of the prbability}, we have
\bal
{}&\Tr\qty[\qty(\sum_{l\not\in\mB(m)}\Lambda^l_{\tB\tC_{1}\cdots\tC_{M_1}} )\Theta'^m_{\tB\tC_{1}\cdots\tC_{M_1}}]\\
&\leq \qty(\sqrt{\Tr\qty[\qty(\sum_{l\not\in\mB(m)}\Lambda^l_{\tB\tC_{1}\cdots\tC_{M_1}} )\Theta^m_{\tB\tC_{1}\cdots\tC_{M_1}}]}+P(\Theta'^m_{\tB\tC_{1}\cdots\tC_{M_1}},\Theta^m_{\tB\tC_{1}\cdots\tC_{M_1}}))^2.
\eal
The first term is bounded as follows:
\begin{equation}\label{eq: gelfand bound 1}
    \begin{aligned}
        &\Tr\qty[\qty(\sum_{l\not\in\mB(m)}\Lambda^l_{\tB\tC_{1}\cdots\tC_{M_1}} )\Theta^m_{\tB\tC_{1}\cdots\tC_{M_1}}]\\
        &=\frac{1}{M_2}\sum_{k\in\mB(m)}\Tr[\qty(\sum_{l\not\in\mB(m)}\Lambda^l_{\tB\tC_{1}\cdots\tC_{M_1}} )\mN^{\otimes n}_{\tA\tS\to \tB}(\theta_{\tA\tS\tC_k})\otimes \theta_{\tC_1}\otimes \cdots\otimes \theta_{\tC_{k-1}}\otimes \theta_{\tC_{k+1}}\otimes \cdots\otimes  \theta_{\tC_{M_1}} ]\\
        &\leq \frac{1}{M_2}\sum_{k\in\mB(m)}\Tr[\qty(\sum_{l|l\neq k}\Lambda^l_{\tB\tC_{1}\cdots\tC_{M_1}} )\mN^{\otimes n}_{\tA\tS\to \tB}(\theta_{\tA\tS\tC_k})\otimes \theta_{\tC_1}\otimes \cdots\otimes \theta_{\tC_{k-1}}\otimes \theta_{\tC_{k+1}}\otimes \cdots\otimes  \theta_{\tC_{M_1}} ]\\
        &=\Tr\qty[\qty(\sum_{l|l\neq s_m}\Lambda^l_{\tB\tC_{1}\cdots\tC_{M_1}} )\mN^{\otimes n}_{\tA\tS\to \tB}(\theta_{\tA\tS\tC_{s_m}})\otimes \theta_{\tC_1}\otimes \cdots\otimes \theta_{\tC_{s_m-1}}\otimes \theta_{\tC_{s_m+1}}\otimes \cdots\otimes  \theta_{\tC_{M_1}}],
    \end{aligned}
\end{equation}
where the last line is because of the permutation symmetry and $s_m$ denotes an arbitrary representative position.
Now, using the Hayashi-Nagaoka inequality in Lemma~\ref{lem: Hayashi Nagaoka inequality}, we have
\begin{equation}\label{eq: gelfand bound 2}
    \begin{aligned}
        &\Tr\qty[\qty(\sum_{l|l\neq s_m}\Lambda^l_{\tB\tC_{1}\cdots\tC_{M_1}} )\mN^{\otimes n}_{\tA\tS\to \tB}(\theta_{\tA\tS\tC_{s_m}})\otimes \theta_{\tC_1}\otimes \cdots\otimes \theta_{\tC_{s_m-1}}\otimes \theta_{\tC_{s_m+1}}\otimes \cdots\otimes  \theta_{\tC_{M_1}}]\\
        &=\Tr\qty[\qty(I-\Lambda^{s_m}_{\tB\tC_{1}\cdots\tC_{M_1}})\mN^{\otimes n}_{\tA\tS\to \tB}(\theta_{\tA\tS\tC_{s_m}})\otimes \theta_{\tC_1}\otimes \cdots\otimes \theta_{\tC_{s_m-1}}\otimes \theta_{\tC_{s_m+1}}\otimes \cdots\otimes  \theta_{\tC_{M_1}}]\\
        &\leq \Tr\qty[\qty(I-\widetilde{\Lambda}^{s_m}_{\tB\tC_{1}\cdots\tC_{M_1}})\mN^{\otimes n}_{\tA\tS\to \tB}(\theta_{\tA\tS\tC_{s_m}})\otimes \theta_{\tC_1}\otimes \cdots\otimes \theta_{\tC_{s_m-1}}\otimes \theta_{\tC_{s_m+1}}\otimes \cdots\otimes  \theta_{\tC_{M_1}}]\\
        &\leq \Tr\qty[\qty(c_1\qty(I-\Gamma^{s_m}_{\tB\tC_{1}\cdots\tC_{M_1}})+c_2\sum_{l~|~l\neq s_m}\Gamma^{l}_{\tB\tC_{1}\cdots\tC_{M_1}})\mN^{\otimes n}_{\tA\tS\to \tB}(\theta_{\tA\tS\tC_{s_m}})\otimes \theta_{\tC_1}\otimes \cdots\otimes \theta_{\tC_{s_m-1}}\otimes \theta_{\tC_{s_m+1}}\otimes \cdots\otimes  \theta_{\tC_{M_1}}]\\
        &=c_1\Tr\qty[\qty(I-P^n_{\tB\tC_{s_m}}(a_n,\theta_{C}))\mN^{\otimes n}_{\tA\tS\to \tB}(\theta_{\tA\tS\tC_{s_m}})]+c_2 (2^{nR_1}-1)\Tr\qty[P^n_{\tB\tC_{s_m}}(a_n,\theta_{C})\mN^{\otimes n}_{\tA\tS\to \tB}(\theta_{\tA\tS})\otimes \theta_{\tC}].
    \end{aligned}
\end{equation}
Here, in the third line, we used $\widetilde{\Lambda}^{s_m}_{\tB\tC_{1}\cdots\tC_{M_1}}\leq\Lambda^{s_m}_{\tB\tC_{1}\cdots\tC_{M_1}}$.
Finally, due to Proposition~\ref{pro: semi universal detector}, we have
\begin{equation}
    \begin{aligned}
        &c_1\Tr\qty[\qty(I-P^n_{\tB\tB_{s_m}}(a_n,\theta_{C}))\mN^{\otimes n}_{\tA\tS\to \tB}(\theta_{\tA\tS\tC_{s_m}})]+c_2 (2^{nR_1}-1)\Tr\qty[P^n_{\tB\tB_{s_m}}(a_n,\theta_C)\mN^{\otimes n}_{\tA\tS\to \tB}(\theta_{\tA\tS})\otimes \theta_{\tC}]\\
        &\leq c_1\frac{1}{{\rm poly}(n)}2^{-nt\qty(\pI^{\downarrow}_{1-t}(C:B)_{\mN^{AS\to B}\otimes {\rm id}^{C}(\theta_{ASC})}-a_n)}+c_2{\rm poly}(n)2^{-n(a_n-R_1)}, ~\forall t\in(0,1).
    \end{aligned}
\end{equation}
Due to the choice of $\{a_n\}_n$, the error goes to $0$ in the asymptotic limit, which concludes the proof.

\end{proof}

\section{Universal entanglement-assisted quantum Marton's inner bound}\label{sec: universal marton}
We now review the concept of a broadcast channel.
A broadcast channel refers to a channel with a single sender and multiple receivers, a setting traditionally studied in both classical and quantum network information theory~\cite{Cover_broadcast_channel, Bergman_1973_random_coding, Gallager__1974_Capacity, Korner_general_broadcast}.
In the scenario with $L$ receivers, the sender encodes $(m_1,\ldots, m_L)\in\mM_1\times \cdots\times \mM_L$, and the $l$-th receiver needs to decode the $l$-th message $m_l$. (See also FIG.~\ref{fig:placeholder} for the two-receiver scenario.)

In the following, we mainly consider entanglement-assisted classical communication through a quantum broadcast channel with $L$ receivers.

\begin{defnboxed}
    \begin{defn}[Entanglement-assisted classical communication through a quantum broadcast channel with $L$ receivers]
    Let $\mN_{A\to B_1B_2\cdots B_L}$ be a quantum broadcast channel. A tuple $\qty(\theta_{E_1C_1},\ldots, \theta_{E_LC_L},\qty{\mE^{(m_1,\ldots, m_L)}}_{(m_1,\ldots, m_L)\in\mM_1\times\cdots\times \mM_L},\qty{\Lambda_{B_1C_1}^{m_1}}_{m_1\in\mM_1},\ldots,  \qty{\Lambda_{B_LC_L}^{m_L}}_{m_L\in\mM_L})$ is called an $(M_1,\ldots,M_L,\ve)$-code for entanglement-assisted classical communication through a quantum broadcast channel $\mN$ if it satisfies the following:
    For any $l\in[L]$, $\theta_{E_lC_l}$ is the shared quantum state between the sender and the $l$-th receiver, where the sender holds $E_l$, and the $l$-th receiver holds $C_l$. 
    $\mE^{(m_1,\ldots, m_L)}:\mD(\mH_{E_1}\otimes \cdots\otimes \mH_{E_L})\to \mD(\mH_{A}^{\otimes n})$ is the encoder which encodes the tuple of messages $(m_1,\ldots,m_L)\in\mM_1\times \cdots\mM_{L}$. 
    The POVM $\qty{\Lambda_{B_lC_l}^{m_l}}_{m_l\in\mM_l}$ is the decoder for the $l$-th receiver. 
    The decoding error satisfies
    \bal
    1-\Tr\qty[(\Lambda^{m_1}_{B_1C_1}\otimes\cdots\otimes \Lambda^{m_L}_{B_LC_L})\mN\circ\mE^{(m_1,\ldots, m_L)}(\theta_{E_1C_1}\otimes \cdots\otimes \theta_{E_LC_L})]\leq \ve.
    \eal
    A tuple $(R_1,\ldots, R_L)$ is called an achievable rate tuple if, for any $\ve$ and sufficiently large $n$, there exists a $(2^{nR_1}, \ldots, 2^{nR_L},\ve)$-code for the broadcast channel $\mN^{\otimes n}$. The capacity region is defined as the union of the achievable rates.
    \end{defn}
\end{defnboxed}
\begin{figure}
    \centering
    \includegraphics[width=0.6\linewidth]{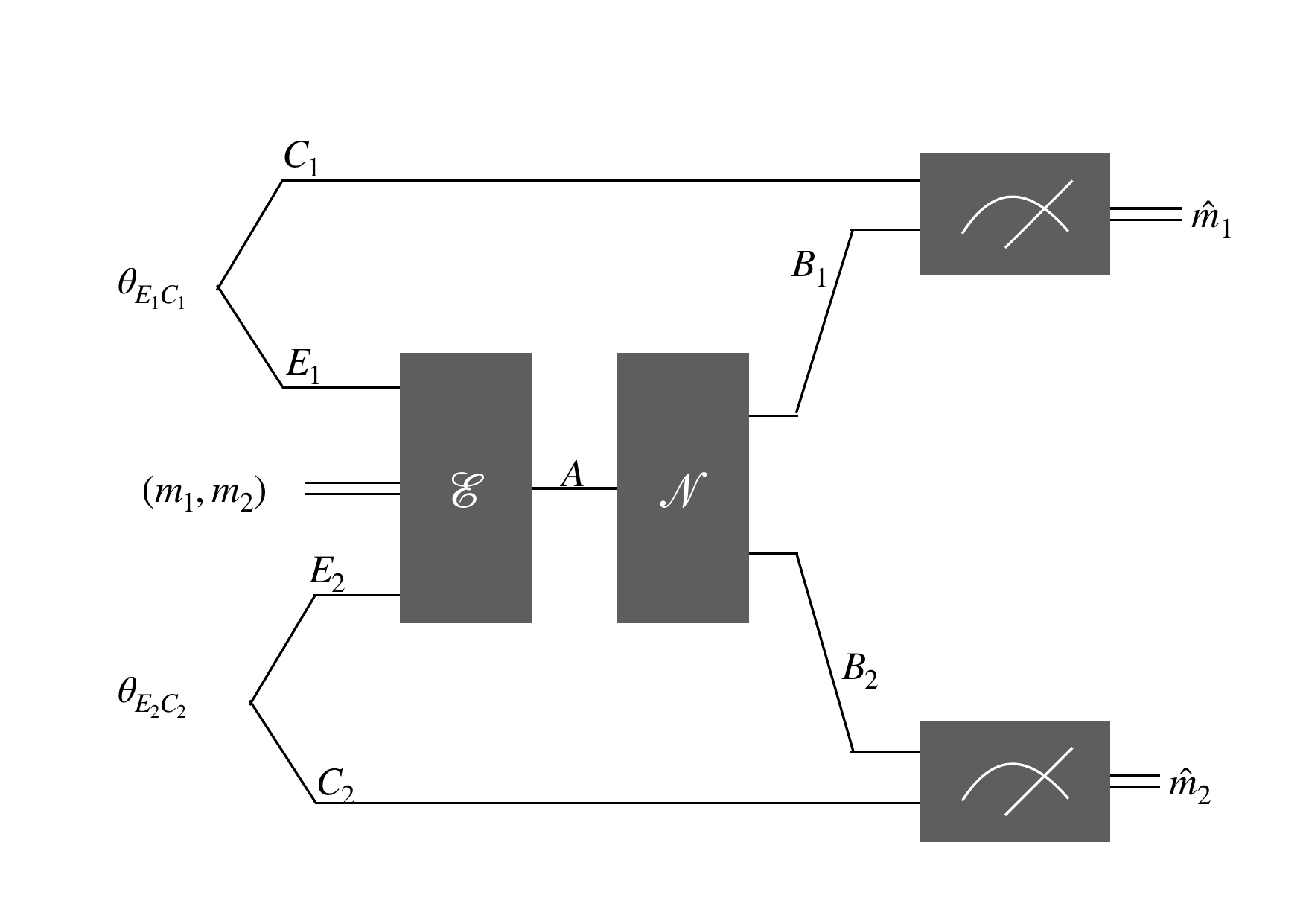}
    \caption{Schematic illustration of an entanglement-assisted quantum broadcast channel with two receivers.}
    \label{fig:placeholder}
\end{figure}

Even in the classical case, a single-letter characterization of the capacity region of a general two-receiver broadcast channel remains unknown. Previous studies have therefore developed achievable inner bounds and characterized the capacity region in special settings, including degraded broadcast channels~\cite{Bergman_1973_random_coding, Gallager__1974_Capacity} and broadcast channels with degraded message sets~\cite{Korner_general_broadcast}. In the latter setting, the sender transmits a common message to both receivers and an additional private message to one of them, without requiring the channel itself to be degraded. Superposition coding has also been extended to classical-quantum broadcast channels~\cite{Yard_2011_quantum_broadcast, Savov_2015_classical_code}. Universal coding has been studied for degraded message sets and related settings involving confidential messages, in both the classical and quantum cases~\cite{Hayashi_2011_universally_attainable, Boche_2020_universal_superposition, Hayashi_2022_universal_cq_superposition}.

For a general two-receiver classical broadcast channel with independent private messages, Marton's inner bound provides a single-letter achievable rate region~\cite{Marton_coding_theorem, ElGamal_proof_of_marton}. Although its single-letter form has recently been shown to be suboptimal in general~\cite{Huang_2026_suboptimality_martons}, its regularization, obtained by taking the closure of the union of the normalized inner bounds over all block lengths, exactly characterizes the capacity region~\cite{Geng_2014_marton_inner}. One-shot extensions incorporating a common message have also been developed~\cite{Liu_2015_one_shot_marton}. In the classical-quantum setting, a Marton-type inner bound for two receivers was established in Ref.~\cite{Savov_2015_classical_code}.

Entanglement-assisted analogs of Marton's inner bound have also been established. The quantum broadcast father protocol yields an achievable region for entanglement-assisted quantum communication, including an extension to arbitrarily many receivers; superdense coding converts this region into one for entanglement-assisted classical communication~\cite{Dupuis_2010_father_broadcast}. A one-shot Marton-type inner bound for entanglement-assisted classical communication with two receivers was subsequently established in Ref.~\cite{Anshu_2019_building}. To the best of our knowledge, however, universal achievability of these entanglement-assisted Marton-type bounds has not previously been established.

Here, we show that the entanglement-assisted Marton-type inner bound is universally achievable for an arbitrary number of receivers, including $L\geq 3$. Our construction combines the semi-universal correlation detector in Proposition~\ref{pro: semi universal detector} with the multipartite convex splitting lemma developed in Ref.~\cite{Cheng_2023_quantum_broadcast}.

\begin{thmboxed}
    \begin{thm}[Universal entanglement-assisted Marton's inner bound]\label{thm: universal L sender Marton's bound}
        Fix a quantum state $\theta_{AC_1\cdots C_L}$. Then, under the assumption that the sender and the receiver are informed of the values of the mutual information $\qty{I(B_l: C_l)_{\mN(\theta)}}_{l\in[L]}$ and the total correlation $\qty{I[S]_{\theta}}_{S\subset [L]}$, there exists a sequence of codes for entanglement-assisted classical communication through a quantum broadcast channel achieving the rate tuple $(R_1,\ldots, R_L)$ for any broadcast channel $\mN$ with $L$ receivers satisfying 
        \begin{equation}\label{eq: marton each R}
            0<R_l<I(B_l:C_l)_{\mN(\theta)},\qquad\forall l\in[L],
        \end{equation}
        \begin{equation}\label{eq: marton sum of R}
            \sum_{l\in S}R_l<\sum_{l\in S}I(B_l:C_l)_{\mN(\theta)}-I[S]_\theta,\qquad \emptyset\neq\forall S\subset[L].
        \end{equation}
        Here, $I[S]_\theta$ is the total correlation of $S\subset [L]$, defined in Eq.~\eqref{eq: def of total correlation}.
        \begin{itemize}
    \item {\bf The sender and the receiver need to know}:
    \begin{enumerate}
        \item The exact values of the mutual information $\qty{I(B_l: C_l)_{\mN(\theta)}}_{l\in[L]}$ $\qty{I(B_l: C_l)_{\mN(\theta)}}_{l\in[L]}$and the total correlation $\qty{I[S]_{\theta}}_{S\subset [L]}$;
        \item the full description of the shared states $\qty{\theta_{E_lC_l}}_{l\in[L]}$ between the sender and the receivers;
    \end{enumerate}
    \item {\bf The sender and the receiver do not need to know}:
    \begin{enumerate}
        \item the full description of the channel $\mN$
    \end{enumerate}
\end{itemize}
    \end{thm}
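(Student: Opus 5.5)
We follow the one-shot Marton scheme of Ref.~\cite{Anshu_2019_building}, extended to $L$ receivers with the multipartite convex splitting of Lemma~\ref{lem: multipartite convex splitting}, and we replace each receiver's decoder by the semi-universal detector of Proposition~\ref{pro: semi universal detector}.

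\textbf{Rate choice.} We pick auxiliary rates $R_l'>0$ with $R_l+R_l'<I(B_l:C_l)_{\mN(\theta)}$ for every $l$ and $\sum_{l\in S}R_l'>I[S]_\theta$ for every nonempty $S\subset[L]$. Such rates exist by the conditions in Eqs.~\eqref{eq: marton each R} and~\eqref{eq: marton sum of R}; this is a Fourier--Motzkin/polymatroid feasibility check that depends only on the numbers $I(B_l:C_l)$ and $I[S]_\theta$. This is exactly why the sender and receivers must know these values, and we expect this step to be the main obstacle. For general $L$ one must verify that the constraint system $\{R_l'<I_l-R_l,\ \sum_S R_l'>I[S]\}$ is feasible whenever Eq.~\eqref{eq: marton sum of R} holds. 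Our plan is to use the supermodularity of $S\mapsto I[S]_\theta$, namely $I[S\cup T]+I[S\cap T]\geq I[S]+I[T]$, which follows from strong subadditivity. Under this property the set $\{x:\sum_S x_l\geq I[S]\}$ is a contrapolymatroid with a vertex given by the chain rule, so the upper box $x_l<I_l-R_l$ meets it iff $\sum_{S}(I_l-R_l)>I[S]$ for all $S$, which is Eq.~\eqref{eq: marton sum of R}. Set $M_l'=2^{nR_l'}$.

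\textbf{Encoding.} For each $l$, sender and receiver $l$ share $nM_lM_l'$ copies of a purification $\ket{\theta}_{E_lC_l}$ of $\theta_{C_l}$, grouped into blocks $\mB_l(m_l)$ of size $M_l'$. Consider the ideal state
\[
\frac{1}{\prod_l M_l'}\sum_{k_l\in\mB_l(m_l)}\theta_{\tA\tC_{1,k_1}\cdots\tC_{L,k_L}}\otimes\bigotimes_{l}\bigotimes_{k'\neq k_l}\theta_{\tC_{l,k'}}.
\]
Lemma~\ref{lem: multipartite convex splitting} applied to $\theta_{C_1\cdots C_L}^{\otimes n}$ shows that its marginal on the $C$-blocks is exponentially close to the product of marginals, because $\sum_{l\in S}R_l'>I[S]_\theta\geq\sI^\downarrow_\alpha[S]$ for $\alpha$ near $1$ after additivity under tensor powers. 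We then use Fuchs--van de Graaf and Uhlmann's theorem, as in the Gel'fand--Pinsker proof, to obtain a channel-independent isometry $U_{m_1\cdots m_L}$ on Alice's $\tE$ registers that outputs $\tA$. The actual state after $\mN^{\otimes n}$ is then within vanishing purified distance of the ideal state, by monotonicity.

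\textbf{Decoding.} Receiver $l$ builds $\Gamma^{k}=P^n_{\tB_l\tC_{l,k}}(a_n,\theta_{C_l})\otimes I$ with $a_n=R_l+R_l'+1/\sqrt n$, forms the pretty-good measurement completed as in Eq.~\eqref{eq: pretty good cq}, and outputs the block containing $k$. The error analysis follows Eqs.~\eqref{eq: gelfand bound 1}--\eqref{eq: gelfand bound 2}. Lemma~\ref{lem: continuity of the prbability} reduces the error to the ideal state. In the ideal state the reduced state on $\tB_l\tC_l^{M_lM_l'}$ is a mixture over $k_l$ of $\mN_l(\theta_{AC_l})^{\otimes n}$ in slot $k_l$ and $\theta_{C_l}^{\otimes n}$ elsewhere. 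Permutation symmetry, the Hayashi--Nagaoka inequality and Proposition~\ref{pro: semi universal detector} then bound the error by
\[
c_1\,2^{-nt(\pI^\downarrow_{1-t}(C_l:B_l)-a_n)}+c_2\,{\rm poly}(n)\,2^{-n(a_n-R_l-R_l')}.
\]
This vanishes since $R_l+R_l'<I(B_l:C_l)$. A union bound over $l$, using $1-\Tr[\bigotimes_l\Lambda_l\rho]\leq\sum_l(1-\Tr[\Lambda_l\rho])$, finishes the proof.
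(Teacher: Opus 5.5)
Your proposal is correct and follows the paper's proof essentially step for step: auxiliary rates, multipartite convex splitting plus Uhlmann for a channel-independent encoder, semi-universal detectors inside a position-based pretty-good measurement, and a union bound. The only visible difference in the construction is that the paper uses the fixed threshold $a_l=\frac{1}{2}\bigl(R_l+r_l+I(B_l:C_l)_{\mN(\theta)}\bigr)$ where you use $a_n$.

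The step you flag as the main obstacle is elementary in the paper's Lemma~\ref{lem: existence of r_l}. The constraints $\sum_{l\in S}r_l>I[S]_\theta$ are monotone in $r$, so it suffices to take each $r_l$ slightly below $I(B_l:C_l)_{\mN(\theta)}-R_l$; no supermodularity is needed.

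There is also a small slip in your convex-splitting step. For $\alpha>1$ one has $\sI^\downarrow_\alpha[S]\geq I[S]_\theta$, not $\leq$. The existence of a suitable $\alpha>1$ should instead be argued from $\lim_{\alpha\to1^+}\sI^\downarrow_\alpha[S]=I[S]_\theta$.
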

\end{thmboxed}

Before proceeding to the details, we first show the following lemma.
\begin{lemboxed}
    \begin{lem}\label{lem: existence of r_l}
        For any rate tuple $(R_1,\ldots, R_L)$ satisfying Eq.~\eqref{eq: marton each R} and Eq.~\eqref{eq: marton sum of R}, there exists a tuple of positive numbers $(r_1,\ldots,r_L)$ such that 
        \begin{equation}\label{eq: lem R_l+r_l}
            R_l+r_l<I(B_l:C_l)_{\mN(\theta)},\qquad\forall l\in[L],\\
        \end{equation}
        \begin{equation}\label{eq: lem sum of r_l}
            \sum_{l\in S} r_l>I[S]_\theta, \qquad\emptyset\neq \forall S\subset [L].
        \end{equation}
    \end{lem}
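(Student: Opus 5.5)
The lemma reduces to a finite system of strict linear inequalities, and I expect it to follow from a uniform perturbation argument. No supermodularity or polymatroid structure of the total correlation will be needed.

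First I rewrite the hypotheses in terms of slacks. Let $c_l\coloneqq I(B_l:C_l)_{\mN(\theta)}-R_l$ for $l\in[L]$. Eq.~\eqref{eq: marton each R} gives $c_l>0$. Eq.~\eqref{eq: marton sum of R} is equivalent to
\begin{equation*}
\sum_{l\in S}c_l>I[S]_\theta \qquad \text{for every } \emptyset\neq S\subset[L].
\end{equation*}
The target conditions Eq.~\eqref{eq: lem R_l+r_l} and Eq.~\eqref{eq: lem sum of r_l} then read as follows: $0<r_l<c_l$ for every $l$, and $\sum_{l\in S}r_l>I[S]_\theta$ for every nonempty $S$.

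Next I choose $r_l\coloneqq c_l-\delta$ with a single $\delta>0$ satisfying
\begin{equation*}
\delta<\min_{l\in[L]}c_l,\qquad \delta<\min_{\emptyset\neq S\subset[L]}\frac{1}{\abs{S}}\qty(\sum_{l\in S}c_l-I[S]_\theta).
\end{equation*}
Both minima range over finitely many strictly positive numbers (there are $2^L-1$ nonempty subsets), so such a $\delta$ exists.

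Finally I verify the three requirements. Positivity $r_l=c_l-\delta>0$ follows from the first constraint on $\delta$. The bound $r_l<c_l$ holds because $\delta>0$, and this is exactly Eq.~\eqref{eq: lem R_l+r_l}. For each nonempty $S$,
\begin{equation*}
\sum_{l\in S}r_l=\sum_{l\in S}c_l-\abs{S}\delta>I[S]_\theta,
\end{equation*}
by the second constraint on $\delta$, which gives Eq.~\eqref{eq: lem sum of r_l}.

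The only point needing care is that every inequality involved is strict and there are finitely many of them, which is what allows a uniform $\delta$. There is no real obstacle. One could note that for singletons $I[\{l\}]_\theta=0$, but this case is already covered by the general bound.
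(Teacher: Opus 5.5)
Your proof is correct and is essentially the paper's argument: the paper also lowers each slack $I(B_l:C_l)_{\mN(\theta)}-R_l$ by a uniform amount $\ve'$, chosen below $\min_l \ve_l$ and $\min_S \ve_S/L$, and then checks positivity and the subset-sum bounds just as you do. The only cosmetic differences are that the paper allows $r_l$ anywhere in an interval of width $\ve'$ below the slack rather than exactly $c_l-\delta$, and that it normalizes the per-subset margin by $L$ instead of $\abs{S}$.
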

\end{lemboxed}
\begin{proof}[Proof of Lemma~\ref{lem: existence of r_l}]
    From Eq.~\eqref{eq: marton each R} and Eq.~\eqref{eq: marton sum of R}, there exist positive numbers $\ve_S, \ve_l$ indexed by $S\subset[L]$ and $l\in [L]$ such that 
    \bal
    0<R_l&<I(B_l:C_l)_{\mN(\theta)}-\ve_l,\qquad\forall l\in[L],\\
    \sum_{l\in S}R_l&<\sum_{l\in S}I(B_l:C_l)_{\mN(\theta)}-I[S]_\theta-\ve_S,\qquad\emptyset\neq \forall S\subset[L].
    \eal
    Now, let us take
    \bal
    \ve'\coloneqq\frac{1}{2}\min\qty{\min_{S\subset[L]}\frac{\ve_S}{L}, \min_{l\in[L]}\ve_l }>0.
    \eal
    Employing $\ve'$, we take $r_l$ so that 
    \bal
    I(B_l:C_l)_{\mN(\theta)}-R_l-\ve'<r_l<I(B_l:C_l)_{\mN(\theta)}-R_l.
    \eal
    Here, note that Eq.~\eqref{eq: lem R_l+r_l} holds from the definition, and $r_l>0$ because $\ve'\leq \ve_l$ holds from the definition of $\ve'$.
    Now, it suffices to check that Eq.~\eqref{eq: lem sum of r_l} holds. Let us take an arbitrary nonempty set $S\subset[L]$. It follows that
    \bal
    \sum_{l\in S}r_l&>\sum_{l\in S}I(B_l:C_l)_{\mN(\theta)}-\sum_{l\in S}R_l-\abs{S}\ve'\\
    &>\sum_{l\in S}I(B_l:C_l)_{\mN(\theta)}-\sum_{l\in S}I(B_l:C_l)_{\mN(\theta)}+I[S]_\theta+\ve_S-\abs{S}\ve'\\
    &\geq I[S]_\theta+\ve_S-\frac{\abs{S}}{2L}\ve_S>I[S]_\theta,
    \eal
    which concludes the proof.
\end{proof}
To construct the universal coding scheme, one must specify both the communication rates $\qty{R_l}_{l\in[L]}$ and the auxiliary rates $\qty{r_l}_{l\in[L]}$.
The prior knowledge of the mutual information $\qty{I(B_l: C_l)_{\mN(\theta)}}_{l\in[L]}$ and the total correlations $\qty{I[S]_\theta}_{S\subset[L]}$ enables us to find a suitable choice of the auxiliary rates $\qty{r_l}_{l\in[L]}$ through the proof in Lemma~\ref{lem: existence of r_l}.

We prove the main statement.
\begin{proof}[Proof of Theorem~\ref{thm: universal L sender Marton's bound}]
The proof follows the approach of Ref.~\cite{Anshu_2019_building}.
For any $l\in [L]$, let $\ket{\theta^l}_{E_lC_l}$ be a purification of the marginal $\theta_{C_l}=\Tr_{\backslash C_l}\theta_{AC_1\cdots C_L}$ of the fixed state $\theta_{AC_1\cdots C_L}$. Then, before the protocol, we let Alice, the sender, and the $l$-th receiver share $n2^{n(R_l+r_l)}=:nN_l$ copies of $\ket{\theta^l}_{E_lC_l}$.
Now, let us group the registers into blocks of $n$ and write $\tE\coloneqq E^n, \tC_l\coloneqq C_l^n$. The shared state is 
\bal
\theta_{\tE_1^{N_1}\cdots\tE_L^{N_L}\tC_1^{N_1}\cdots\tC_L^{N_L}}\coloneqq(\theta^{1}_{\tE_1\tC_1})^{\otimes N_1}\otimes \cdots\otimes (\theta^{L}_{\tE_1\tC_1})^{\otimes N_L}.
\eal
For each $l\in [L]$, we divide these states into the bins $\mB_l(1),\ldots,\mB_l(2^{nR_l})$, each of which is composed of $2^{nr_l}$ registers  $\tE, \tC$ with labels $s^l_{j_l}\coloneqq(j_l-1)2^{r_l},(j_l-1)2^{r_l}+1,\ldots,e^l_{j_l}\coloneqq j_l2^{r_l}$.

For a tuple $(j_1,\ldots, j_L)\in[2^{nR_1}]\times \cdots\times [2^{nR_L}]$, we define a quantum state
\begin{equation}
\begin{aligned}
&\tau^{(j_1,\ldots, j_L)}_{\tC_1^{\mB(j_1)}\cdots\tC_L^{\mB(j_L)}}\\
&\coloneqq\frac{1}{2^{n\sum_{l=1}^L r_l}}
\sum_{(k_1,\ldots,k_L)\in \mB(j_1)\times \cdots\times\mB(j_L)}\theta_{\tC_{1,k_1}\cdots\tC_{L,k_L}}\otimes \bigotimes_{l=1}^L\theta_{\tC_{l,s^l_{j_l}}}\otimes \cdots\otimes  \theta_{\tC_{l,k_l-1}}\otimes\theta_{\tC_{l,k_l+1}}\otimes \cdots\otimes \theta_{\tC_{l,e^l_{j_l}}}.
\end{aligned}
\end{equation}
Then, if we take a purification $\ket{\Psi}_{FAC_1\cdots C_L}$ of $\theta_{AC_1\cdots C_L}$, the quantum state defined as 
\begin{equation}
\begin{aligned}
\ket{\tau^{(j_1,\ldots, j_L)}}_{K\tF\tA\tE_1^{\mB(j_1)}\cdots\tE_L^{\mB(j_L)}\tC_1^{\mB(j_1)}\cdots\tC_L^{\mB(j_L)}}&\coloneqq\frac{1}{\sqrt{2^{n\sum_{l=1}^L r_l}}}\sum_{(k_1,\ldots,k_L)\in \mB(j_1)\times \cdots\times\mB(j_L)}\ket{(k_1,\ldots,k_L)}_K\ket{\Psi}_{\tF\tA\tC_1\cdots \tC_L}\\
&\otimes \bigotimes_{l=1}^L \ket{\theta}_{\tE_l\tC_l,s^l_{j_l}}\otimes \cdots\otimes \ket{\theta}_{\tE_l\tC_l,k-1}\otimes \ket{0}_{\tE_l,k}\otimes \ket{\theta}_{\tE_l\tC_l,k+1}\otimes \cdots\otimes \ket{\theta}_{\tE_l\tC_l,e^l_{j_l}}
\end{aligned}
\end{equation}
is a purification of $\tau^{(j_1,\ldots, j_L)}_{\tC_1^{\mB(j_1)}\cdots\tC_L^{\mB(j_L)}}$.

Now, noting that Eq.~\eqref{eq: lem sum of r_l} holds, due to the multipartite convex splitting lemma in Lemma~\ref{lem: multipartite convex splitting},
\bal
\left\| \tau^{(j_1,\ldots, j_L)}_{\tC_1^{\mB(j_1)}\cdots\tC_L^{\mB(j_L)}}-\bigotimes_{l=1}^L \theta_{\tC_{l, s^l_{j_l}}}\otimes \cdots\otimes \theta_{\tC_{l, e^l_{j_l}}} \right\|_1\leq \sum_{\emptyset\neq S\subset[L]}2^{\abs{S}}2^{-n\sup_{1\leq \alpha\leq 2}\frac{\alpha-1}{\alpha}(\sum_{l\in S} r_l-\sI_\alpha^\downarrow[S])}
\eal
holds. Due to Lemma~\ref{lem: existence of r_l}, the term $2^{-n\sup_{1\leq \alpha\leq 2}\frac{\alpha-1}{\alpha}(\sum_{l\in S} r_l-\sI_\alpha^\downarrow[S])}$ decays exponentially in $n$.
Moreover, due to the Fuchs--van de Graaf inequality, the purified distance is bounded from above as 
\bal
 P\qty(\tau^{(j_1,\ldots, j_L)}_{\tC_1^{\mB(j_1)}\cdots\tC_L^{\mB(j_L)}},\bigotimes_{l=1}^L \theta_{\tC_{l, s^l_{j_l}}}\otimes \cdots\otimes \theta_{\tC_{l, e^l_{j_l}}} )\leq \sqrt{\sum_{\emptyset\neq S\subset[L]}2^{\abs{S}}2^{-n\sup_{1\leq \alpha\leq 2}\frac{\alpha-1}{\alpha}(\sum_{l\in S} r_l-\sI_\alpha^\downarrow[S])}},
\eal
whose decay is also exponential.
Since the shared state between the sender and the receivers can be seen as a purification of $\bigotimes_{l=1}^L \theta_{\tC_{l, s^l_{j_l}}}\otimes \cdots\otimes \theta_{\tC_{l, e^l_{j_l}}}$, due to Uhlmann's theorem, there exists an isometry
\bal
U^{(j_1,\ldots, j_L)}:\tE_1^{\mB(j_1)}\cdots\tE_L^{\mB(j_L)} \to K\tF\tA\tE_1^{\mB(j_1)}\cdots\tE_L^{\mB(j_L)}
\eal
satisfying
\bal
P\qty(\ketbra{\tau^{(j_1,\ldots, j_L)}}{\tau^{(j_1,\ldots, j_L)}}, U^{(j_1,\ldots, j_L)}\theta_{\tE_{1}^{\mB(j_1)}\cdots\tE_{L }^{\mB(j_L)}\tC_{1}^{\mB(j_1)}\cdots\tC_{L}^{\mB(j_L)}}\qty(U^{(j_1,\ldots, j_L)})^\dagger)= P\qty(\tau^{(j_1,\ldots, j_L)}_{\tC_1^{\mB(j_1)}\cdots\tC_L^{\mB(j_L)}},\bigotimes_{l=1}^L \theta_{\tC_{l, s^l_{j_l}}}\otimes \cdots\otimes \theta_{\tC_{l, e^l_{j_l}}} ).
\eal
This isometry serves as the encoder.

Now, we explain how the universal protocol works.
To send a tuple of messages $(m_1,\ldots, m_L)$, the sender applies the isometry $U^{(m_1,\ldots, m_L)}$ to the systems $\tE_1^{\mB(m_1)}\cdots\tE_L^{\mB(m_L)}$. Then, the sender feeds the register $\tA$ to the broadcast channel. Let $\Theta'^{(m_1,\ldots, m_L)}_{\tB_1\cdots\tB_L\tC_1^{N_1}\cdots\tC_L^{N_L}}$ denote the global output state of the broadcast channel.
Now, due to the monotonicity of the purified distance under CPTP maps, the purified distance between $\Theta'^{(m_1,\ldots, m_L)}_{\tB_1\cdots\tB_L\tC_1^{N_1}\cdots\tC_L^{N_L}}$ and the ideal state
\begin{equation}
    \begin{aligned}
        &\Theta^{(m_1,\ldots, m_L)}_{\tB_1\cdots\tB_L\tC_1^{N_1}\cdots\tC_L^{N_L}}\\
        &\coloneqq\frac{1}{2^{n\sum_{l=1}^L r_l}}
\sum_{(k_1,\ldots,k_L)\in \mB(m_1)\times \cdots\times\mB(m_L)}\mN^{\otimes n}_{\tA\to\tB_1\cdots\tB_L}(\theta_{\tA\tC_{1,k_1}\cdots \tC_{L,k_L}})\otimes \bigotimes_{l=1}^L\theta_{\tC_{l,s^l_{m_l}}}\otimes \cdots\otimes  \theta_{\tC_{l,k_l-1}}\otimes\theta_{\tC_{l,k_l+1}}\otimes \cdots\otimes \theta_{\tC_{l,e^l_{m_l}}}
    \end{aligned}
\end{equation}
is bounded as 
\bal
P\qty(\Theta^{(m_1,\ldots, m_L)}_{\tB_1\cdots\tB_L\tC_1^{N_1}\cdots\tC_L^{N_L}}, \Theta'^{(m_1,\ldots, m_L)}_{\tB_1\cdots\tB_L\tC_1^{N_1}\cdots\tC_L^{N_L}})\leq \sqrt{\sum_{\emptyset\neq S\subset[L]}2^{\abs{S}}2^{-n\sup_{1\leq \alpha\leq 2}\frac{\alpha-1}{\alpha}(\sum_{l\in S} r_l-\sI_\alpha^\downarrow[S])}}
\eal
which decays exponentially.

Now, we construct the decoder. Since we fix $\theta_{AC_1\cdots C_L}$, the receivers have information about their respective marginals $\theta_{C_l}$ of the shared state. Now, for each $l\in[L]$, we fix
\bal
a_l\coloneqq\frac{R_l+r_l+I(B_l:C_l)_{\mN(\theta)}}{2},
\eal
and take the semi-universal correlation detector $P^n_{\tB_l\tC_l}(a_l, \theta_{C_l})$. For any $k'\in[N_l]$, we define an operator 
\bal
\Gamma^{k'}_{\tB_l\tC_{l,1}\cdots\tC_{l,N_l}}\coloneqq P^n_{\tB\tC_{l,k'}}(a_l,\theta_{C_l})\otimes I_{\tC_{l,1}}\otimes \cdots\otimes  I_{\tC_{l,k'-1}}\otimes I_{\tC_{l,k'+1}}\otimes \cdots\otimes I_{\tC_{l,N_l}},
\eal
and a pretty-good measurement $\qty{\Lambda^{k'}_{\tB_l\tC_{l,1}\cdots\tC_{l,N_l}}}_{k'}$
\bal
\widetilde{\Lambda}^{k'}_{\tB_l\tC_{l,1}\cdots\tC_{l,N_l}}&\coloneqq\qty(\sum_{k''=1}^{N_l}\Gamma^{k''}_{\tB_l\tC_{l,1}\cdots\tC_{l,N_l}})^{-\frac{1}{2}} \Gamma^{k'}_{\tB_l\tC_{l,1}\cdots\tC_{l,N_l}}\qty(\sum_{k''=1}^{N_l}\Gamma^{k''}_{\tB_l\tC_{l,1}\cdots\tC_{l,N_l}})^{-\frac{1}{2}},\\
\Lambda^{k'}_{\tB_l\tC_{l,1}\cdots\tC_{l,N_l}}&\coloneqq \widetilde{\Lambda}^{k'}_{\tB_l\tC_{l,1}\cdots\tC_{l,N_l}}+\frac{1}{N_l}\qty(I-\sum_{k''=1}^{N_l}\Gamma^{k''}_{\tB_l\tC_{l,1}\cdots\tC_{l,N_l}})^0.
\eal
The receiver decodes the message $\hat{m}_l$ such that the measurement outcome $k'$ is in the block $\mB(\hat{m}_l)$.

Now, we show that the decoding error vanishes in the asymptotic limit.
First, note that 
\bal
{\rm Pr}[(\hat{M}_1,\ldots,\hat{M}_L)\neq (m_1,\ldots, m_L)~|~({M}_1,\ldots,{M}_L)=(m_1,\ldots, m_L)]\leq \sum_{l\in[L]}{\rm Pr}[(\hat{M}_l\neq m_l)~|~({M}_1,\ldots,{M}_L)=(m_1,\ldots, m_L)]
\eal
holds.
Now, we upper bound each term.
For any $l$, we have
\begin{equation}
    \begin{aligned}
        &{\rm Pr}[(\hat{M}_l\neq m_l)~|~({M}_1,\ldots,{M}_L)=(m_1,\ldots, m_L)]\\
        &=\Tr\qty[\qty(\sum_{k'\not \in\mB(m_l)}\Lambda^{k'}_{\tB_l\tC_{l,1}\cdots\tC_{l,N_l}})\Theta'^{(m_1,\ldots, m_L)}_{\tB_1\cdots\tB_L\tC_1^{N_1}\cdots\tC_L^{N_L}}]\\
        &\leq \qty(\qty(\Tr\qty[\qty(\sum_{k'\not \in\mB(m_l)}\Lambda^{k'}_{\tB_l\tC_{l,1}\cdots\tC_{l,N_l}})\Theta^{(m_1,\ldots, m_L)}_{\tB_1\cdots\tB_L\tC_1^{N_1}\cdots\tC_L^{N_L}}])^2+P\qty(\Theta^{(m_1,\ldots, m_L)}_{\tB_1\cdots\tB_L\tC_1^{N_1}\cdots\tC_L^{N_L}},\Theta'^{(m_1,\ldots, m_L)}_{\tB_1\cdots\tB_L\tC_1^{N_1}\cdots\tC_L^{N_L}}))^2
    \end{aligned}
\end{equation}
It suffices to upper bound $\Tr\qty[\qty(\sum_{k'\not \in\mB(m_l)}\Lambda^{k'}_{\tB_l\tC_{l,1}\cdots\tC_{l,N_l}})\Theta^{(m_1,\ldots, m_L)}_{\tB_1\cdots\tB_L\tC_1^{N_1}\cdots\tC_L^{N_L}}]$. It holds that 
\begin{equation}\label{eq: broadcast bound}
    \begin{aligned}
        &\Tr\qty[\qty(\sum_{k'\not \in\mB(m_l)}\Lambda^{k'}_{\tB_l\tC_{l,1}\cdots\tC_{l,N_l}})\Theta^{(m_1,\ldots, m_L)}_{\tB_1\cdots\tB_L\tC_1^{N_1}\cdots\tC_L^{N_L}}]\\
        &=\frac{1}{2^{nr_l}}\sum_{k\in\mB(m_l)}
        \Tr\qty[\qty(\sum_{k'\not\in\mB(m_l)}\Lambda^{k'}_{\tB_l\tC_{l,1}\cdots\tC_{l,N_l}})\Tr_{\backslash \tB_l\tC_{l,k}}\qty[\mN^{\otimes n}_{\tA\to\tB_1\cdots\tB_L}(\theta^{\otimes n}_{\tA\tC_{l, k}})]\theta_{\tC_{l,s^l_{m_l}}}\otimes \cdots\otimes  \theta_{\tC_{l,k-1}}\otimes\theta_{\tC_{l,k+1}}\otimes \cdots\otimes \theta_{\tC_{l,e^l_{m_l}}}].
    \end{aligned}
\end{equation}
Now, we can follow the same strategy as in Eq.~\eqref{eq: gelfand bound 1} and Eq.~\eqref{eq: gelfand bound 2} to bound Eq.~\eqref{eq: broadcast bound} as
\begin{equation}
    \begin{aligned}
        &\Tr\qty[\qty(\sum_{k'\not \in\mB(m_l)}\Lambda^{k'}_{\tB_l\tC_{l,1}\cdots\tC_{l,N_l}})\Theta^{(m_1,\ldots, m_L)}_{\tB_1\cdots\tB_L\tC_1^{N_1}\cdots\tC_L^{N_L}}]\\
        &\leq c_1\Tr\qty[\qty(I-P^n_{\tB\tC_l}(a_l,\theta_{C_l}))\Tr_{\backslash \tB_l\tC_{l}}\qty[\mN^{\otimes n}_{\tA\to\tB_1\cdots\tB^n}(\theta^{\otimes n}_{\tA\tC_{l}})]]+c_2 \qty(2^{n(R_l+r_l)}-1)\Tr\qty[P^n_{\tB\tC_l}(a_l,\theta_{C_l})\Tr_{\backslash \tB_l}\qty[\mN^{\otimes n}_{\tA\to\tB_1\cdots\tB^n}(\theta^{\otimes n}_{\tA})]\otimes \theta_{\tC_l}].
    \end{aligned}
\end{equation}
Finally, due to Proposition~\ref{pro: semi universal detector}, we have
\begin{equation}
    \begin{aligned}
        &\Tr\qty[\qty(\sum_{k'\not \in\mB(m_l)}\Lambda^{k'}_{\tB_l\tC_{l,1}\cdots\tC_{l,N_l}})\Theta^{(m_1,\ldots, m_L)}_{\tB_1\cdots\tB_L\tC_1^{N_1}\cdots\tC_L^{N_L}}]\\
        &\leq c_1\frac{1}{{\rm poly}(n)}2^{-nt\qty(\pI^{\downarrow}_{1-t}(C_l:B_l)_{\mN(\theta)}-a_l)}+c_2{\rm poly}(n)2^{-n(a_l-R_l-r_l)}, ~\forall t\in(0,1).
    \end{aligned}
\end{equation}
Due to the definition of $a_l$, we can show that the error vanishes asymptotically, which concludes the proof.

\end{proof}

\section{Discussion}
In this work, we have shown that correlation detection—the task of distinguishing a correlated state from the corresponding product of its marginals—can be performed optimally without complete knowledge of the state under test. Building on this result, we used the universal correlation detector to construct a universal decoder and, by combining position-based decoding with convex splitting, obtained universal coding schemes for a variety of information-transmission tasks.
Remarkably, our strategy provides simple constructions of universal codes for Gel'fand–Pinsker channels and universal codes achieving Marton's inner bound for broadcast channels with $L\geq 2$ receivers, which are newly established in this work.

Our results provide a simple and unified framework for constructing universal communication protocols by exploiting permutation symmetry. The connection established here between universal correlation detection and universal decoding may also facilitate the development of universal protocols for other communication settings. Beyond communication, the universal correlation detector may find applications in quantum resource theories and quantum algorithms. More broadly, the ideas developed in this work may offer a useful route toward constructing state-agnostic protocols for other tasks in quantum information theory, such as learning correlations or universal resource distillation~\cite{Matsumoto_universal_entanglement, Watanabe_universal, Rizzo_2026_universal_magic, Lami_2026_universal_quantum_resource_distillation}.

\section*{Disclosure of AI use}
ChatGPT 5.6 Sol and 6 Astra were used to search for previous papers, assist in preparing the manuscript, and finding technical mistakes. The idea originated with the authors.

\section*{Acknowledgments}
We acknowledge the support of JSPS KAKENHI Grant Nos.\ 24K16975, 25K00924, 26H02015, 26KJ0965, the Japan Science and Technology Agency (JST) CREST Grant No.\ JPMJCR23I3, NEXUS Grant Number JPMJNX26C2, PRESTO Grant No.\ JPMJPR24FA, RIKEN iTHEMS, RIKEN Pioneering Project `Mathematical foundation of quantum information' (PI Yasuyuki Kawahigashi), and the World-Leading Innovative Graduate Study Program for Advanced Basic Science Course (WINGS-ABC) at the University of Tokyo.
The authors are indebted to Bartosz Regula and Vishal Singh for fruitful discussions.
KW thanks Masahito Hayashi for bringing the paper~\cite{Dasgupta_2025_universal_tester} to his attention during the poster session at QIP 2025, where they met for the first time.
Part of the discussion of this work took place at the YITP workshop YITP-W-26-07.

\bibliographystyle{apsrmp4-2}
\bibliography{myref_revised}

\begin{thebibliography}{93}%
\makeatletter
\providecommand \@ifxundefined [1]{%
 \@ifx{#1\undefined}
}%
\providecommand \@ifnum [1]{%
 \ifnum #1\expandafter \@firstoftwo
 \else \expandafter \@secondoftwo
 \fi
}%
\providecommand \@ifx [1]{%
 \ifx #1\expandafter \@firstoftwo
 \else \expandafter \@secondoftwo
 \fi
}%
\providecommand \natexlab [1]{#1}%
\providecommand \emph  [1]{``#1''}%
\providecommand \bibnamefont  [1]{#1}%
\providecommand \bibfnamefont [1]{#1}%
\providecommand \citenamefont [1]{#1}%
\providecommand \href@noop [0]{\@secondoftwo}%
\providecommand \href [0]{\begingroup \@sanitize@url \@href}%
\providecommand \@href[1]{\@@startlink{#1}\@@href}%
\providecommand \@@href[1]{\endgroup#1\@@endlink}%
\providecommand \@sanitize@url [0]{\catcode `\\12\catcode `\$12\catcode `\&12\catcode `\#12\catcode `\^12\catcode `\_12\catcode `\%12\relax}%
\providecommand \@@startlink[1]{}%
\providecommand \@@endlink[0]{}%
\providecommand \url  [0]{\begingroup\@sanitize@url \@url }%
\providecommand \@url [1]{\endgroup\@href {#1}{\urlprefix }}%
\providecommand \urlprefix  [0]{URL }%
\providecommand \Eprint [0]{\href }%
\providecommand \doibase [0]{http://dx.doi.org/}%
\providecommand \selectlanguage [0]{\@gobble}%
\providecommand \bibinfo  [0]{\@secondoftwo}%
\providecommand \bibfield  [0]{\@secondoftwo}%
\providecommand \translation [1]{[#1]}%
\providecommand \BibitemOpen [0]{}%
\providecommand \bibitemStop [0]{}%
\providecommand \bibitemNoStop [0]{.\EOS\space}%
\providecommand \EOS [0]{\spacefactor3000\relax}%
\providecommand \BibitemShut  [1]{\csname bibitem#1\endcsname}%
\let\auto@bib@innerbib\@empty
\bibitem [{\citenamefont {Horodecki}\ \emph {et~al.}(2009)\citenamefont {Horodecki}, \citenamefont {Horodecki}, \citenamefont {Horodecki},\ and\ \citenamefont {Horodecki}}]{Horodecki_2009_quantum_entanglement}%
  \BibitemOpen
  \bibfield  {author} {\bibinfo {author} {\bibfnamefont {R.}~\bibnamefont {Horodecki}}, \bibinfo {author} {\bibfnamefont {P.}~\bibnamefont {Horodecki}}, \bibinfo {author} {\bibfnamefont {M.}~\bibnamefont {Horodecki}}, \ and\ \bibinfo {author} {\bibfnamefont {K.}~\bibnamefont {Horodecki}},\ }\bibfield  {title} {\emph {\bibinfo {title} {Quantum entanglement},}\ }\href {http://dx.doi.org/10.1103/revmodphys.81.865} {\bibfield  {journal} {\bibinfo  {journal} {Rev. Mod. Phys.}\ }\textbf {\bibinfo {volume} {81}},\ \bibinfo {pages} {865–942} (\bibinfo {year} {2009})}\BibitemShut {NoStop}%
\bibitem [{\citenamefont {Henderson}\ and\ \citenamefont {Vedral}(2001)}]{L_Henderson_2001}%
  \BibitemOpen
  \bibfield  {author} {\bibinfo {author} {\bibfnamefont {L.}~\bibnamefont {Henderson}}\ and\ \bibinfo {author} {\bibfnamefont {V.}~\bibnamefont {Vedral}},\ }\bibfield  {title} {\emph {\bibinfo {title} {Classical, quantum and total correlations},}\ }\href {http://dx.doi.org/10.1088/0305-4470/34/35/315} {\bibfield  {journal} {\bibinfo  {journal} {J. Phys. A Math. Gen.}\ }\textbf {\bibinfo {volume} {34}},\ \bibinfo {pages} {6899} (\bibinfo {year} {2001})}\BibitemShut {NoStop}%
\bibitem [{\citenamefont {Ollivier}\ and\ \citenamefont {Zurek}(2001)}]{Harold_quantum_discord}%
  \BibitemOpen
  \bibfield  {author} {\bibinfo {author} {\bibfnamefont {H.}~\bibnamefont {Ollivier}}\ and\ \bibinfo {author} {\bibfnamefont {W.~H.}\ \bibnamefont {Zurek}},\ }\bibfield  {title} {\emph {\bibinfo {title} {Quantum discord: A measure of the quantumness of correlations},}\ }\href {http://dx.doi.org/10.1103/PhysRevLett.88.017901} {\bibfield  {journal} {\bibinfo  {journal} {Phys. Rev. Lett.}\ }\textbf {\bibinfo {volume} {88}},\ \bibinfo {pages} {017901} (\bibinfo {year} {2001})}\BibitemShut {NoStop}%
\bibitem [{\citenamefont {Wilde}(2016)}]{Wilde_2016_book}%
  \BibitemOpen
  \bibfield  {author} {\bibinfo {author} {\bibfnamefont {M.~M.}\ \bibnamefont {Wilde}},\ }\href {http://dx.doi.org/10.1017/9781316809976} {\emph {\bibinfo {title} {Quantum Information Theory}}}\ (\bibinfo  {publisher} {Cambridge University Press},\ \bibinfo {year} {2016})\BibitemShut {NoStop}%
\bibitem [{\citenamefont {Khatri}\ and\ \citenamefont {Wilde}(2024{\natexlab{a}})}]{khatri_2024_book}%
  \BibitemOpen
  \bibfield  {author} {\bibinfo {author} {\bibfnamefont {S.}~\bibnamefont {Khatri}}\ and\ \bibinfo {author} {\bibfnamefont {M.~M.}\ \bibnamefont {Wilde}},\ }\href {https://arxiv.org/abs/2011.04672} {\emph {\bibinfo {title} {Principles of quantum communication theory: A modern approach},}\ } (\bibinfo {year} {2024}{\natexlab{a}}),\ \Eprint {http://arxiv.org/abs/2011.04672} {arXiv:2011.04672 [quant-ph]} \BibitemShut {NoStop}%
\bibitem [{\citenamefont {Chitambar}\ and\ \citenamefont {Gour}(2019)}]{Chitamber_gour}%
  \BibitemOpen
  \bibfield  {author} {\bibinfo {author} {\bibfnamefont {E.}~\bibnamefont {Chitambar}}\ and\ \bibinfo {author} {\bibfnamefont {G.}~\bibnamefont {Gour}},\ }\bibfield  {title} {\emph {\bibinfo {title} {Quantum resource theories},}\ }\href {http://dx.doi.org/10.1103/RevModPhys.91.025001} {\bibfield  {journal} {\bibinfo  {journal} {Rev. Mod. Phys.}\ }\textbf {\bibinfo {volume} {91}},\ \bibinfo {pages} {025001} (\bibinfo {year} {2019})}\BibitemShut {NoStop}%
\bibitem [{\citenamefont {Gour}(2025)}]{Gour_2025_book}%
  \BibitemOpen
  \bibfield  {author} {\bibinfo {author} {\bibfnamefont {G.}~\bibnamefont {Gour}},\ }\href {http://dx.doi.org/10.1017/9781009560870} {\emph {\bibinfo {title} {Quantum Resource Theories}}}\ (\bibinfo  {publisher} {Cambridge University Press},\ \bibinfo {year} {2025})\BibitemShut {NoStop}%
\bibitem [{\citenamefont {Hayashi}\ and\ \citenamefont {Tomamichel}(2016)}]{Hayashi_2016_correlation_detection}%
  \BibitemOpen
  \bibfield  {author} {\bibinfo {author} {\bibfnamefont {M.}~\bibnamefont {Hayashi}}\ and\ \bibinfo {author} {\bibfnamefont {M.}~\bibnamefont {Tomamichel}},\ }\bibfield  {title} {\emph {\bibinfo {title} {Correlation detection and an operational interpretation of the rényi mutual information},}\ }\href {http://dx.doi.org/10.1063/1.4964755} {\bibfield  {journal} {\bibinfo  {journal} {J. Math. Phys.}\ }\textbf {\bibinfo {volume} {57}},\ \bibinfo {pages} {102201} (\bibinfo {year} {2016})}\BibitemShut {NoStop}%
\bibitem [{\citenamefont {Girardi}\ \emph {et~al.}(2025{\natexlab{a}})\citenamefont {Girardi}, \citenamefont {Oufkir}, \citenamefont {Regula}, \citenamefont {Tomamichel}, \citenamefont {Berta},\ and\ \citenamefont {Lami}}]{girardi2025umlaut_information}%
  \BibitemOpen
  \bibfield  {author} {\bibinfo {author} {\bibfnamefont {F.}~\bibnamefont {Girardi}}, \bibinfo {author} {\bibfnamefont {A.}~\bibnamefont {Oufkir}}, \bibinfo {author} {\bibfnamefont {B.}~\bibnamefont {Regula}}, \bibinfo {author} {\bibfnamefont {M.}~\bibnamefont {Tomamichel}}, \bibinfo {author} {\bibfnamefont {M.}~\bibnamefont {Berta}}, \ and\ \bibinfo {author} {\bibfnamefont {L.}~\bibnamefont {Lami}},\ }\href {https://arxiv.org/abs/2503.18910} {\emph {\bibinfo {title} {Umlaut information},}\ } (\bibinfo {year} {2025}{\natexlab{a}}),\ \Eprint {http://arxiv.org/abs/2503.18910} {arXiv:2503.18910 [cs.IT]} \BibitemShut {NoStop}%
\bibitem [{\citenamefont {Girardi}\ \emph {et~al.}(2025{\natexlab{b}})\citenamefont {Girardi}, \citenamefont {Oufkir}, \citenamefont {Regula}, \citenamefont {Tomamichel}, \citenamefont {Berta},\ and\ \citenamefont {Lami}}]{girardi2025quantum_umlaut}%
  \BibitemOpen
  \bibfield  {author} {\bibinfo {author} {\bibfnamefont {F.}~\bibnamefont {Girardi}}, \bibinfo {author} {\bibfnamefont {A.}~\bibnamefont {Oufkir}}, \bibinfo {author} {\bibfnamefont {B.}~\bibnamefont {Regula}}, \bibinfo {author} {\bibfnamefont {M.}~\bibnamefont {Tomamichel}}, \bibinfo {author} {\bibfnamefont {M.}~\bibnamefont {Berta}}, \ and\ \bibinfo {author} {\bibfnamefont {L.}~\bibnamefont {Lami}},\ }\href {https://arxiv.org/abs/2503.21479} {\emph {\bibinfo {title} {Quantum umlaut information},}\ } (\bibinfo {year} {2025}{\natexlab{b}}),\ \Eprint {http://arxiv.org/abs/2503.21479} {arXiv:2503.21479 [quant-ph]} \BibitemShut {NoStop}%
\bibitem [{\citenamefont {Dasgupta}\ \emph {et~al.}(2025)\citenamefont {Dasgupta}, \citenamefont {Warsi},\ and\ \citenamefont {Hayashi}}]{Dasgupta_2025_universal_tester}%
  \BibitemOpen
  \bibfield  {author} {\bibinfo {author} {\bibfnamefont {A.}~\bibnamefont {Dasgupta}}, \bibinfo {author} {\bibfnamefont {N.~A.}\ \bibnamefont {Warsi}}, \ and\ \bibinfo {author} {\bibfnamefont {M.}~\bibnamefont {Hayashi}},\ }\bibfield  {title} {\emph {\bibinfo {title} {Universal tester for multiple independence testing and classical-quantum arbitrarily varying multiple access channel},}\ }\href {http://dx.doi.org/10.1109/tit.2025.3538870} {\bibfield  {journal} {\bibinfo  {journal} {IEEE Trans. Inf. Theory}\ }\textbf {\bibinfo {volume} {71}},\ \bibinfo {pages} {3719–3765} (\bibinfo {year} {2025})}\BibitemShut {NoStop}%
\bibitem [{\citenamefont {Khatri}\ \emph {et~al.}(2021)\citenamefont {Khatri}, \citenamefont {Kaur}, \citenamefont {Guha},\ and\ \citenamefont {Wilde}}]{khatri_2021_second_order}%
  \BibitemOpen
  \bibfield  {author} {\bibinfo {author} {\bibfnamefont {S.}~\bibnamefont {Khatri}}, \bibinfo {author} {\bibfnamefont {E.}~\bibnamefont {Kaur}}, \bibinfo {author} {\bibfnamefont {S.}~\bibnamefont {Guha}}, \ and\ \bibinfo {author} {\bibfnamefont {M.~M.}\ \bibnamefont {Wilde}},\ }\href {https://arxiv.org/abs/1910.03883} {\emph {\bibinfo {title} {Second-order coding rates for key distillation in quantum key distribution},}\ } (\bibinfo {year} {2021}),\ \Eprint {http://arxiv.org/abs/1910.03883} {arXiv:1910.03883 [quant-ph]} \BibitemShut {NoStop}%
\bibitem [{\citenamefont {Holevo}(1998)}]{Holevo_1996_capacity}%
  \BibitemOpen
  \bibfield  {author} {\bibinfo {author} {\bibfnamefont {A.}~\bibnamefont {Holevo}},\ }\bibfield  {title} {\emph {\bibinfo {title} {The capacity of the quantum channel with general signal states},}\ }\href {http://dx.doi.org/10.1109/18.651037} {\bibfield  {journal} {\bibinfo  {journal} {IEEE Trans. Inf. Theory}\ }\textbf {\bibinfo {volume} {44}},\ \bibinfo {pages} {269} (\bibinfo {year} {1998})}\BibitemShut {NoStop}%
\bibitem [{\citenamefont {Schumacher}\ and\ \citenamefont {Westmoreland}(1997)}]{SW_theorem}%
  \BibitemOpen
  \bibfield  {author} {\bibinfo {author} {\bibfnamefont {B.}~\bibnamefont {Schumacher}}\ and\ \bibinfo {author} {\bibfnamefont {M.~D.}\ \bibnamefont {Westmoreland}},\ }\bibfield  {title} {\emph {\bibinfo {title} {Sending classical information via noisy quantum channels},}\ }\href {http://dx.doi.org/10.1103/PhysRevA.56.131} {\bibfield  {journal} {\bibinfo  {journal} {Phys. Rev. A}\ }\textbf {\bibinfo {volume} {56}},\ \bibinfo {pages} {131} (\bibinfo {year} {1997})}\BibitemShut {NoStop}%
\bibitem [{\citenamefont {Bennett}\ \emph {et~al.}(2002)\citenamefont {Bennett}, \citenamefont {Shor}, \citenamefont {Smolin},\ and\ \citenamefont {Thapliyal}}]{Bennett_2002_Entanglement_assisted}%
  \BibitemOpen
  \bibfield  {author} {\bibinfo {author} {\bibfnamefont {C.}~\bibnamefont {Bennett}}, \bibinfo {author} {\bibfnamefont {P.}~\bibnamefont {Shor}}, \bibinfo {author} {\bibfnamefont {J.}~\bibnamefont {Smolin}}, \ and\ \bibinfo {author} {\bibfnamefont {A.}~\bibnamefont {Thapliyal}},\ }\bibfield  {title} {\emph {\bibinfo {title} {Entanglement-assisted capacity of a quantum channel and the reverse shannon theorem},}\ }\href {http://dx.doi.org/10.1109/TIT.2002.802612} {\bibfield  {journal} {\bibinfo  {journal} {IEEE Trans. Inf. Theory}\ }\textbf {\bibinfo {volume} {48}},\ \bibinfo {pages} {2637} (\bibinfo {year} {2002})}\BibitemShut {NoStop}%
\bibitem [{\citenamefont {Holevo}(2002)}]{Holevo_2002_entanglement_assisted}%
  \BibitemOpen
  \bibfield  {author} {\bibinfo {author} {\bibfnamefont {A.~S.}\ \bibnamefont {Holevo}},\ }\bibfield  {title} {\emph {\bibinfo {title} {On entanglement-assisted classical capacity},}\ }\href {http://dx.doi.org/10.1063/1.1495877} {\bibfield  {journal} {\bibinfo  {journal} {J. Math. Phys.}\ }\textbf {\bibinfo {volume} {43}},\ \bibinfo {pages} {4326–4333} (\bibinfo {year} {2002})}\BibitemShut {NoStop}%
\bibitem [{\citenamefont {Dupuis}(2009)}]{Dupuis_2009_capacity_side_information}%
  \BibitemOpen
  \bibfield  {author} {\bibinfo {author} {\bibfnamefont {F.}~\bibnamefont {Dupuis}},\ }\bibfield  {title} {\emph {\bibinfo {title} {The capacity of quantum channels with side information at the transmitter},}\ }in\ \href {http://dx.doi.org/10.1109/isit.2009.5205591} {\emph {\bibinfo {booktitle} {2009 IEEE International Symposium on Information Theory}}}\ (\bibinfo  {publisher} {IEEE},\ \bibinfo {year} {2009})\ p.\ \bibinfo {pages} {948–952}\BibitemShut {NoStop}%
\bibitem [{\citenamefont {Devetak}(2005)}]{Devetak_2004_private_classical_capacity}%
  \BibitemOpen
  \bibfield  {author} {\bibinfo {author} {\bibfnamefont {I.}~\bibnamefont {Devetak}},\ }\bibfield  {title} {\emph {\bibinfo {title} {The private classical capacity and quantum capacity of a quantum channel},}\ }\href {http://dx.doi.org/10.1109/TIT.2004.839515} {\bibfield  {journal} {\bibinfo  {journal} {IEEE Trans. Inf. Theory}\ }\textbf {\bibinfo {volume} {51}},\ \bibinfo {pages} {44} (\bibinfo {year} {2005})}\BibitemShut {NoStop}%
\bibitem [{\citenamefont {Lloyd}(1997)}]{Lloyd_1997_capacity}%
  \BibitemOpen
  \bibfield  {author} {\bibinfo {author} {\bibfnamefont {S.}~\bibnamefont {Lloyd}},\ }\bibfield  {title} {\emph {\bibinfo {title} {Capacity of the noisy quantum channel},}\ }\href {http://dx.doi.org/10.1103/physreva.55.1613} {\bibfield  {journal} {\bibinfo  {journal} {Phys. Rev. A}\ }\textbf {\bibinfo {volume} {55}},\ \bibinfo {pages} {1613–1622} (\bibinfo {year} {1997})}\BibitemShut {NoStop}%
\bibitem [{\citenamefont {Anshu}\ \emph {et~al.}(2019{\natexlab{a}})\citenamefont {Anshu}, \citenamefont {Jain},\ and\ \citenamefont {Warsi}}]{Anshu_2019_building}%
  \BibitemOpen
  \bibfield  {author} {\bibinfo {author} {\bibfnamefont {A.}~\bibnamefont {Anshu}}, \bibinfo {author} {\bibfnamefont {R.}~\bibnamefont {Jain}}, \ and\ \bibinfo {author} {\bibfnamefont {N.~A.}\ \bibnamefont {Warsi}},\ }\bibfield  {title} {\emph {\bibinfo {title} {Building blocks for communication over noisy quantum networks},}\ }\href {http://dx.doi.org/10.1109/tit.2018.2851297} {\bibfield  {journal} {\bibinfo  {journal} {IEEE Trans. Inf. Theory}\ }\textbf {\bibinfo {volume} {65}},\ \bibinfo {pages} {1287–1306} (\bibinfo {year} {2019}{\natexlab{a}})}\BibitemShut {NoStop}%
\bibitem [{\citenamefont {Anshu}\ \emph {et~al.}(2019{\natexlab{b}})\citenamefont {Anshu}, \citenamefont {Jain},\ and\ \citenamefont {Warsi}}]{Anshu_2019_on_the_near}%
  \BibitemOpen
  \bibfield  {author} {\bibinfo {author} {\bibfnamefont {A.}~\bibnamefont {Anshu}}, \bibinfo {author} {\bibfnamefont {R.}~\bibnamefont {Jain}}, \ and\ \bibinfo {author} {\bibfnamefont {N.~A.}\ \bibnamefont {Warsi}},\ }\bibfield  {title} {\emph {\bibinfo {title} {On the near-optimality of one-shot classical communication over quantum channels},}\ }\href {http://dx.doi.org/10.1063/1.5039796} {\bibfield  {journal} {\bibinfo  {journal} {J. Math. Phys.}\ }\textbf {\bibinfo {volume} {60}},\ \bibinfo {pages} {012204} (\bibinfo {year} {2019}{\natexlab{b}})}\BibitemShut {NoStop}%
\bibitem [{\citenamefont {Wilde}(2017)}]{Wilde_2017_position_based}%
  \BibitemOpen
  \bibfield  {author} {\bibinfo {author} {\bibfnamefont {M.~M.}\ \bibnamefont {Wilde}},\ }\bibfield  {title} {\emph {\bibinfo {title} {Position-based coding and convex splitting for private communication over quantum channels},}\ }\href {http://dx.doi.org/10.1007/s11128-017-1718-4} {\bibfield  {journal} {\bibinfo  {journal} {Quantum Inf. Process.}\ }\textbf {\bibinfo {volume} {16}},\ \bibinfo {pages} {264} (\bibinfo {year} {2017})}\BibitemShut {NoStop}%
\bibitem [{\citenamefont {Umegaki}(1962)}]{Umegaki_relative}%
  \BibitemOpen
  \bibfield  {author} {\bibinfo {author} {\bibfnamefont {H.}~\bibnamefont {Umegaki}},\ }\bibfield  {title} {\emph {\bibinfo {title} {Conditional expectation in an operator algebra, iv (entropy and information)},}\ }\href {http://dx.doi.org/10.2996/kmj/1138844604} {\bibfield  {journal} {\bibinfo  {journal} {Kodai Math. Semin. Rep.}\ }\textbf {\bibinfo {volume} {14}},\ \bibinfo {pages} {59} (\bibinfo {year} {1962})}\BibitemShut {NoStop}%
\bibitem [{\citenamefont {Petz}(1986)}]{Petz_1986_quasi_entropy}%
  \BibitemOpen
  \bibfield  {author} {\bibinfo {author} {\bibfnamefont {D.}~\bibnamefont {Petz}},\ }\bibfield  {title} {\emph {\bibinfo {title} {Quasi-entropies for finite quantum systems},}\ }\href {http://dx.doi.org/https://doi.org/10.1016/0034-4877(86)90067-4} {\bibfield  {journal} {\bibinfo  {journal} {Rep. Math. Phys.}\ }\textbf {\bibinfo {volume} {23}},\ \bibinfo {pages} {57} (\bibinfo {year} {1986})}\BibitemShut {NoStop}%
\bibitem [{\citenamefont {Müller-Lennert}\ \emph {et~al.}(2013)\citenamefont {Müller-Lennert}, \citenamefont {Dupuis}, \citenamefont {Szehr}, \citenamefont {Fehr},\ and\ \citenamefont {Tomamichel}}]{Muller_Lennert_2013_on_quantum_renyi}%
  \BibitemOpen
  \bibfield  {author} {\bibinfo {author} {\bibfnamefont {M.}~\bibnamefont {Müller-Lennert}}, \bibinfo {author} {\bibfnamefont {F.}~\bibnamefont {Dupuis}}, \bibinfo {author} {\bibfnamefont {O.}~\bibnamefont {Szehr}}, \bibinfo {author} {\bibfnamefont {S.}~\bibnamefont {Fehr}}, \ and\ \bibinfo {author} {\bibfnamefont {M.}~\bibnamefont {Tomamichel}},\ }\bibfield  {title} {\emph {\bibinfo {title} {On quantum {R}ényi entropies: A new generalization and some properties},}\ }\href {http://dx.doi.org/10.1063/1.4838856} {\bibfield  {journal} {\bibinfo  {journal} {J. Math. Phys.}\ }\textbf {\bibinfo {volume} {54}},\ \bibinfo {pages} {122203} (\bibinfo {year} {2013})}\BibitemShut {NoStop}%
\bibitem [{\citenamefont {Wilde}\ \emph {et~al.}(2014)\citenamefont {Wilde}, \citenamefont {Winter},\ and\ \citenamefont {Yang}}]{Wilde_2014_strong_converse}%
  \BibitemOpen
  \bibfield  {author} {\bibinfo {author} {\bibfnamefont {M.~M.}\ \bibnamefont {Wilde}}, \bibinfo {author} {\bibfnamefont {A.}~\bibnamefont {Winter}}, \ and\ \bibinfo {author} {\bibfnamefont {D.}~\bibnamefont {Yang}},\ }\bibfield  {title} {\emph {\bibinfo {title} {Strong converse for the classical capacity of entanglement-breaking and hadamard channels via a sandwiched {R}ényi relative entropy},}\ }\href {http://dx.doi.org/10.1007/s00220-014-2122-x} {\bibfield  {journal} {\bibinfo  {journal} {Commun. Math. Phys.}\ }\textbf {\bibinfo {volume} {331}},\ \bibinfo {pages} {593–622} (\bibinfo {year} {2014})}\BibitemShut {NoStop}%
\bibitem [{\citenamefont {Modi}\ \emph {et~al.}(2010)\citenamefont {Modi}, \citenamefont {Paterek}, \citenamefont {Son}, \citenamefont {Vedral},\ and\ \citenamefont {Williamson}}]{Modi_2010_unified}%
  \BibitemOpen
  \bibfield  {author} {\bibinfo {author} {\bibfnamefont {K.}~\bibnamefont {Modi}}, \bibinfo {author} {\bibfnamefont {T.}~\bibnamefont {Paterek}}, \bibinfo {author} {\bibfnamefont {W.}~\bibnamefont {Son}}, \bibinfo {author} {\bibfnamefont {V.}~\bibnamefont {Vedral}}, \ and\ \bibinfo {author} {\bibfnamefont {M.}~\bibnamefont {Williamson}},\ }\bibfield  {title} {\emph {\bibinfo {title} {Unified view of quantum and classical correlations},}\ }\href {http://dx.doi.org/10.1103/physrevlett.104.080501} {\bibfield  {journal} {\bibinfo  {journal} {Phys. Rev. Lett.}\ }\textbf {\bibinfo {volume} {104}},\ \bibinfo {pages} {080501} (\bibinfo {year} {2010})}\BibitemShut {NoStop}%
\bibitem [{\citenamefont {Avis}\ \emph {et~al.}(2008)\citenamefont {Avis}, \citenamefont {Hayden},\ and\ \citenamefont {Savov}}]{Avis_2008_distributed_compression}%
  \BibitemOpen
  \bibfield  {author} {\bibinfo {author} {\bibfnamefont {D.}~\bibnamefont {Avis}}, \bibinfo {author} {\bibfnamefont {P.}~\bibnamefont {Hayden}}, \ and\ \bibinfo {author} {\bibfnamefont {I.}~\bibnamefont {Savov}},\ }\bibfield  {title} {\emph {\bibinfo {title} {Distributed compression and multiparty squashed entanglement},}\ }\href {http://dx.doi.org/10.1088/1751-8113/41/11/115301} {\bibfield  {journal} {\bibinfo  {journal} {J. Phys. A Math. Theor.}\ }\textbf {\bibinfo {volume} {41}},\ \bibinfo {pages} {115301} (\bibinfo {year} {2008})}\BibitemShut {NoStop}%
\bibitem [{\citenamefont {Sharma}\ and\ \citenamefont {Warsi}(2013)}]{Sharma_2013_fundamental_bound}%
  \BibitemOpen
  \bibfield  {author} {\bibinfo {author} {\bibfnamefont {N.}~\bibnamefont {Sharma}}\ and\ \bibinfo {author} {\bibfnamefont {N.~A.}\ \bibnamefont {Warsi}},\ }\bibfield  {title} {\emph {\bibinfo {title} {Fundamental bound on the reliability of quantum information transmission},}\ }\href {http://dx.doi.org/10.1103/physrevlett.110.080501} {\bibfield  {journal} {\bibinfo  {journal} {Phys. Rev. Lett.}\ }\textbf {\bibinfo {volume} {110}},\ \bibinfo {pages} {080501} (\bibinfo {year} {2013})}\BibitemShut {NoStop}%
\bibitem [{\citenamefont {Berta}\ \emph {et~al.}(2025)\citenamefont {Berta}, \citenamefont {Cheng},\ and\ \citenamefont {Gao}}]{Cheng_2023_quantum_broadcast}%
  \BibitemOpen
  \bibfield  {author} {\bibinfo {author} {\bibfnamefont {M.}~\bibnamefont {Berta}}, \bibinfo {author} {\bibfnamefont {H.-C.}\ \bibnamefont {Cheng}}, \ and\ \bibinfo {author} {\bibfnamefont {L.}~\bibnamefont {Gao}},\ }\bibfield  {title} {\emph {\bibinfo {title} {Quantum {Broadcast} {Channel} {Simulation} via {Multipartite} {Convex} {Splitting}},}\ }\href {http://dx.doi.org/10.1007/s00220-024-05191-4} {\bibfield  {journal} {\bibinfo  {journal} {Communications in Mathematical Physics}\ }\textbf {\bibinfo {volume} {406}},\ \bibinfo {pages} {36} (\bibinfo {year} {2025})}\BibitemShut {NoStop}%
\bibitem [{\citenamefont {Hayashi}(2009)}]{Hayashi_2009_universal_coding}%
  \BibitemOpen
  \bibfield  {author} {\bibinfo {author} {\bibfnamefont {M.}~\bibnamefont {Hayashi}},\ }\bibfield  {title} {\emph {\bibinfo {title} {Universal coding for classical-quantum channel},}\ }\href {http://dx.doi.org/10.1007/s00220-009-0825-1} {\bibfield  {journal} {\bibinfo  {journal} {Commun. Math. Phys.}\ }\textbf {\bibinfo {volume} {289}},\ \bibinfo {pages} {1087–1098} (\bibinfo {year} {2009})}\BibitemShut {NoStop}%
\bibitem [{\citenamefont {Matsuura}\ \emph {et~al.}(2025)\citenamefont {Matsuura}, \citenamefont {Hayashi},\ and\ \citenamefont {Hsieh}}]{Matsuura_2025_universal_resolvability}%
  \BibitemOpen
  \bibfield  {author} {\bibinfo {author} {\bibfnamefont {T.}~\bibnamefont {Matsuura}}, \bibinfo {author} {\bibfnamefont {M.}~\bibnamefont {Hayashi}}, \ and\ \bibinfo {author} {\bibfnamefont {M.-H.}\ \bibnamefont {Hsieh}},\ }\href {https://arxiv.org/abs/2510.02883} {\emph {\bibinfo {title} {Universal classical-quantum channel resolvability and private channel coding},}\ } (\bibinfo {year} {2025}),\ \Eprint {http://arxiv.org/abs/2510.02883} {arXiv:2510.02883 [quant-ph]} \BibitemShut {NoStop}%
\bibitem [{\citenamefont {Fang}\ and\ \citenamefont {Hayashi}(2026)}]{Fang_2026_error_exponent}%
  \BibitemOpen
  \bibfield  {author} {\bibinfo {author} {\bibfnamefont {K.}~\bibnamefont {Fang}}\ and\ \bibinfo {author} {\bibfnamefont {M.}~\bibnamefont {Hayashi}},\ }\bibfield  {title} {\emph {\bibinfo {title} {Error exponents of quantum state discrimination with composite correlated hypotheses},}\ }\href {http://dx.doi.org/10.1109/tit.2026.3684314} {\bibfield  {journal} {\bibinfo  {journal} {IEEE Trans. Inf. Theory}\ }\textbf {\bibinfo {volume} {72}},\ \bibinfo {pages} {4140–4165} (\bibinfo {year} {2026})}\BibitemShut {NoStop}%
\bibitem [{\citenamefont {Cover}\ and\ \citenamefont {Thomas}(2006)}]{cover_1999_elements}%
  \BibitemOpen
  \bibfield  {author} {\bibinfo {author} {\bibfnamefont {T.~M.}\ \bibnamefont {Cover}}\ and\ \bibinfo {author} {\bibfnamefont {J.~A.}\ \bibnamefont {Thomas}},\ }\href {http://dx.doi.org/10.1002/047174882X} {\emph {\bibinfo {title} {Elements of Information Theory}}},\ \bibinfo {edition} {2nd}\ ed.\ (\bibinfo  {publisher} {John Wiley \& Sons},\ \bibinfo {year} {2006})\BibitemShut {NoStop}%
\bibitem [{\citenamefont {Anshu}\ \emph {et~al.}(2017)\citenamefont {Anshu}, \citenamefont {Devabathini},\ and\ \citenamefont {Jain}}]{Anshu2017quantum}%
  \BibitemOpen
  \bibfield  {author} {\bibinfo {author} {\bibfnamefont {A.}~\bibnamefont {Anshu}}, \bibinfo {author} {\bibfnamefont {V.~K.}\ \bibnamefont {Devabathini}}, \ and\ \bibinfo {author} {\bibfnamefont {R.}~\bibnamefont {Jain}},\ }\bibfield  {title} {\emph {\bibinfo {title} {Quantum communication using coherent rejection sampling},}\ }\href {http://dx.doi.org/10.1103/PhysRevLett.119.120506} {\bibfield  {journal} {\bibinfo  {journal} {Phys. Rev. Lett.}\ }\textbf {\bibinfo {volume} {119}},\ \bibinfo {pages} {120506} (\bibinfo {year} {2017})}\BibitemShut {NoStop}%
\bibitem [{\citenamefont {Cheng}\ and\ \citenamefont {Gao}(2025)}]{Cheng_2023_tight_convex_splitting}%
  \BibitemOpen
  \bibfield  {author} {\bibinfo {author} {\bibfnamefont {H.-C.}\ \bibnamefont {Cheng}}\ and\ \bibinfo {author} {\bibfnamefont {L.}~\bibnamefont {Gao}},\ }\bibfield  {title} {\emph {\bibinfo {title} {Tight one-shot analysis for convex splitting with applications in quantum information theory},}\ }\href {http://dx.doi.org/10.1109/TIT.2025.3612051} {\bibfield  {journal} {\bibinfo  {journal} {IEEE Transactions on Information Theory}\ }\textbf {\bibinfo {volume} {71}},\ \bibinfo {pages} {8573} (\bibinfo {year} {2025})}\BibitemShut {NoStop}%
\bibitem [{\citenamefont {Fuchs}\ and\ \citenamefont {van~de Graaf}(1999)}]{Fuchs_van_de_graaf}%
  \BibitemOpen
  \bibfield  {author} {\bibinfo {author} {\bibfnamefont {C.}~\bibnamefont {Fuchs}}\ and\ \bibinfo {author} {\bibfnamefont {J.}~\bibnamefont {van~de Graaf}},\ }\bibfield  {title} {\emph {\bibinfo {title} {Cryptographic distinguishability measures for quantum-mechanical states},}\ }\href {http://dx.doi.org/10.1109/18.761271} {\bibfield  {journal} {\bibinfo  {journal} {IEEE Trans. Inf. Theory}\ }\textbf {\bibinfo {volume} {45}},\ \bibinfo {pages} {1216} (\bibinfo {year} {1999})}\BibitemShut {NoStop}%
\bibitem [{\citenamefont {Hayashi}\ and\ \citenamefont {Nagaoka}(2002)}]{Hayashi_Nagaoka}%
  \BibitemOpen
  \bibfield  {author} {\bibinfo {author} {\bibfnamefont {M.}~\bibnamefont {Hayashi}}\ and\ \bibinfo {author} {\bibfnamefont {H.}~\bibnamefont {Nagaoka}},\ }\bibfield  {title} {\emph {\bibinfo {title} {A general formula for the classical capacity of a general quantum channel},}\ }in\ \href {http://dx.doi.org/10.1109/ISIT.2002.1023343} {\emph {\bibinfo {booktitle} {IEEE Int. Symp. Inf. Theory - Proc.,}}}\ (\bibinfo {year} {2002})\ p.~\bibinfo {pages} {71}\BibitemShut {NoStop}%
\bibitem [{\citenamefont {Anshu}\ \emph {et~al.}(2019{\natexlab{c}})\citenamefont {Anshu}, \citenamefont {Jain},\ and\ \citenamefont {Warsi}}]{Anshu_2019_hypothesis_testing}%
  \BibitemOpen
  \bibfield  {author} {\bibinfo {author} {\bibfnamefont {A.}~\bibnamefont {Anshu}}, \bibinfo {author} {\bibfnamefont {R.}~\bibnamefont {Jain}}, \ and\ \bibinfo {author} {\bibfnamefont {N.~A.}\ \bibnamefont {Warsi}},\ }\bibfield  {title} {\emph {\bibinfo {title} {A hypothesis testing approach for communication over entanglement-assisted compound quantum channel},}\ }\href {http://dx.doi.org/10.1109/tit.2018.2876280} {\bibfield  {journal} {\bibinfo  {journal} {IEEE Trans. Inf. Theory}\ }\textbf {\bibinfo {volume} {65}},\ \bibinfo {pages} {2623–2636} (\bibinfo {year} {2019}{\natexlab{c}})}\BibitemShut {NoStop}%
\bibitem [{\citenamefont {Polyanskiy}\ \emph {et~al.}(2010)\citenamefont {Polyanskiy}, \citenamefont {Poor},\ and\ \citenamefont {Verdu}}]{Polyanskiy_2010_channel_coding}%
  \BibitemOpen
  \bibfield  {author} {\bibinfo {author} {\bibfnamefont {Y.}~\bibnamefont {Polyanskiy}}, \bibinfo {author} {\bibfnamefont {H.~V.}\ \bibnamefont {Poor}}, \ and\ \bibinfo {author} {\bibfnamefont {S.}~\bibnamefont {Verdu}},\ }\bibfield  {title} {\emph {\bibinfo {title} {Channel coding rate in the finite blocklength regime},}\ }\href {http://dx.doi.org/10.1109/TIT.2010.2043769} {\bibfield  {journal} {\bibinfo  {journal} {IEEE Trans. Inf. Theory}\ }\textbf {\bibinfo {volume} {56}},\ \bibinfo {pages} {2307} (\bibinfo {year} {2010})}\BibitemShut {NoStop}%
\bibitem [{\citenamefont {Cheng}(2023)}]{Cheng_2023_simple_and_tighter}%
  \BibitemOpen
  \bibfield  {author} {\bibinfo {author} {\bibfnamefont {H.-C.}\ \bibnamefont {Cheng}},\ }\bibfield  {title} {\emph {\bibinfo {title} {Simple and tighter derivation of achievability for classical communication over quantum channels},}\ }\href {http://dx.doi.org/10.1103/PRXQuantum.4.040330} {\bibfield  {journal} {\bibinfo  {journal} {PRX Quantum}\ }\textbf {\bibinfo {volume} {4}},\ \bibinfo {pages} {040330} (\bibinfo {year} {2023})}\BibitemShut {NoStop}%
\bibitem [{\citenamefont {Liu}\ \emph {et~al.}(2019)\citenamefont {Liu}, \citenamefont {Bu},\ and\ \citenamefont {Takagi}}]{Liu_one_shot}%
  \BibitemOpen
  \bibfield  {author} {\bibinfo {author} {\bibfnamefont {Z.-W.}\ \bibnamefont {Liu}}, \bibinfo {author} {\bibfnamefont {K.}~\bibnamefont {Bu}}, \ and\ \bibinfo {author} {\bibfnamefont {R.}~\bibnamefont {Takagi}},\ }\bibfield  {title} {\emph {\bibinfo {title} {One-shot operational quantum resource theory},}\ }\href {http://dx.doi.org/10.1103/PhysRevLett.123.020401} {\bibfield  {journal} {\bibinfo  {journal} {Phys. Rev. Lett.}\ }\textbf {\bibinfo {volume} {123}},\ \bibinfo {pages} {020401} (\bibinfo {year} {2019})}\BibitemShut {NoStop}%
\bibitem [{\citenamefont {Takagi}\ \emph {et~al.}(2022)\citenamefont {Takagi}, \citenamefont {Regula},\ and\ \citenamefont {Wilde}}]{Takagi_One_shot}%
  \BibitemOpen
  \bibfield  {author} {\bibinfo {author} {\bibfnamefont {R.}~\bibnamefont {Takagi}}, \bibinfo {author} {\bibfnamefont {B.}~\bibnamefont {Regula}}, \ and\ \bibinfo {author} {\bibfnamefont {M.~M.}\ \bibnamefont {Wilde}},\ }\bibfield  {title} {\emph {\bibinfo {title} {One-shot yield-cost relations in general quantum resource theories},}\ }\href {http://dx.doi.org/10.1103/PRXQuantum.3.010348} {\bibfield  {journal} {\bibinfo  {journal} {PRX Quantum}\ }\textbf {\bibinfo {volume} {3}},\ \bibinfo {pages} {010348} (\bibinfo {year} {2022})}\BibitemShut {NoStop}%
\bibitem [{\citenamefont {Regula}\ and\ \citenamefont {Takagi}(2021)}]{Regula_Takagi_2021}%
  \BibitemOpen
  \bibfield  {author} {\bibinfo {author} {\bibfnamefont {B.}~\bibnamefont {Regula}}\ and\ \bibinfo {author} {\bibfnamefont {R.}~\bibnamefont {Takagi}},\ }\bibfield  {title} {\emph {\bibinfo {title} {One-shot manipulation of dynamical quantum resources},}\ }\href {http://dx.doi.org/10.1103/physrevlett.127.060402} {\bibfield  {journal} {\bibinfo  {journal} {Phys. Rev. Lett.}\ }\textbf {\bibinfo {volume} {127}},\ \bibinfo {pages} {060402} (\bibinfo {year} {2021})}\BibitemShut {NoStop}%
\bibitem [{\citenamefont {Regula}\ \emph {et~al.}(2020)\citenamefont {Regula}, \citenamefont {Bu}, \citenamefont {Takagi},\ and\ \citenamefont {Liu}}]{Regula_benchmarking}%
  \BibitemOpen
  \bibfield  {author} {\bibinfo {author} {\bibfnamefont {B.}~\bibnamefont {Regula}}, \bibinfo {author} {\bibfnamefont {K.}~\bibnamefont {Bu}}, \bibinfo {author} {\bibfnamefont {R.}~\bibnamefont {Takagi}}, \ and\ \bibinfo {author} {\bibfnamefont {Z.-W.}\ \bibnamefont {Liu}},\ }\bibfield  {title} {\emph {\bibinfo {title} {Benchmarking one-shot distillation in general quantum resource theories},}\ }\href {http://dx.doi.org/10.1103/PhysRevA.101.062315} {\bibfield  {journal} {\bibinfo  {journal} {Phys. Rev. A}\ }\textbf {\bibinfo {volume} {101}},\ \bibinfo {pages} {062315} (\bibinfo {year} {2020})}\BibitemShut {NoStop}%
\bibitem [{\citenamefont {Hayashi}\ and\ \citenamefont {Yamasaki}(2025)}]{hayashi_generalized_2025}%
  \BibitemOpen
  \bibfield  {author} {\bibinfo {author} {\bibfnamefont {M.}~\bibnamefont {Hayashi}}\ and\ \bibinfo {author} {\bibfnamefont {H.}~\bibnamefont {Yamasaki}},\ }\bibfield  {title} {\emph {\bibinfo {title} {The generalized quantum {Stein}’s lemma and the second law of quantum resource theories},}\ }\href {http://dx.doi.org/10.1038/s41567-025-03047-9} {\bibfield  {journal} {\bibinfo  {journal} {Nat. Phys.}\ }\textbf {\bibinfo {volume} {21}},\ \bibinfo {pages} {1988} (\bibinfo {year} {2025})}\BibitemShut {NoStop}%
\bibitem [{\citenamefont {Lami}(2025)}]{Lami_2025_gqsl}%
  \BibitemOpen
  \bibfield  {author} {\bibinfo {author} {\bibfnamefont {L.}~\bibnamefont {Lami}},\ }\bibfield  {title} {\emph {\bibinfo {title} {A solution of the generalized quantum {S}tein's lemma},}\ }\href {http://dx.doi.org/10.1109/tit.2025.3543610} {\bibfield  {journal} {\bibinfo  {journal} {IEEE Trans. Inf. Theory}\ }\textbf {\bibinfo {volume} {71}},\ \bibinfo {pages} {4454–4484} (\bibinfo {year} {2025})}\BibitemShut {NoStop}%
\bibitem [{\citenamefont {Lami}\ \emph {et~al.}(2026{\natexlab{a}})\citenamefont {Lami}, \citenamefont {Berta},\ and\ \citenamefont {Regula}}]{lami_2024_asymptotic_quantification}%
  \BibitemOpen
  \bibfield  {author} {\bibinfo {author} {\bibfnamefont {L.}~\bibnamefont {Lami}}, \bibinfo {author} {\bibfnamefont {M.}~\bibnamefont {Berta}}, \ and\ \bibinfo {author} {\bibfnamefont {B.}~\bibnamefont {Regula}},\ }\bibfield  {title} {\emph {\bibinfo {title} {Asymptotic quantification of entanglement with a single copy},}\ }\href {http://dx.doi.org/10.1038/s41567-026-03182-x} {\bibfield  {journal} {\bibinfo  {journal} {Nat. Phys.}\ }\textbf {\bibinfo {volume} {22}},\ \bibinfo {pages} {439–445} (\bibinfo {year} {2026}{\natexlab{a}})}\BibitemShut {NoStop}%
\bibitem [{\citenamefont {Hiai}\ and\ \citenamefont {Petz}(1991)}]{hiai_1991_proper}%
  \BibitemOpen
  \bibfield  {author} {\bibinfo {author} {\bibfnamefont {F.}~\bibnamefont {Hiai}}\ and\ \bibinfo {author} {\bibfnamefont {D.}~\bibnamefont {Petz}},\ }\bibfield  {title} {\emph {\bibinfo {title} {The proper formula for relative entropy and its asymptotics in quantum probability},}\ }\href {http://dx.doi.org/10.1007/BF02100287} {\bibfield  {journal} {\bibinfo  {journal} {Commun. Math. Phys.}\ }\textbf {\bibinfo {volume} {143}},\ \bibinfo {pages} {99} (\bibinfo {year} {1991})}\BibitemShut {NoStop}%
\bibitem [{\citenamefont {Ogawa}\ and\ \citenamefont {Nagaoka}(2000)}]{Ogawa_2000_strong}%
  \BibitemOpen
  \bibfield  {author} {\bibinfo {author} {\bibfnamefont {T.}~\bibnamefont {Ogawa}}\ and\ \bibinfo {author} {\bibfnamefont {H.}~\bibnamefont {Nagaoka}},\ }\bibfield  {title} {\emph {\bibinfo {title} {Strong converse and {S}tein's lemma in quantum hypothesis testing},}\ }\href {http://dx.doi.org/10.1109/18.887855} {\bibfield  {journal} {\bibinfo  {journal} {IEEE Trans. Inf. Theory}\ }\textbf {\bibinfo {volume} {46}},\ \bibinfo {pages} {2428} (\bibinfo {year} {2000})}\BibitemShut {NoStop}%
\bibitem [{\citenamefont {Nagaoka}(2006)}]{nagaoka_2006_converse_theorem}%
  \BibitemOpen
  \bibfield  {author} {\bibinfo {author} {\bibfnamefont {H.}~\bibnamefont {Nagaoka}},\ }\href {https://arxiv.org/abs/quant-ph/0611289} {\emph {\bibinfo {title} {The converse part of the theorem for quantum {H}oeffding bound},}\ } (\bibinfo {year} {2006}),\ \Eprint {http://arxiv.org/abs/quant-ph/0611289} {arXiv:quant-ph/0611289} \BibitemShut {NoStop}%
\bibitem [{\citenamefont {Hayashi}(2007)}]{Hayashi_2007_error_exponent}%
  \BibitemOpen
  \bibfield  {author} {\bibinfo {author} {\bibfnamefont {M.}~\bibnamefont {Hayashi}},\ }\bibfield  {title} {\emph {\bibinfo {title} {Error exponent in asymmetric quantum hypothesis testing and its application to classical-quantum channel coding},}\ }\href {http://dx.doi.org/10.1103/physreva.76.062301} {\bibfield  {journal} {\bibinfo  {journal} {Phys. Rev. A}\ }\textbf {\bibinfo {volume} {76}} (\bibinfo {year} {2007}),\ 10.1103/physreva.76.062301}\BibitemShut {NoStop}%
\bibitem [{\citenamefont {Audenaert}\ \emph {et~al.}(2008)\citenamefont {Audenaert}, \citenamefont {Nussbaum}, \citenamefont {Szkoła},\ and\ \citenamefont {Verstraete}}]{audenaert_2008}%
  \BibitemOpen
  \bibfield  {author} {\bibinfo {author} {\bibfnamefont {K.~M.~R.}\ \bibnamefont {Audenaert}}, \bibinfo {author} {\bibfnamefont {M.}~\bibnamefont {Nussbaum}}, \bibinfo {author} {\bibfnamefont {A.}~\bibnamefont {Szkoła}}, \ and\ \bibinfo {author} {\bibfnamefont {F.}~\bibnamefont {Verstraete}},\ }\bibfield  {title} {\emph {\bibinfo {title} {Asymptotic {{Error Rates}} in {{Quantum Hypothesis Testing}}},}\ }\href {http://dx.doi.org/10.1007/s00220-008-0417-5} {\bibfield  {journal} {\bibinfo  {journal} {Commun. Math. Phys.}\ }\textbf {\bibinfo {volume} {279}},\ \bibinfo {pages} {251} (\bibinfo {year} {2008})}\BibitemShut {NoStop}%
\bibitem [{\citenamefont {Cai}\ \emph {et~al.}(2004)\citenamefont {Cai}, \citenamefont {Winter},\ and\ \citenamefont {Yeung}}]{cai_quantum_2004}%
  \BibitemOpen
  \bibfield  {author} {\bibinfo {author} {\bibfnamefont {N.}~\bibnamefont {Cai}}, \bibinfo {author} {\bibfnamefont {A.}~\bibnamefont {Winter}}, \ and\ \bibinfo {author} {\bibfnamefont {R.~W.}\ \bibnamefont {Yeung}},\ }\bibfield  {title} {\emph {\bibinfo {title} {Quantum privacy and quantum wiretap channels},}\ }\href {http://dx.doi.org/10.1007/s11122-005-0002-x} {\bibfield  {journal} {\bibinfo  {journal} {Probl. Inf. Transm.}\ }\textbf {\bibinfo {volume} {40}},\ \bibinfo {pages} {318} (\bibinfo {year} {2004})}\BibitemShut {NoStop}%
\bibitem [{\citenamefont {Renes}\ and\ \citenamefont {Renner}(2011)}]{Renes_Noisy_channel}%
  \BibitemOpen
  \bibfield  {author} {\bibinfo {author} {\bibfnamefont {J.~M.}\ \bibnamefont {Renes}}\ and\ \bibinfo {author} {\bibfnamefont {R.}~\bibnamefont {Renner}},\ }\bibfield  {title} {\emph {\bibinfo {title} {Noisy channel coding via privacy amplification and information reconciliation},}\ }\href {http://dx.doi.org/10.1109/TIT.2011.2162226} {\bibfield  {journal} {\bibinfo  {journal} {IEEE Trans. Inf. Theory}\ }\textbf {\bibinfo {volume} {57}},\ \bibinfo {pages} {7377} (\bibinfo {year} {2011})}\BibitemShut {NoStop}%
\bibitem [{\citenamefont {Radhakrishnan}\ \emph {et~al.}(2017)\citenamefont {Radhakrishnan}, \citenamefont {Sen},\ and\ \citenamefont {Warsi}}]{Radhakrishnan_2017_one_shot_private}%
  \BibitemOpen
  \bibfield  {author} {\bibinfo {author} {\bibfnamefont {J.}~\bibnamefont {Radhakrishnan}}, \bibinfo {author} {\bibfnamefont {P.}~\bibnamefont {Sen}}, \ and\ \bibinfo {author} {\bibfnamefont {N.~A.}\ \bibnamefont {Warsi}},\ }\href {https://arxiv.org/abs/1703.01932} {\emph {\bibinfo {title} {One-shot private classical capacity of quantum wiretap channel: Based on one-shot quantum covering lemma},}\ } (\bibinfo {year} {2017}),\ \Eprint {http://arxiv.org/abs/1703.01932} {arXiv:1703.01932 [quant-ph]} \BibitemShut {NoStop}%
\bibitem [{\citenamefont {Boche}\ \emph {et~al.}(2014)\citenamefont {Boche}, \citenamefont {Cai}, \citenamefont {Cai},\ and\ \citenamefont {Deppe}}]{Boche_2014_secrecy_capacity_compound}%
  \BibitemOpen
  \bibfield  {author} {\bibinfo {author} {\bibfnamefont {H.}~\bibnamefont {Boche}}, \bibinfo {author} {\bibfnamefont {M.}~\bibnamefont {Cai}}, \bibinfo {author} {\bibfnamefont {N.}~\bibnamefont {Cai}}, \ and\ \bibinfo {author} {\bibfnamefont {C.}~\bibnamefont {Deppe}},\ }\bibfield  {title} {\emph {\bibinfo {title} {Secrecy capacities of compound quantum wiretap channels and applications},}\ }\href {http://dx.doi.org/10.1103/physreva.89.052320} {\bibfield  {journal} {\bibinfo  {journal} {Phys. Rev. A}\ }\textbf {\bibinfo {volume} {89}},\ \bibinfo {pages} {052320} (\bibinfo {year} {2014})}\BibitemShut {NoStop}%
\bibitem [{\citenamefont {Datta}\ and\ \citenamefont {Hsieh}(2010)}]{Datta_2010_universal_private}%
  \BibitemOpen
  \bibfield  {author} {\bibinfo {author} {\bibfnamefont {N.}~\bibnamefont {Datta}}\ and\ \bibinfo {author} {\bibfnamefont {M.-H.}\ \bibnamefont {Hsieh}},\ }\bibfield  {title} {\emph {\bibinfo {title} {Universal coding for transmission of private information},}\ }\href {http://dx.doi.org/10.1063/1.3521499} {\bibfield  {journal} {\bibinfo  {journal} {J. Math. Phys.}\ }\textbf {\bibinfo {volume} {51}},\ \bibinfo {pages} {122202} (\bibinfo {year} {2010})}\BibitemShut {NoStop}%
\bibitem [{\citenamefont {Horodecki}\ \emph {et~al.}(2005)\citenamefont {Horodecki}, \citenamefont {Horodecki}, \citenamefont {Horodecki},\ and\ \citenamefont {Oppenheim}}]{Horodecki_2005_secret_key}%
  \BibitemOpen
  \bibfield  {author} {\bibinfo {author} {\bibfnamefont {K.}~\bibnamefont {Horodecki}}, \bibinfo {author} {\bibfnamefont {M.}~\bibnamefont {Horodecki}}, \bibinfo {author} {\bibfnamefont {P.}~\bibnamefont {Horodecki}}, \ and\ \bibinfo {author} {\bibfnamefont {J.}~\bibnamefont {Oppenheim}},\ }\bibfield  {title} {\emph {\bibinfo {title} {Secure key from bound entanglement},}\ }\href {http://dx.doi.org/10.1103/physrevlett.94.160502} {\bibfield  {journal} {\bibinfo  {journal} {Phys. Rev. Lett.}\ }\textbf {\bibinfo {volume} {94}},\ \bibinfo {pages} {160502} (\bibinfo {year} {2005})}\BibitemShut {NoStop}%
\bibitem [{\citenamefont {Devetak}\ and\ \citenamefont {Winter}(2005)}]{Devetak_2005_distillation}%
  \BibitemOpen
  \bibfield  {author} {\bibinfo {author} {\bibfnamefont {I.}~\bibnamefont {Devetak}}\ and\ \bibinfo {author} {\bibfnamefont {A.}~\bibnamefont {Winter}},\ }\bibfield  {title} {\emph {\bibinfo {title} {Distillation of secret key and entanglement from quantum states},}\ }\href {http://dx.doi.org/10.1098/rspa.2004.1372} {\bibfield  {journal} {\bibinfo  {journal} {Proc. R. Soc. A Math. Phys. Eng. Sci.}\ }\textbf {\bibinfo {volume} {461}},\ \bibinfo {pages} {207–235} (\bibinfo {year} {2005})}\BibitemShut {NoStop}%
\bibitem [{\citenamefont {Christandl}\ \emph {et~al.}(2007)\citenamefont {Christandl}, \citenamefont {Ekert}, \citenamefont {Horodecki}, \citenamefont {Horodecki}, \citenamefont {Oppenheim},\ and\ \citenamefont {Renner}}]{Christandl_2007_unifying_classical_quantum}%
  \BibitemOpen
  \bibfield  {author} {\bibinfo {author} {\bibfnamefont {M.}~\bibnamefont {Christandl}}, \bibinfo {author} {\bibfnamefont {A.}~\bibnamefont {Ekert}}, \bibinfo {author} {\bibfnamefont {M.}~\bibnamefont {Horodecki}}, \bibinfo {author} {\bibfnamefont {P.}~\bibnamefont {Horodecki}}, \bibinfo {author} {\bibfnamefont {J.}~\bibnamefont {Oppenheim}}, \ and\ \bibinfo {author} {\bibfnamefont {R.}~\bibnamefont {Renner}},\ }\bibfield  {title} {\emph {\bibinfo {title} {Unifying {Classical} and {Quantum} {Key} {Distillation}},}\ }in\ \href {http://dx.doi.org/10.1007/978-3-540-70936-7_25} {\emph {\bibinfo {booktitle} {Theory of {Cryptography}}}},\ \bibinfo {editor} {edited by\ \bibinfo {editor} {\bibfnamefont {S.~P.}\ \bibnamefont {Vadhan}}}\ (\bibinfo  {publisher} {Springer},\ \bibinfo {address} {Berlin, Heidelberg},\ \bibinfo {year} {2007})\ pp.\ \bibinfo {pages} {456--478}\BibitemShut {NoStop}%
\bibitem [{\citenamefont {Renner}\ and\ \citenamefont {König}(2005)}]{Renner_2004_universally_composable}%
  \BibitemOpen
  \bibfield  {author} {\bibinfo {author} {\bibfnamefont {R.}~\bibnamefont {Renner}}\ and\ \bibinfo {author} {\bibfnamefont {R.}~\bibnamefont {König}},\ }\bibfield  {title} {\emph {\bibinfo {title} {Universally {Composable} {Privacy} {Amplification} {Against} {Quantum} {Adversaries}},}\ }in\ \href {http://dx.doi.org/10.1007/978-3-540-30576-7_22} {\emph {\bibinfo {booktitle} {Theory of {Cryptography}}}},\ \bibinfo {editor} {edited by\ \bibinfo {editor} {\bibfnamefont {J.}~\bibnamefont {Kilian}}}\ (\bibinfo  {publisher} {Springer},\ \bibinfo {address} {Berlin, Heidelberg},\ \bibinfo {year} {2005})\ pp.\ \bibinfo {pages} {407--425}\BibitemShut {NoStop}%
\bibitem [{\citenamefont {Ben-Or}\ \emph {et~al.}(2005)\citenamefont {Ben-Or}, \citenamefont {Horodecki}, \citenamefont {Leung}, \citenamefont {Mayers},\ and\ \citenamefont {Oppenheim}}]{Benor_2004_universal_composable}%
  \BibitemOpen
  \bibfield  {author} {\bibinfo {author} {\bibfnamefont {M.}~\bibnamefont {Ben-Or}}, \bibinfo {author} {\bibfnamefont {M.}~\bibnamefont {Horodecki}}, \bibinfo {author} {\bibfnamefont {D.~W.}\ \bibnamefont {Leung}}, \bibinfo {author} {\bibfnamefont {D.}~\bibnamefont {Mayers}}, \ and\ \bibinfo {author} {\bibfnamefont {J.}~\bibnamefont {Oppenheim}},\ }\bibfield  {title} {\emph {\bibinfo {title} {The {Universal} {Composable} {Security} of {Quantum} {Key} {Distribution}},}\ }in\ \href {http://dx.doi.org/10.1007/978-3-540-30576-7_21} {\emph {\bibinfo {booktitle} {Theory of {Cryptography}}}},\ \bibinfo {editor} {edited by\ \bibinfo {editor} {\bibfnamefont {J.}~\bibnamefont {Kilian}}}\ (\bibinfo  {publisher} {Springer},\ \bibinfo {address} {Berlin, Heidelberg},\ \bibinfo {year} {2005})\ pp.\ \bibinfo {pages} {386--406}\BibitemShut {NoStop}%
\bibitem [{\citenamefont {Portmann}\ and\ \citenamefont {Renner}(2022)}]{Portmann_2022_security}%
  \BibitemOpen
  \bibfield  {author} {\bibinfo {author} {\bibfnamefont {C.}~\bibnamefont {Portmann}}\ and\ \bibinfo {author} {\bibfnamefont {R.}~\bibnamefont {Renner}},\ }\bibfield  {title} {\emph {\bibinfo {title} {Security in quantum cryptography},}\ }\href {http://dx.doi.org/10.1103/revmodphys.94.025008} {\bibfield  {journal} {\bibinfo  {journal} {Rev. Mod. Phys.}\ }\textbf {\bibinfo {volume} {94}},\ \bibinfo {pages} {025008} (\bibinfo {year} {2022})}\BibitemShut {NoStop}%
\bibitem [{\citenamefont {Boche}\ and\ \citenamefont {Janßen}(2016)}]{Boche_2016_secret_key}%
  \BibitemOpen
  \bibfield  {author} {\bibinfo {author} {\bibfnamefont {H.}~\bibnamefont {Boche}}\ and\ \bibinfo {author} {\bibfnamefont {G.}~\bibnamefont {Janßen}},\ }\bibfield  {title} {\emph {\bibinfo {title} {Distillation of secret-key from a class of compound memoryless quantum sources},}\ }\href {http://dx.doi.org/10.1063/1.4960217} {\bibfield  {journal} {\bibinfo  {journal} {J. Math. Phys.}\ }\textbf {\bibinfo {volume} {57}},\ \bibinfo {pages} {082201} (\bibinfo {year} {2016})}\BibitemShut {NoStop}%
\bibitem [{\citenamefont {Khatri}\ and\ \citenamefont {Wilde}(2024{\natexlab{b}})}]{Khatri_textbook}%
  \BibitemOpen
  \bibfield  {author} {\bibinfo {author} {\bibfnamefont {S.}~\bibnamefont {Khatri}}\ and\ \bibinfo {author} {\bibfnamefont {M.~M.}\ \bibnamefont {Wilde}},\ }\href {https://arxiv.org/abs/2011.04672} {\emph {\bibinfo {title} {Principles of quantum communication theory: A modern approach},}\ } (\bibinfo {year} {2024}{\natexlab{b}}),\ \Eprint {http://arxiv.org/abs/2011.04672} {arXiv:2011.04672 [quant-ph]} \BibitemShut {NoStop}%
\bibitem [{\citenamefont {Boche}\ \emph {et~al.}(2017)\citenamefont {Boche}, \citenamefont {Janßen},\ and\ \citenamefont {Kaltenstadler}}]{boche_entanglement-assisted_2017}%
  \BibitemOpen
  \bibfield  {author} {\bibinfo {author} {\bibfnamefont {H.}~\bibnamefont {Boche}}, \bibinfo {author} {\bibfnamefont {G.}~\bibnamefont {Janßen}}, \ and\ \bibinfo {author} {\bibfnamefont {S.}~\bibnamefont {Kaltenstadler}},\ }\bibfield  {title} {\emph {\bibinfo {title} {Entanglement-assisted classical capacities of compound and arbitrarily varying quantum channels},}\ }\href {http://dx.doi.org/10.1007/s11128-017-1538-6} {\bibfield  {journal} {\bibinfo  {journal} {Quantum Inf. Process.}\ }\textbf {\bibinfo {volume} {16}},\ \bibinfo {pages} {88} (\bibinfo {year} {2017})}\BibitemShut {NoStop}%
\bibitem [{\citenamefont {Berta}\ \emph {et~al.}(2017)\citenamefont {Berta}, \citenamefont {Gharibyan},\ and\ \citenamefont {Walter}}]{Berta_2017_compound_entanglement_assisted}%
  \BibitemOpen
  \bibfield  {author} {\bibinfo {author} {\bibfnamefont {M.}~\bibnamefont {Berta}}, \bibinfo {author} {\bibfnamefont {H.}~\bibnamefont {Gharibyan}}, \ and\ \bibinfo {author} {\bibfnamefont {M.}~\bibnamefont {Walter}},\ }\bibfield  {title} {\emph {\bibinfo {title} {Entanglement-assisted capacities of compound quantum channels},}\ }\href {http://dx.doi.org/10.1109/TIT.2017.2672981} {\bibfield  {journal} {\bibinfo  {journal} {IEEE Trans. Inf. Theory}\ }\textbf {\bibinfo {volume} {63}},\ \bibinfo {pages} {3306} (\bibinfo {year} {2017})}\BibitemShut {NoStop}%
\bibitem [{\citenamefont {Devetak}\ \emph {et~al.}(2004)\citenamefont {Devetak}, \citenamefont {Harrow},\ and\ \citenamefont {Winter}}]{Devetak_2004_family}%
  \BibitemOpen
  \bibfield  {author} {\bibinfo {author} {\bibfnamefont {I.}~\bibnamefont {Devetak}}, \bibinfo {author} {\bibfnamefont {A.~W.}\ \bibnamefont {Harrow}}, \ and\ \bibinfo {author} {\bibfnamefont {A.}~\bibnamefont {Winter}},\ }\bibfield  {title} {\emph {\bibinfo {title} {A family of quantum protocols},}\ }\href {http://dx.doi.org/10.1103/physrevlett.93.230504} {\bibfield  {journal} {\bibinfo  {journal} {Phys. Rev. Lett.}\ }\textbf {\bibinfo {volume} {93}},\ \bibinfo {pages} {230504} (\bibinfo {year} {2004})}\BibitemShut {NoStop}%
\bibitem [{\citenamefont {Devetak}(2006)}]{Devetak_2006_triangle_of_duality}%
  \BibitemOpen
  \bibfield  {author} {\bibinfo {author} {\bibfnamefont {I.}~\bibnamefont {Devetak}},\ }\bibfield  {title} {\emph {\bibinfo {title} {Triangle of dualities between quantum communication protocols},}\ }\href {http://dx.doi.org/10.1103/physrevlett.97.140503} {\bibfield  {journal} {\bibinfo  {journal} {Phys. Rev. Lett.}\ }\textbf {\bibinfo {volume} {97}},\ \bibinfo {pages} {140503} (\bibinfo {year} {2006})}\BibitemShut {NoStop}%
\bibitem [{\citenamefont {Abeyesinghe}\ \emph {et~al.}(2009)\citenamefont {Abeyesinghe}, \citenamefont {Devetak}, \citenamefont {Hayden},\ and\ \citenamefont {Winter}}]{Abeyesinghe_2009}%
  \BibitemOpen
  \bibfield  {author} {\bibinfo {author} {\bibfnamefont {A.}~\bibnamefont {Abeyesinghe}}, \bibinfo {author} {\bibfnamefont {I.}~\bibnamefont {Devetak}}, \bibinfo {author} {\bibfnamefont {P.}~\bibnamefont {Hayden}}, \ and\ \bibinfo {author} {\bibfnamefont {A.}~\bibnamefont {Winter}},\ }\bibfield  {title} {\emph {\bibinfo {title} {The mother of all protocols: restructuring quantum information’s family tree},}\ }\href {http://dx.doi.org/10.1098/rspa.2009.0202} {\bibfield  {journal} {\bibinfo  {journal} {Proc. R. Soc. A Math. Phys. Eng. Sci.}\ }\textbf {\bibinfo {volume} {465}},\ \bibinfo {pages} {2537–2563} (\bibinfo {year} {2009})}\BibitemShut {NoStop}%
\bibitem [{\citenamefont {Bravo-Prieto}\ \emph {et~al.}(2026)\citenamefont {Bravo-Prieto}, \citenamefont {Gong},\ and\ \citenamefont {Mele}}]{bravoprieto_2026_quantum_memory}%
  \BibitemOpen
  \bibfield  {author} {\bibinfo {author} {\bibfnamefont {C.}~\bibnamefont {Bravo-Prieto}}, \bibinfo {author} {\bibfnamefont {W.}~\bibnamefont {Gong}}, \ and\ \bibinfo {author} {\bibfnamefont {A.~A.}\ \bibnamefont {Mele}},\ }\href {https://arxiv.org/abs/2607.13476} {\emph {\bibinfo {title} {Quantum memory advantage for quantum process tomography},}\ } (\bibinfo {year} {2026}),\ \Eprint {http://arxiv.org/abs/2607.13476} {arXiv:2607.13476 [quant-ph]} \BibitemShut {NoStop}%
\bibitem [{\citenamefont {Shirokov}(2017)}]{Shirokov_2017}%
  \BibitemOpen
  \bibfield  {author} {\bibinfo {author} {\bibfnamefont {M.~E.}\ \bibnamefont {Shirokov}},\ }\bibfield  {title} {\emph {\bibinfo {title} {Tight uniform continuity bounds for the quantum conditional mutual information, for the holevo quantity, and for capacities of quantum channels},}\ }\href {http://dx.doi.org/10.1063/1.4987135} {\bibfield  {journal} {\bibinfo  {journal} {J. Math. Phys.}\ }\textbf {\bibinfo {volume} {58}},\ \bibinfo {pages} {102202} (\bibinfo {year} {2017})}\BibitemShut {NoStop}%
\bibitem [{\citenamefont {Dupuis}(2010)}]{dupuis_2010_phd_thesis}%
  \BibitemOpen
  \bibfield  {author} {\bibinfo {author} {\bibfnamefont {F.}~\bibnamefont {Dupuis}},\ }\href {https://arxiv.org/abs/1004.1641} {\emph {\bibinfo {title} {The decoupling approach to quantum information theory},}\ } (\bibinfo {year} {2010}),\ \Eprint {http://arxiv.org/abs/1004.1641} {arXiv:1004.1641 [quant-ph]} \BibitemShut {NoStop}%
\bibitem [{\citenamefont {Cover}(1972)}]{Cover_broadcast_channel}%
  \BibitemOpen
  \bibfield  {author} {\bibinfo {author} {\bibfnamefont {T.}~\bibnamefont {Cover}},\ }\bibfield  {title} {\emph {\bibinfo {title} {Broadcast channels},}\ }\href {http://dx.doi.org/10.1109/TIT.1972.1054727} {\bibfield  {journal} {\bibinfo  {journal} {IEEE Trans. Inf. Theory}\ }\textbf {\bibinfo {volume} {18}},\ \bibinfo {pages} {2} (\bibinfo {year} {1972})}\BibitemShut {NoStop}%
\bibitem [{\citenamefont {Bergmans}(1973)}]{Bergman_1973_random_coding}%
  \BibitemOpen
  \bibfield  {author} {\bibinfo {author} {\bibfnamefont {P.}~\bibnamefont {Bergmans}},\ }\bibfield  {title} {\emph {\bibinfo {title} {Random coding theorem for broadcast channels with degraded components},}\ }\href {http://dx.doi.org/10.1109/TIT.1973.1054980} {\bibfield  {journal} {\bibinfo  {journal} {IEEE Trans. Inf. Theory}\ }\textbf {\bibinfo {volume} {19}},\ \bibinfo {pages} {197} (\bibinfo {year} {1973})}\BibitemShut {NoStop}%
\bibitem [{\citenamefont {{Gallager}}(1974)}]{Gallager__1974_Capacity}%
  \BibitemOpen
  \bibfield  {author} {\bibinfo {author} {\bibfnamefont {R.~G.}\ \bibnamefont {{Gallager}}},\ }\bibfield  {title} {\emph {\bibinfo {title} {Capacity and coding for degraded broadcast channels},}\ }\href {https://www.mathnet.ru/php/archive.phtml?wshow=paper&jrnid=ppi&paperid=1036&option_lang=eng} {\bibfield  {journal} {\bibinfo  {journal} {Probl. Inf. Transm.}\ }\textbf {\bibinfo {volume} {10}},\ \bibinfo {pages} {185} (\bibinfo {year} {1974})}\BibitemShut {NoStop}%
\bibitem [{\citenamefont {Korner}\ and\ \citenamefont {Marton}(1977)}]{Korner_general_broadcast}%
  \BibitemOpen
  \bibfield  {author} {\bibinfo {author} {\bibfnamefont {J.}~\bibnamefont {Korner}}\ and\ \bibinfo {author} {\bibfnamefont {K.}~\bibnamefont {Marton}},\ }\bibfield  {title} {\emph {\bibinfo {title} {General broadcast channels with degraded message sets},}\ }\href {http://dx.doi.org/10.1109/TIT.1977.1055655} {\bibfield  {journal} {\bibinfo  {journal} {IEEE Trans. Inf. Theory}\ }\textbf {\bibinfo {volume} {23}},\ \bibinfo {pages} {60} (\bibinfo {year} {1977})}\BibitemShut {NoStop}%
\bibitem [{\citenamefont {Yard}\ \emph {et~al.}(2011)\citenamefont {Yard}, \citenamefont {Hayden},\ and\ \citenamefont {Devetak}}]{Yard_2011_quantum_broadcast}%
  \BibitemOpen
  \bibfield  {author} {\bibinfo {author} {\bibfnamefont {J.}~\bibnamefont {Yard}}, \bibinfo {author} {\bibfnamefont {P.}~\bibnamefont {Hayden}}, \ and\ \bibinfo {author} {\bibfnamefont {I.}~\bibnamefont {Devetak}},\ }\bibfield  {title} {\emph {\bibinfo {title} {Quantum broadcast channels},}\ }\href {http://dx.doi.org/10.1109/tit.2011.2165811} {\bibfield  {journal} {\bibinfo  {journal} {IEEE Trans. Inf. Theory}\ }\textbf {\bibinfo {volume} {57}},\ \bibinfo {pages} {7147–7162} (\bibinfo {year} {2011})}\BibitemShut {NoStop}%
\bibitem [{\citenamefont {Savov}\ and\ \citenamefont {Wilde}(2015)}]{Savov_2015_classical_code}%
  \BibitemOpen
  \bibfield  {author} {\bibinfo {author} {\bibfnamefont {I.}~\bibnamefont {Savov}}\ and\ \bibinfo {author} {\bibfnamefont {M.~M.}\ \bibnamefont {Wilde}},\ }\bibfield  {title} {\emph {\bibinfo {title} {Classical codes for quantum broadcast channels},}\ }\href {http://dx.doi.org/10.1109/tit.2015.2485998} {\bibfield  {journal} {\bibinfo  {journal} {IEEE Trans. Inf. Theory}\ }\textbf {\bibinfo {volume} {61}},\ \bibinfo {pages} {7017–7028} (\bibinfo {year} {2015})}\BibitemShut {NoStop}%
\bibitem [{\citenamefont {Hayashi}\ and\ \citenamefont {Matsumoto}(2011)}]{Hayashi_2011_universally_attainable}%
  \BibitemOpen
  \bibfield  {author} {\bibinfo {author} {\bibfnamefont {M.}~\bibnamefont {Hayashi}}\ and\ \bibinfo {author} {\bibfnamefont {R.}~\bibnamefont {Matsumoto}},\ }\href {https://arxiv.org/abs/1104.4285} {\emph {\bibinfo {title} {Universally attainable error and information exponents, and equivocation rate for the broadcast channels with confidential messages},}\ } (\bibinfo {year} {2011}),\ \Eprint {http://arxiv.org/abs/1104.4285} {arXiv:1104.4285 [cs.IT]} \BibitemShut {NoStop}%
\bibitem [{\citenamefont {Boche}\ \emph {et~al.}(2020)\citenamefont {Boche}, \citenamefont {Janßen},\ and\ \citenamefont {Saeedinaeeni}}]{Boche_2020_universal_superposition}%
  \BibitemOpen
  \bibfield  {author} {\bibinfo {author} {\bibfnamefont {H.}~\bibnamefont {Boche}}, \bibinfo {author} {\bibfnamefont {G.}~\bibnamefont {Janßen}}, \ and\ \bibinfo {author} {\bibfnamefont {S.}~\bibnamefont {Saeedinaeeni}},\ }\bibfield  {title} {\emph {\bibinfo {title} {Universal superposition codes: Capacity regions of compound quantum broadcast channel with confidential messages},}\ }\href {http://dx.doi.org/10.1063/1.5139622} {\bibfield  {journal} {\bibinfo  {journal} {J. Math. Phys.}\ }\textbf {\bibinfo {volume} {61}},\ \bibinfo {pages} {042204} (\bibinfo {year} {2020})}\BibitemShut {NoStop}%
\bibitem [{\citenamefont {Hayashi}\ and\ \citenamefont {Cai}(2022)}]{Hayashi_2022_universal_cq_superposition}%
  \BibitemOpen
  \bibfield  {author} {\bibinfo {author} {\bibfnamefont {M.}~\bibnamefont {Hayashi}}\ and\ \bibinfo {author} {\bibfnamefont {N.}~\bibnamefont {Cai}},\ }\bibfield  {title} {\emph {\bibinfo {title} {Universal classical-quantum superposition coding and universal classical-quantum multiple access channel coding},}\ }\href {http://dx.doi.org/10.1109/tit.2021.3131575} {\bibfield  {journal} {\bibinfo  {journal} {IEEE Trans. Inf. Theory}\ }\textbf {\bibinfo {volume} {68}},\ \bibinfo {pages} {1822–1850} (\bibinfo {year} {2022})}\BibitemShut {NoStop}%
\bibitem [{\citenamefont {Marton}(1979)}]{Marton_coding_theorem}%
  \BibitemOpen
  \bibfield  {author} {\bibinfo {author} {\bibfnamefont {K.}~\bibnamefont {Marton}},\ }\bibfield  {title} {\emph {\bibinfo {title} {A coding theorem for the discrete memoryless broadcast channel},}\ }\href {http://dx.doi.org/10.1109/TIT.1979.1056046} {\bibfield  {journal} {\bibinfo  {journal} {IEEE Trans. Inf. Theory}\ }\textbf {\bibinfo {volume} {25}},\ \bibinfo {pages} {306} (\bibinfo {year} {1979})}\BibitemShut {NoStop}%
\bibitem [{\citenamefont {El~Gamal}\ and\ \citenamefont {van~der Meulen}(1981)}]{ElGamal_proof_of_marton}%
  \BibitemOpen
  \bibfield  {author} {\bibinfo {author} {\bibfnamefont {A.}~\bibnamefont {El~Gamal}}\ and\ \bibinfo {author} {\bibfnamefont {E.}~\bibnamefont {van~der Meulen}},\ }\bibfield  {title} {\emph {\bibinfo {title} {A proof of marton's coding theorem for the discrete memoryless broadcast channel (corresp.)},}\ }\href {http://dx.doi.org/10.1109/TIT.1981.1056302} {\bibfield  {journal} {\bibinfo  {journal} {IEEE Trans. Inf. Theory}\ }\textbf {\bibinfo {volume} {27}},\ \bibinfo {pages} {120} (\bibinfo {year} {1981})}\BibitemShut {NoStop}%
\bibitem [{\citenamefont {Huang}\ \emph {et~al.}(2026)\citenamefont {Huang}, \citenamefont {Liu},\ and\ \citenamefont {Liu}}]{Huang_2026_suboptimality_martons}%
  \BibitemOpen
  \bibfield  {author} {\bibinfo {author} {\bibfnamefont {M.}~\bibnamefont {Huang}}, \bibinfo {author} {\bibfnamefont {Y.}~\bibnamefont {Liu}}, \ and\ \bibinfo {author} {\bibfnamefont {Y.}~\bibnamefont {Liu}},\ }\href {https://arxiv.org/abs/2608.19869} {\emph {\bibinfo {title} {Sub-optimality of marton's inner bound for the two-receiver broadcast channel},}\ } (\bibinfo {year} {2026}),\ \Eprint {http://arxiv.org/abs/2608.19869} {arXiv:2608.19869 [cs.IT]} \BibitemShut {NoStop}%
\bibitem [{\citenamefont {Geng}\ \emph {et~al.}(2014)\citenamefont {Geng}, \citenamefont {Gohari}, \citenamefont {Nair},\ and\ \citenamefont {Yu}}]{Geng_2014_marton_inner}%
  \BibitemOpen
  \bibfield  {author} {\bibinfo {author} {\bibfnamefont {Y.}~\bibnamefont {Geng}}, \bibinfo {author} {\bibfnamefont {A.}~\bibnamefont {Gohari}}, \bibinfo {author} {\bibfnamefont {C.}~\bibnamefont {Nair}}, \ and\ \bibinfo {author} {\bibfnamefont {Y.}~\bibnamefont {Yu}},\ }\bibfield  {title} {\emph {\bibinfo {title} {On marton’s inner bound and its optimality for classes of product broadcast channels},}\ }\href {http://dx.doi.org/10.1109/tit.2013.2285925} {\bibfield  {journal} {\bibinfo  {journal} {IEEE Trans. Inf. Theory}\ }\textbf {\bibinfo {volume} {60}},\ \bibinfo {pages} {22–41} (\bibinfo {year} {2014})}\BibitemShut {NoStop}%
\bibitem [{\citenamefont {Liu}\ \emph {et~al.}(2015)\citenamefont {Liu}, \citenamefont {Cuff},\ and\ \citenamefont {Verdu}}]{Liu_2015_one_shot_marton}%
  \BibitemOpen
  \bibfield  {author} {\bibinfo {author} {\bibfnamefont {J.}~\bibnamefont {Liu}}, \bibinfo {author} {\bibfnamefont {P.}~\bibnamefont {Cuff}}, \ and\ \bibinfo {author} {\bibfnamefont {S.}~\bibnamefont {Verdu}},\ }\bibfield  {title} {\emph {\bibinfo {title} {One-shot mutual covering lemma and marton’s inner bound with a common message},}\ }in\ \href {http://dx.doi.org/10.1109/isit.2015.7282697} {\emph {\bibinfo {booktitle} {2015 IEEE International Symposium on Information Theory (ISIT)}}}\ (\bibinfo  {publisher} {IEEE},\ \bibinfo {year} {2015})\ p.\ \bibinfo {pages} {1457–1461}\BibitemShut {NoStop}%
\bibitem [{\citenamefont {Dupuis}\ \emph {et~al.}(2010)\citenamefont {Dupuis}, \citenamefont {Hayden},\ and\ \citenamefont {Li}}]{Dupuis_2010_father_broadcast}%
  \BibitemOpen
  \bibfield  {author} {\bibinfo {author} {\bibfnamefont {F.}~\bibnamefont {Dupuis}}, \bibinfo {author} {\bibfnamefont {P.}~\bibnamefont {Hayden}}, \ and\ \bibinfo {author} {\bibfnamefont {K.}~\bibnamefont {Li}},\ }\bibfield  {title} {\emph {\bibinfo {title} {A father protocol for quantum broadcast channels},}\ }\href {http://dx.doi.org/10.1109/tit.2010.2046217} {\bibfield  {journal} {\bibinfo  {journal} {IEEE Trans. Inf. Theory}\ }\textbf {\bibinfo {volume} {56}},\ \bibinfo {pages} {2946–2956} (\bibinfo {year} {2010})}\BibitemShut {NoStop}%
\bibitem [{\citenamefont {Matsumoto}\ and\ \citenamefont {Hayashi}(2007)}]{Matsumoto_universal_entanglement}%
  \BibitemOpen
  \bibfield  {author} {\bibinfo {author} {\bibfnamefont {K.}~\bibnamefont {Matsumoto}}\ and\ \bibinfo {author} {\bibfnamefont {M.}~\bibnamefont {Hayashi}},\ }\bibfield  {title} {\emph {\bibinfo {title} {Universal distortion-free entanglement concentration},}\ }\href {http://dx.doi.org/10.1103/PhysRevA.75.062338} {\bibfield  {journal} {\bibinfo  {journal} {Phys. Rev. A}\ }\textbf {\bibinfo {volume} {75}},\ \bibinfo {pages} {062338} (\bibinfo {year} {2007})}\BibitemShut {NoStop}%
\bibitem [{\citenamefont {Watanabe}\ and\ \citenamefont {Takagi}(2026)}]{Watanabe_universal}%
  \BibitemOpen
  \bibfield  {author} {\bibinfo {author} {\bibfnamefont {K.}~\bibnamefont {Watanabe}}\ and\ \bibinfo {author} {\bibfnamefont {R.}~\bibnamefont {Takagi}},\ }\bibfield  {title} {\emph {\bibinfo {title} {Universal work extraction in quantum thermodynamics},}\ }\href {http://dx.doi.org/10.1038/s41467-026-69143-3} {\bibfield  {journal} {\bibinfo  {journal} {Nat. Commun.}\ }\textbf {\bibinfo {volume} {17}},\ \bibinfo {pages} {1857} (\bibinfo {year} {2026})}\BibitemShut {NoStop}%
\bibitem [{\citenamefont {Rizzo}\ and\ \citenamefont {Leone}(2026)}]{Rizzo_2026_universal_magic}%
  \BibitemOpen
  \bibfield  {author} {\bibinfo {author} {\bibfnamefont {J.}~\bibnamefont {Rizzo}}\ and\ \bibinfo {author} {\bibfnamefont {L.}~\bibnamefont {Leone}},\ }\href {https://arxiv.org/abs/2608.13376} {\emph {\bibinfo {title} {Universal magic state concentration},}\ } (\bibinfo {year} {2026}),\ \Eprint {http://arxiv.org/abs/2608.13376} {arXiv:2608.13376 [quant-ph]} \BibitemShut {NoStop}%
\bibitem [{\citenamefont {Lami}\ \emph {et~al.}(2026{\natexlab{b}})\citenamefont {Lami}, \citenamefont {Regula},\ and\ \citenamefont {Takagi}}]{Lami_2026_universal_quantum_resource_distillation}%
  \BibitemOpen
  \bibfield  {author} {\bibinfo {author} {\bibfnamefont {L.}~\bibnamefont {Lami}}, \bibinfo {author} {\bibfnamefont {B.}~\bibnamefont {Regula}}, \ and\ \bibinfo {author} {\bibfnamefont {R.}~\bibnamefont {Takagi}},\ }\href {https://arxiv.org/abs/2605.15174} {\emph {\bibinfo {title} {Universal quantum resource distillation via composite generalised quantum stein's lemma},}\ } (\bibinfo {year} {2026}{\natexlab{b}}),\ \Eprint {http://arxiv.org/abs/2605.15174} {arXiv:2605.15174 [quant-ph]} \BibitemShut {NoStop}%
\end{thebibliography}%
\end{document}